\pgfplotsset{compat=newest}
\newcommand{\algorithmfootnote}[2][\footnotesize]{%
  \let\old@algocf@finish\@algocf@finish%
  \def\@algocf@finish{\old@algocf@finish%
    \leavevmode\rlap{\begin{minipage}{\linewidth}
    {#1#2}
    \end{minipage}}%
  }%
}
\newtheorem{definition}{Definition}%
\newtheorem{corollary}{Corollary}%
\newtheorem{theorem}{Theorem}%
\newtheorem{lemma}{Lemma}%
\newtheorem{remark}{Remark}
\DeclareMathOperator{\rank}{rank}
\DeclareMathOperator{\diag}{diag}
\DeclareMathOperator{\wt}{wt}
\DeclareMathOperator{\supp}{supp}
\newcommand{\F}{\ensuremath{\mathbb{F}}}
\newcommand{\Fq}{\ensuremath{\mathbb{F}_q}}
\newcommand{\Fqm}{\ensuremath{\mathbb{F}_{q^m}}}
\newcommand{\ceil}[1]{\ensuremath{\left\lceil{#1}\right\rceil}}
\newcommand{\floor}[1]{\ensuremath{\left\lfloor{#1}\right\rfloor}}
\newcommand{\supbrac}[2]{\ensuremath{#1^{(#2)}}}
\newcommand{\myspan}[1]{\left\langle #1 \right\rangle}
\newcommand{\code}{\ensuremath{\mathcal{C}}}
\newcommand{\GRS}{\mathsf{GRS}}
\newcommand{\GRSp}{\GRS_{\va,\v}^d}
\newcommand{\GRSallp}{\mathbb{G}_{\va}^d}
\newcommand{\ALTallp}{\mathbb{A}_{\va}^d}
\newcommand{\kopt}{k_q^{\mathsf{opt}}}
\newcommand{\dGoppa}{d_{\mathsf{Goppa}}}
\newcommand{\dcol}{d_\mathsf{col}}
\newcommand{\tmax}{\ensuremath{t_{\max}}}
\newcommand{\Pmisc}{P_{\mathsf{misc}}}
\newcommand{\Psuc}{P_{\mathsf{suc}}}
\newcommand{\Pfail}{P_{\mathsf{fail}}}
\newcommand{\tikznode}[2]{%
	\ifmmode%
	\tikz[remember picture,baseline=(#1.base),inner sep=0pt] \node (#1) {$#2$};
	\else
	\tikz[remember picture,baseline=(#1.base),inner sep=0pt] \node (#1) {#2};%
	\fi}
\tikzset{%
	mybox_block/.style={rectangle,rounded corners,draw=black, thick,text width=1em,minimum height=2em,minimum width=4.75em,text centered},
	[highlight/.style={rectangle,rounded corners,fill=#1!15,draw,fill opacity=0.5,thick,inner sep=0pt},
	highlight/.default=gray],
	plot1/.style = {thick,
		dotted,
		mark=+}
}
\def\ve#1{{\mathchoice{\mbox{\boldmath$\displaystyle #1$}}%
              {\mbox{\boldmath$\textstyle #1$}}%
              {\mbox{\boldmath$\scriptstyle #1$}}%
              {\mbox{\boldmath$\scriptscriptstyle #1$}}}}
\newcommand{\0}{\ensuremath{\ve{0}}}
\newcommand{\1}{\ensuremath{\ve{1}}}
\newcommand{\C}{\ensuremath{\ve{C}}}
\newcommand{\D}{\ensuremath{\ve{D}}}
\newcommand{\E}{\ensuremath{\ve{E}}}
\newcommand{\Fbold}{\ensuremath{\ve{F}}}
\renewcommand{\H}{\ensuremath{\ve{H}}}
\newcommand{\R}{\ensuremath{\ve{R}}}
\renewcommand{\S}{\ensuremath{\ve{S}}}
\newcommand{\T}{\ensuremath{\ve{T}}}
\newcommand{\V}{\ensuremath{\ve{V}}}
\renewcommand{\a}{\ve{a}}
\renewcommand{\c}{\ve{c}}
\newcommand{\e}{\ve{e}}
\renewcommand{\u}{\ve{u}}
\renewcommand{\v}{\ve{v}}
\newcommand{\va}{\ve{\alpha}}
\newcommand{\vLambda}{\ve{\Lambda}}
\newcommand{\cA}{\mathcal{A}}
\newcommand{\cE}{\mathcal{E}}
\newcommand{\cI}{\mathcal{I}}
\newcommand{\cL}{\mathcal{L}}
\newcommand{\cK}{\mathcal{K}}
\newcommand{\cV}{\mathcal{V}}
\newcommand{\cZ}{\mathcal{Z}}
\newcommand{\cM}{\mathcal{M}}
\newcommand{\cS}{\mathcal{S}}
\newcommand{\cU}{\mathcal{U}}
\newcommand{\sfJ}{\mathsf{J}}
\newcommand{\EB}[3]{\mathbb{E}_{#1}^{(#2,#3)}}
\newcommand{\Ebbad}{\mathbb{E}_{w\text{-}\mathsf{bad}}}
\newcommand{\labelRS}{\mathsf{L.RS}}
\newcommand{\labelMain}{\mathsf{L.A}}
\newcommand{\labelSingleton}{\mathsf{L.A1}}
\newcommand{\labelLz}{\mathsf{L.A2}}
\newcommand{\labelLarge}{\mathsf{L.T}}
\newcommand{\labelMisc}{\mathsf{M}}
\newcommand{\labelLower}{\mathsf{U}}
\newcommand{\labelSim}{\mathsf{SIM}}
\begin{document}

\title{Decoding of Interleaved Alternant Codes}
\author{\IEEEauthorblockN{Lukas Holzbaur, \IEEEmembership{Student Member, IEEE}, Hedongliang Liu,~\IEEEmembership{Student Member, IEEE}, \\Alessandro Neri, Sven Puchinger, \IEEEmembership{Member, IEEE}, Johan Rosenkilde, \\Vladimir Sidorenko,~\IEEEmembership{Member, IEEE}, Antonia Wachter-Zeh,~\IEEEmembership{Senior Member, IEEE}} %
  \thanks{Parts of this paper have been presented at the \emph{2020 IEEE Information Theory Workshop (ITW)} \cite{holzbaur202sucessProbability}.\newline
    The work of L.~Holzbaur and A.~Wachter-Zeh has been supported by the German Research Foundation (Deutsche Forschungsgemeinschaft, DFG) under grant no.~WA3907/1-1 and by the European Research Council (ERC) under the European Union’s Horizon 2020 research and innovation programme (grant agreement no.~801434).
  	The work of H.~Liu has been supported by the German  Research  Foundation (DFG) with a German Israeli Project Cooperation (DIP) under grants no.~PE2398/1-1, KR3517/9-1.
  	S.~Puchinger has received funding from the European Union’s Horizon 2020 research and innovation programme under the Marie Skłodowska-Curie grant agreement no.~713683. 
  	The work of A.~Neri has been supported by the Swiss National Science Foundation under the Early Postdoc.Mobility grant no.~187711.  \newline 
    L.~Holzbaur, H.~Liu, S.~Puchinger, V.~Sidorenko, and A.~Wachter-Zeh are with the Institute for Communications Engineering, Technical University of Munich (TUM), Germany. A.~Neri is with the Max Planck Institute for Mathematics in the Sciences, Leipzig. J.~Rosenkilde is with GitHub, Inc.\newline %
    Emails: lukas.holzbaur@tum.de, lia.liu@tum.de, alessandro.neri@mis.mpg.de, svepu@dtu.dk, jsrn@jsrn.dk, vladimir.sidorenko@tum.de, antonia.wachter-zeh@tum.de}}

\maketitle

\begin{abstract}
  Interleaved Reed--Solomon codes admit efficient decoding algorithms which correct burst errors far beyond half the minimum distance in the random errors regime, e.g.,~by computing a common solution to the Key Equation for each Reed--Solomon code, as described by Schmidt et~al.
  If this decoder does not succeed, it may either \emph{fail} to return a codeword or \emph{miscorrect} to an incorrect codeword, and good upper bounds on the fraction of error matrices for which these events occur are known.

  The decoding algorithm immediately applies to interleaved alternant codes as well, i.e.,~the subfield subcodes of interleaved Reed--Solomon codes, but the fraction of decodable error matrices differs, since the error is now restricted to a subfield.
  In this paper, we present new general lower and upper bounds on the fraction of error matrices decodable by Schmidt et al.'s decoding algorithm, thereby making it the only decoding algorithm for interleaved alternant codes for which such bounds are known. %

\end{abstract}

\begin{IEEEkeywords}
Interleaved Codes, Alternant Codes, BCH Codes, Goppa Codes, Collaborative Decoding, Success Probability
\end{IEEEkeywords}

\section{Introduction}
\label{sec:introduction}

A codeword of an $\ell$-interleaved code can be seen as $\ell$ codewords of possibly different codes stacked above each other, i.e., an $\ell$-interleaved code is the direct sum of $\ell$ codes of the same length $n$, and its codewords may be represented as $\ell \times n$ matrices over the base field $\F$.
A common error model for these codes are \emph{burst errors}, where we assume an error corrupts an entire column, and as distance metric we count the number of non-zero columns of such an $\ell \times n$ matrix.
When considering the words as vectors in $\mathbb{L}^n$, where $\mathbb{L}$ is an extension of $\F$ of degree $\ell$, this is corresponds to the Hamming distance.
In this work, we analyse the fraction of decodable error patterns for a given error weight under this metric. If the error is uniformly distributed over the set of all such error matrices, this is equivalent to the probability of successful decoding.

To decode an interleaved code, we may simply decode each constituent codeword, i.e.~consider each row of the $\ell \times n$ matrix independently.
However, for a variety of algebraic interleaved codes, it is possible to correct a larger fraction of errors by adopting a collaborative approach.
For this reason, interleaved codes have many applications in which burst errors occur naturally or artificially, for instance
replicated file disagreement location \cite{metzner1990general},
correcting burst errors in data-storage applications \cite{krachkovsky1997decoding,holzbaur2019error},
outer codes in concatenated codes \cite{metzner1990general,krachkovsky1998decoding,haslach1999decoding,justesen2004decoding,schmidt2005interleaved,schmidt2009collaborative},
ALOHA-like random-access schemes \cite{haslach1999decoding},
decoding non-interleaved codes beyond half-the-minimum distance by power decoding \cite{schmidt2010syndrome,kampf2014bounds,rosenkilde2018power,puchinger2019improved},
and code-based cryptography \cite{elleuch2018interleaved,holzbaur2019decoding}.

Generalized Reed--Solomon (GRS) codes are among the most-studied classes of constituent codes for interleaved codes.
There are several decoders for interleaved GRS codes \cite{krachkovsky1997decoding,bleichenbacher2003decoding,brown2004probabilistic,schmidt2009collaborative,nielsen2013generalised,yu2018simultaneous} that decode up to $\tfrac{\ell}{\ell+1}(n-\bar{k})$ errors, where $\ell$ is the interleaving order (number of constituent codes) and $\bar{k}$ is the mean dimension of the constituent codes.
All of these decoders fail for \emph{some} error patterns of weight larger than the unique decoding radius of the constituent code of lowest distance (which is also the distance of the interleaved code).
For errors of a given weight, the fraction of errors leading to a unsuccessful decoding is roughly $q^{-m}$ at the maximal decoding radius (where $q^m$ is the field size of the GRS code), and decreases exponentially in the difference of the maximal decoding radius and the actual error weight.

There are also various other decoding algorithms for interleaved GRS codes that decode beyond the radius $\tfrac{\ell}{\ell+1}(n-\bar{k})$, and even beyond the Johnson radius: \cite{coppersmith2003reconstructing,parvaresh2004multivariate,parvaresh2007algebraic,schmidt2007enhancing,cohn2013approximate,wachterzeh2014decoding,puchinger2017irs}.
For some of these decoders, simulation results suggest that these decoders can successfully decode a large fraction of error matrices of weight up to the claimed maximal radius, and in some very special cases, it is possible to derive bounds on this fraction.
However, in general, only little is known about the fraction of decodable errors for these decoders, which are therefore not considered in this work. Other code classes that have been considered as constituent codes of interleaved codes are one-point Hermitian codes \cite{kampf2014bounds,puchinger2019improved} and, more generally, algebraic-geometry codes~\cite{brown2005improved}.

For interleaved decoders of high order, i.e., where $\ell$ is larger than the weight of the error, a simple linear-algebraic decoder was proposed in \cite{metzner1990general}.
Unlike all decoders mentioned above, this decoder works with interleaved codes obtained from an arbitrary linear constituent code and guarantees to correct any error of weight up to $d-2$ that has full rank, where $d$ is the minimum distance of the constituent code.
It was rediscovered in \cite{haslach1999decoding} and generalized in \cite{haslach2000efficient,roth2014coding}.

An \emph{alternant code} is a subfield subcode of a GRS code: the set of codewords whose entries are all contained in a fixed subfield of the base field of the GRS code. This code family contains some of the best-known and most-often used algebraic codes over small fields, including the Bose–Ray-Chaudhuri–Hocquenghem (BCH) and Goppa codes.
In principle, alternant codes can be used as constituent codes in any of the above mentioned applications of interleaved codes.
We see several concrete reasons to specifically consider alternant codes: %
\begin{itemize}
\item Alternant codes (especially BCH codes) are some of the most-often used algebraic codes in practice, including for storage and communications. Any system that already uses these codes and is prone to burst errors may be retroactively upgraded to enable a larger error-correction capability.
For instance, in NOR and NAND flash memory, Hamming and BCH codes are considered as the standard error correction approach (cf. \cite{ECCFlash_Book,LowPowerBCHNAND,VLSI-BCH-NAND}).
Traditionally, Hamming codes are used in single-level flash memories to correct single errors as they have a simple decoding algorithm and use only a small circuit area. For multi-level flash memories however, single-error correction is not sufficient and BCH codes with larger distance are employed. In \cite{StrongBCH-NAND_SIPS2006}, the scenario of more than four levels (i.e., storing more than two bits per flash memory cell) was investigated and it was shown that BCH codes of larger correction capability are needed.
To address the fact that errors in flash memories might occur over whole bit or word lines, in \cite{FlashProductBCH} product codes with BCH codes were used. This motivates the use of \emph{interleaved} alternant and in particular interleaved BCH codes.

\item In applications where the cost of \emph{encoding} is dominant (e.g., in storage systems where writing occurs more often than reading an erroneous codeword), encoding in a subfield reduces the complexity. Hence, it might be advantageous to use alternant codes instead of GRS codes in some of the above mentioned applications of interleaved codes. Note that \emph{decoding} is usually done in the field of the corresponding GRS code, so the reduction in complexity is less significant.

\item In some applications, such as code-based cryptography, GRS and algebraic-geometry codes cannot be used due to their vast structure, which can be turned into structural attacks on the cryptosystem.
However, their subfield subcodes are in many cases unbroken (cf.~\cite[Conclusion]{couvreur2017cryptanalysis} and \cite[Section~7.5.3]{couvreur2020algebraic}).
In particular, the codes proposed in McEliece's original paper \cite{McE78}, binary Goppa codes, have withstood efficient attacks for more than $40$ years.
In a McEliece-type system, the ciphertext is the sum of a codeword of a public code and a randomly chosen ``error'' which hides the codeword from the attacker.
If we encrypt multiple codewords in parallel, we may consider them as an interleaved code and align the errors in bursts of larger weight.
This approach has the potential to increase the designed security parameter, or in turn reduce the key size, and was first studied in \cite{elleuch2018interleaved,holzbaur2019decoding}.
This comes at the cost of a (hopefully very small) probability of unsuccessful decryption/decoding, which corresponds to the probability of unsuccessful decoding of the interleaved decoder.
\end{itemize}

Interleaved alternant codes can be decoded by the decoders of interleaved GRS codes. However, the set of all errors of a given weight differs for interleaved alternant codes, as it only contains matrices over the subfield corresponding to the alternant code, not the field of the GRS code. Therefore, the bounds on the fraction of decodable error matrices for the decoding of interleaved GRS codes do not apply to interleaved alternant codes. Aside from a theoretical interest, it is crucial for all of the above mentioned applications to estimate this fraction, or, equivalently, the probability of successful decoding for errors drawn uniformly at random from this set.

In this paper, we derive lower bounds on probability of success for decoding interleaved alternant codes with the decoder from \cite{FengTzeng1991ColaDec,schmidt2009collaborative} for uniformly distributed errors of a given weight. Further, for comparison, we also derive upper bounds on the probability of successful decoding.
To the best of our knowledge, this is the first work that studies the success probability of decoding interleaved alternant codes for general parameters.

\subsection{Overview \& Main Results}

The remainder of the paper is organized as follows: In~\cref{sec:preliminaries} we define the notation used throughout the paper. We shortly recap the syndrome based interleaved decoder from \cite{FengTzeng1991ColaDec,schmidt2009collaborative} and formally define the event of a decoding failure and a miscorrection.
We derive a necessary and sufficient condition for the decoder to succeed, which simplifies the subsequent analyses.
\cref{sec:TechnicalPreliminaries} establishes some technical preliminary results which are then used in~\cref{sec:upperBounds,sec:lowerBound} for the derivation of our main results:
\begin{itemize}
\item \cref{thm:failureProbNewBoundGeneral} provides a framework for lower bounding the probability of decoding success for interleaved alternant codes with the decoder of \cite{FengTzeng1991ColaDec,schmidt2009collaborative}, by relating it to properties of the set of all alternant codes obtained from the generalization of specific RS codes.
  Based on this framework, \cref{thm:failureProbNewBound} presents a lower bound on the probability of successful decoding by applying the technical results established in \cref{sec:TechnicalPreliminaries}.
\item \cref{thm:LargeEll} gives an alternative lower bound based on the ideas from \cite{metzner1990general,roth2014coding}, which improves upon the bound of \cref{thm:failureProbNewBound} for some parameters. In particular, for large interleaving order $\ell$, this bound provides non-trivial results even when the number of errors is close to the maximum decoding radius of the corresponding GRS code.
\item \cref{thm:lowerBound} gives an upper bound on the probability of success for decoding interleaved alternant codes with the considered decoder. This result allows us to evaluate the performance of the lower bounds presented in \cref{thm:failureProbNewBound,thm:LargeEll}.
\end{itemize}
In~\cref{sec:numericalResults} we present numerical evaluations of our bounds for different code parameters and discuss their implications.
Finally, we conclude the paper and discuss some open problems in~\cref{sec:conclusion}.%

\section{Preliminaries}
\label{sec:preliminaries}

\subsection{Notation}
We denote by $[a,b]$ the set of integers $\{i \mid a\leq i \leq b\}$ and if $a=1$, we omit it from our notation and write $[b]$.

A finite field of size $q$ is denoted by $\Fq$ and $\Fq^{\star}\coloneqq \Fq\setminus\{0\}$. Vectors are denoted by bold lower-case letters and matrices by bold capital letters.

Given a vector $\a\in \Fq^b$, we denote by $\diag(\a)$ the diagonal matrix with entries of $\a$ in the main diagonal. For a set of integers $\cL \subseteq [b]$, we denote by $\a|_{\cL}$ the restriction of $\a$ to the entries indexed by $\cL$. Denote by $\wt(\a)$ the Hamming weight of the vector $\a$ and by $|\a|$ the length of $\a$.

Given a matrix $\E\in \Fq^{a\times b}$, we denote by $\E|_{\cL}$ the restriction of $\E$ to the columns indexed by $\cL$, by $\E_{i,:}$ its $i$-th row, and by $\supp(\E)$ the set of indices of the non-zero columns of $\E$.
Denote by $\EB{q}{a}{b}$ the set of matrices $\E \in \Fq^{a\times b}$ with at least one non-zero element in each column.

We write $[n,k,d]_q$ to denote a linear code $\code \in \Fq^n$ of dimension $k$ and minimum distance at least $d$.
The cardinality of a set (code) $\cS=\{s_1,s_2,\ldots \}$ is denoted by $|\cS|$. For a multiset $\cS = \{\{s_1,\ldots, s_1,s_2,\ldots, s_2, \ldots\}\}$ we denote by $\delta^{s_i}_\cS$ the multiplicity of $s_i$ in $\cS$. For a linear subspace $\cV \subseteq \Fq^n$ we denote its dimension by $\dim_q(\cV)$.
For a random variable $X$ with uniform probability distribution over a set $\cS$, we write $X \sim \cS$.

\subsection{Generalized Reed-Solomon Codes and their Subfield Subcodes}

We begin by formally defining the class of generalized RS codes.
\begin{definition}[Generalized Reed-Solomon Codes]\label{def:GRScodes}
  For positive integers $d$ and $n$, let $\va \in (\Fqm^\star)^n$ be a vector of distinct \emph{code locators} and $\v \in (\Fqm^\star)^n$ be a vector of \emph{column multipliers}.
  We define a \emph{generalized Reed-Solomon} (GRS) code $\GRSp$ as
    \begin{align*}
      \GRSp &= \{\c \in\Fqm^n \mid \H\cdot \diag(\v)\cdot \c=\0\}\ ,
    \end{align*}
    with
  \begin{align*}
    \H =
    \begin{pmatrix}
      1 & 1 & \dots & 1\\
      \alpha_1 & \alpha_2 & \dots & \alpha_n \\
      \vdots & \vdots &  & \vdots\\
      \alpha_1^{d-2} & \alpha_2^{d-2} & \dots & \alpha_n^{d-2}
    \end{pmatrix} \ \in \ \Fqm^{(d-1)\times n}\ .
  \end{align*}
  Denote by $\mathbb{G}_{\va}^d$ the multi-set
  \begin{equation*}
    \mathbb{G}_{\va}^d = \{\{ \GRSp \ | \ \v \in (\Fqm^\star)^n \}\} \ .
  \end{equation*}
\end{definition}

Note that the most general definitions of GRS codes allow for the $\alpha_i = 0$ to be element of $\va$, but for consistency with \cite{schmidt2009collaborative} and as this complicates the decoding process, we restrict ourselves to $\alpha_i \neq 0$ here.
GRS codes are well-known to be so-called \emph{Maximum Distance Separable} (MDS) codes, i.e., they achieve $d=n-k+1$, where $k$ is the dimension of the code.

The weight enumerator $A_{w}^\code$, i.e., the number of codewords of Hamming weight $w$ in a code $\code$, is completely determined by the code parameters length and distance/dimension if $\code$ is MDS.
\begin{theorem}[MDS Code Weight Enumerator{\cite[Ch.~11, Theorem~6]{macwilliams1977theory}}]\label{thm:MDSWeightEnumerator}
  Let $\code$ be an $[n,k,d]_{q^m}$ MDS code. The $w$-th weight enumerator $A_w^{\mathsf{MDS}}$ of $\code$ is $A_0^{\mathsf{MDS}}=1$ and
  \begin{align*}
    A_w^{\mathsf{MDS}} &\coloneqq |\{ \c \ | \ \wt(\c) = w, \c \in \code \} | \\
    &= \binom{n}{w} \sum_{j=0}^{w-d} (-1)^j \binom{w}{j} (q^{m(w-d+1-j)}-1) , \ \ w\neq 0 \ .
  \end{align*}
\end{theorem}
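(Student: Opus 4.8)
The plan is to count codewords by their support and then pass from the number of codewords \emph{supported inside} a fixed coordinate set to the number with a \emph{prescribed} support by inclusion--exclusion on the Boolean lattice. Fix $\cS \subseteq [n]$ with $|\cS| = w$, and let $f(\cS)$ be the number of $\c \in \code$ with $\supp(\c) \subseteq \cS$ and $g(\cS)$ the number with $\supp(\c) = \cS$. Then $f(\cS) = \sum_{\mathcal{T} \subseteq \cS} g(\mathcal{T})$, so Möbius inversion gives $g(\cS) = \sum_{\mathcal{T} \subseteq \cS} (-1)^{|\cS|-|\mathcal{T}|} f(\mathcal{T})$. Both counts are invariant under coordinate permutations, hence depend only on cardinality; writing $\nu_i \coloneqq f(\mathcal{T})$ for any $|\mathcal{T}| = i$ and summing over the $\binom{n}{w}$ choices of support, I obtain
\[
  A_w^{\mathsf{MDS}} = \binom{n}{w}\sum_{i=0}^{w}(-1)^{w-i}\binom{w}{i}\,\nu_i .
\]

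The key step is to evaluate $\nu_i$ using the MDS property. A codeword is supported inside a set of size $i$ if and only if it vanishes on the complementary $n-i$ coordinates. Fixing a generator matrix $\ve{G}$, the MDS condition is equivalent to every set of at most $k$ columns of $\ve{G}$ being linearly independent; hence for $n-i \le k$ the submatrix of $\ve{G}$ on the complementary $n-i$ columns has full rank $n-i$, and the number of messages whose codeword vanishes there is $q^{m(k-(n-i))} = q^{m(i-d+1)}$, using the MDS relation $k = n-d+1$. For $i < d-1$ no nonzero codeword can have weight at most $i<d$, so $\nu_i = 1$, which matches $q^{m(i-d+1)}$ at the boundary $i = d-1$. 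Thus $\nu_i = q^{m(i-d+1)}$ for $i \ge d-1$ and $\nu_i = 1$ otherwise.

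It then remains to simplify. Reindexing by $j = w-i$ gives $A_w^{\mathsf{MDS}} = \binom{n}{w}\sum_{j=0}^{w}(-1)^{j}\binom{w}{j}\,\nu_{w-j}$. Since $\sum_{j=0}^{w}(-1)^j\binom{w}{j} = (1-1)^w = 0$ for $w \ge 1$, I may replace $\nu_{w-j}$ by $\nu_{w-j}-1$ without altering the sum: for $j \le w-d$ one has $w-j \ge d$ and $\nu_{w-j}-1 = q^{m(w-d+1-j)}-1$, while for $j \ge w-d+1$ one has $w-j \le d-1$, so $\nu_{w-j}=1$ and the summand vanishes (including the boundary $j=w-d+1$, where $q^{m\cdot 0}-1=0$). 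The surviving terms are exactly $j \in [0,\,w-d]$, which reproduces the claimed expression. The main obstacle is not any single calculation but the careful bookkeeping at the boundary $i = d-1$, where the two regimes for $\nu_i$ must glue consistently so that the telescoping via the binomial identity truncates the sum precisely at $w-d$; the rank computation itself rests squarely on the standard MDS characterization that every $\le k$ columns of $\ve{G}$ are independent.
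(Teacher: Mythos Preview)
The paper does not supply its own proof of this theorem; it is quoted from MacWilliams--Sloane \cite[Ch.~11, Theorem~6]{macwilliams1977theory} as a standard result. Your argument is correct and is essentially the classical one found there: compute $\nu_i = |\{\c \in \code : \supp(\c) \subseteq \mathcal{T}\}|$ for $|\mathcal{T}|=i$ using the MDS property, then Möbius-invert on the Boolean lattice of supports and simplify via the alternating binomial identity.

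One small imprecision worth fixing: you justify that $f(\mathcal{T})$ and $g(\cS)$ depend only on cardinality by saying they are ``invariant under coordinate permutations,'' but an arbitrary MDS code need not be permutation-invariant as a code. The correct reason---which you in fact establish in the very next paragraph---is purely rank-theoretic: for any choice of $n-i \le k$ columns of a generator matrix, the MDS property forces full rank $n-i$, so $\nu_i = q^{m(k-(n-i))}$ independently of \emph{which} columns are chosen; then $g(\cS)$ depends only on $|\cS|$ automatically from the inversion formula. With that phrasing adjusted, the proof is complete and coincides with the textbook argument.
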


By design, GRS codes must be defined over fields $\Fqm$ with $q^m-1\geq n$ (or $q^m\geq n$ if $\alpha_i=0$ is allowed as a code locator). In many applications it is desirable to work with codes of smaller field size, which can be obtained, e.g., by taking subcodes of codes defined over larger fields.

\begin{definition}[Subfield Subcode]\label{def:subfieldSubcode}
  Let $\code$ be an $[n,k,d]_{q^m}$ code. We define the $\F_q$-subfield subcode of $\code$ as
  \begin{equation*}
    \code \cap \Fq^n = \{\c \ | \ \c \in \code, c_i \in \Fq \ \forall \ i \in [n]\} \ .
  \end{equation*}
  Equivalently, let $\H \in \F_{q^m}^{(n-k) \times n}$ be a parity check matrix of $\code$. Then $\code \cap \Fq^n$ is given by the $\F_q$ kernel of $\H$, i.e.,
  \begin{align*}
     \code \cap \Fq^n = \{\c \ | \ \H \cdot \c = \0 , \c \in \F_q^n\} \ .
  \end{align*}
\end{definition}

In this work we consider codes from the class of subfield subcodes of GRS codes.

\begin{definition}[Alternant Code~{\cite[Ch.~12.2]{macwilliams1977theory}}]\label{def:AlternantCode}
  The subfield subcode of a GRS code is referred to as an \textbf{alternant code}. For a fixed set of code locators $\va$ as in \cref{def:GRScodes} and \emph{designed} distance $d$, we define the multi-set of alternant codes as
  \begin{equation*}
    \mathbb{A}_{\va}^d = \{\{ \code \cap \Fq^n \ | \ \code \in \mathbb{G}_{\va}^d \}\} \ .
  \end{equation*}
\end{definition}
We define $\mathbb{A}_{\va}^d$ as a multiset, as the multiplicities will be important in the following. One further advantage is that for a given code length $n=|\va|$ we know its cardinality to be
\begin{align}
  |\mathbb{A}_{\va}^d| = (q^m-1)^{n} \ . \label{eq:cardinalityAall}
\end{align}

 We give some general well-known bounds on the dimension of the $\Fq$-subcode of an $\Fqm$-linear code $\code$ in terms of the parameters of~$\code$.

\begin{lemma}\label{lem:AlternantDimBounds}
  Let $\code$ be an $[n,k,d]_{q^m}$ code. Then
\begin{equation*}
    \max\{n-m(n-k),0\} \leq \dim_q(\code \cap \Fq^n) \leq \min\{k,k_q^{\mathsf{opt.}}(n,d)\} \ ,
\end{equation*}
 where $k_q^{\mathsf{opt.}}(n,d)$ is an upper bound on the dimension of a $q$-ary linear code of length $n$ and minimum distance $d$.
\end{lemma}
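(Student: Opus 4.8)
The plan is to establish the two bounds separately, using the parity-check characterization of the subfield subcode given in \cref{def:subfieldSubcode}.

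For the lower bound, I would fix a parity-check matrix $\H \in \Fqm^{(n-k)\times n}$ of $\code$ together with an $\Fq$-basis of $\Fqm$. Expanding each entry of $\H$ in this basis turns every row of the constraint $\H\c=\0$, which is a single $\Fqm$-linear equation, into $m$ $\Fq$-linear equations on $\c \in \Fq^n$. Hence $\code\cap\Fq^n$ is exactly the solution space over $\Fq$ of a homogeneous linear system in $n$ variables described by at most $m(n-k)$ equations, so its $\Fq$-dimension is at least $n - m(n-k)$. Since a dimension is non-negative, this yields $\dim_q(\code\cap\Fq^n) \geq \max\{n-m(n-k),0\}$.

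For the upper bound I would treat the two terms of the minimum independently. First, to show $\dim_q(\code\cap\Fq^n) \leq k$, I would take any family of codewords of $\code\cap\Fq^n$ that is linearly independent over $\Fq$. The key technical step is that, since these vectors have all entries in $\Fq$, their $\Fq$-independence is equivalent to their $\Fqm$-independence: the rank of a matrix with entries in $\Fq$ is unchanged under extension to $\Fqm$, as it is governed by the vanishing of minors, which are elements of $\Fq$. As these vectors also lie in $\code$, whose $\Fqm$-dimension is $k$, their number is at most $k$. Second, to show $\dim_q(\code\cap\Fq^n) \leq k_q^{\mathsf{opt.}}(n,d)$, I observe that $\code\cap\Fq^n$ is an $\Fq$-linear code of length $n$ whose nonzero codewords are nonzero codewords of $\code$ and therefore have Hamming weight at least $d$; thus its minimum distance is at least $d$, and by the defining property of $k_q^{\mathsf{opt.}}(n,d)$ (together with its monotonicity in $d$) its dimension cannot exceed $k_q^{\mathsf{opt.}}(n,d)$.

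The result is essentially standard, so I do not anticipate a substantial obstacle; the only point requiring genuine care is the field-extension invariance of linear independence invoked in the bound $\dim_q(\code\cap\Fq^n)\leq k$, while the remaining arguments are routine dimension counts over $\Fq$ and $\Fqm$.
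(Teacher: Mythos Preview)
Your proposal is correct and follows essentially the same approach as the paper: the lower bound via expanding the parity-check matrix over an $\Fq$-basis to obtain at most $m(n-k)$ constraints, and the upper bound via the distance inheritance together with the preservation of linear independence under field extension. Your justification of the latter through the invariance of matrix rank (via minors) is slightly more explicit than the paper's one-line remark, but the argument is the same.
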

\begin{proof}
  The lower bound of $0$ is trivial. The lower bound of $n-m(n-k)$ follows from expanding the $n-k$ rows of any parity-check matrix of $\code$ over some basis of $\Fqm$ over $\Fq$. The resulting $m(n-k) \times n$ matrix is a parity check matrix of the $\F_q$-subcode of $\code \cap \Fq^n$ and the bound follows.

  The upper bound of $k_q^{\mathsf{opt.}}(n,d)$ follows from the fact that the distance of the code $\code \cap \Fq^n$ is at least that of $\code$. Finally,  if $\ell$ elements in $\Fq^n$  are $\Fq$-linearly indpendent, then they are also $\Fqm$-linearly independent for every extension field $\Fqm$ of $\Fq$. Therefore, $\dim_q(\code\cap \Fq^n)\leq k$.
\end{proof}

\begin{remark}[Dimension vs. Distance of binary BCH and Wild Goppa Codes] \label{rem:BCHgoppaCodes}
 Wild Goppa codes~\cite{WildGC76,wild1988Wirtz}, which include \emph{binary square-free Goppa codes}~\cite{GVD70,GVD71,Ber73}, are a subclass of Goppa Codes. Along with BCH codes~\cite{BCH1959H,BCH1960BC}, Goppa codes are the best known class of alternant codes, due to their good distance properties.
 Consider the binary BCH and $q$-ary wild Goppa codes that are subfield subcodes of a GRS code in~$\mathbb{G}^d_{\va}$ for some $\va$ and $d$.

 For binary BCH codes, it is well-known (cf.~\cite[Ch.~7]{macwilliams1977theory}) that their dimension is $k_\mathsf{BCH} \geq n-m\frac{n-k}{2}$, for length $n\coloneqq |\va|$ and dimension $k\coloneqq n-d+1$ of the corresponding GRS code. Therefore, the dimension of binary BCH codes exceeds the generic lower bound of \cref{lem:AlternantDimBounds}.

 Wild ($q$-ary) Goppa codes on the other hand are often considered as alternant codes of $\mathbb{A}_{\va}^{d}$, but with an increased minimum distance $\dGoppa \approx \frac{q}{q-1} d$. However, the bounds presented in this paper depend only on the properties of the corresponding GRS and, in particular, its distance $d$, but not on the \emph{actual} dimension or distance of the considered alternant code itself. Therefore, instead of viewing wild Goppa codes as alternant codes in $\mathbb{A}^{d}_{\va}$ with increased distance, it is convenient to view them as alternant codes of $\mathbb{A}_{\va}^{\dGoppa}$ with a larger \emph{dimension} than guaranteed by the lower bound in \cref{lem:AlternantDimBounds}. This is possible as the improvements of wild Goppa codes compared to alternant codes in general can be shown by proving an equivalence between the Goppa codes obtained from different Goppa polynomials (cf.~\cite{WildGC76}, \cite[Theorem 4.1]{WildBern2011}), which directly implies that $\code_{\mathsf{Goppa}} \in \ALTallp \cap \mathbb{A}^{\dGoppa}_{\va}$ for $\dGoppa >d$.
 Clearly, the ``good'' distance follows immediately from the code being in $\mathbb{A}_{\va}^{\dGoppa}$, while the dimension can be shown to be large by applying the lower bound of \cref{lem:AlternantDimBounds} corresponding to $\ALTallp$.

\end{remark}
For GRS codes it is known \cite{Delsep1975} that for a fixed set of code locators $\va$, it holds that $\GRS_{\va,\v}^d = \GRS_{\va, \u}^d$ if and only if $\v$ is an $\Fqm$-multiple of $\u$, i.e., any code $\code \in \GRSallp$ occurs with multiplicity exactly $\delta^{\code}_{\GRSallp} = q^m-1$ in $\GRSallp$. This gives a lower bound on the multiplicity of alternant codes as
\begin{align}
  \delta^{\cA}_{\ALTallp} \geq q^m-1 \ \forall \ \cA \in \ALTallp  \ . \label{eq:multiplicityAlternant}
\end{align}

\subsection{Decoding of Interleaved Alternant Codes}\label{sec:decoding_details}

We formally introduce the concept of interleaved codes and, for completeness, briefly recap the decoding algorithm of \cite{FengTzeng1991ColaDec,schmidt2009collaborative}.

\begin{definition}[Interleaved Codes]
  The $\ell$-interleaved code $\cI\code^{(\ell)}$ with constituent
  code $\code$ is defined as
  \begin{align*}
    \cI\code^{(\ell)}\coloneqq \left\{
    \begin{pmatrix}
      \c^{(1)}\\
      \vdots\\
      \c^{(\ell)}
    \end{pmatrix}\ | \ \c^{(i)}\in \code,\ i\in[\ell]
\right\}\ .
  \end{align*}
  The parameter $\ell$ is referred to as the \emph{interleaving order} of the interleaved code.
\end{definition}

Let $\cI\code^{(\ell)}$ be an $\ell$-interleaved alternant code with $\code\in\mathbb{A}^d_{\va}$ and $\H$ be the parity-check matrix of the corresponding $\GRSp \in \GRSallp $ of $\code$.
Consider a channel where burst errors of column weight $t$ occur. We transmit a codeword $\C \in \cI\code^{(\ell)}$ of an $\ell$-interleaved alternant code. The received word is given by
\[
  \R = \C + \widetilde{\E} \in \Fq^{\ell \times n} \ ,
\]
where each row of $\C \in \Fq^{\ell \times n}$ is a codeword of $\code$ and $\widetilde{\E} \in \Fq^{\ell \times n}$ has exactly $t$ non-zero columns. Since $\code\subset \GRS$, the received word $\R$ can be decoded by a \emph{syndrome-based collaborative decoding algorithm} for interleaved alternant codes. Such algorithms, to name a few, can be found in~\cite{FengTzeng1991ColaDec} for BCH codes and \cite{krachkovsky1997decoding,bleichenbacher2003decoding,schmidt2009collaborative} for interleaved RS code.
We briefly recapitulate the decoding method below and summarize a naive version of~\cite[Algorithm 2]{schmidt2009collaborative} in~\cref{algo:SyndromeDecoder}.

From the received matrix $\R$, we are able to calculate the syndromes of each row of $\R$ by:
\begin{align}\label{eq:Syndromes}
  \begin{pmatrix}
    \S_{1,:}\\
    \S_{2,:}\\
    \vdots\\
    \S_{\ell,:}
  \end{pmatrix}
  \ =\ \R \cdot \H^\top =  \widetilde{\E} \cdot \H^\top \ ,%
\end{align}
where $\S_{i,:}=(S_{i,1},\dots,S_{i,d-1})\in\Fqm^{d-1},$ for each  $i\in[\ell]$. %

Define the \emph{error locator polynomial} by\footnote{Since $\alpha_i \neq 0$ by \cref{def:GRScodes}, the error locator polynomial is well-defined.}
\begin{align}\label{eq:ELP}
  \Lambda(x)\coloneqq\prod_{i=1}^{t}(1-\alpha^{-1}_{j_i}x)=1+\Lambda_1 x+\dots+\Lambda_t x^t\ ,
\end{align}
where the $t$ roots $\{\alpha_{j_1},\dots,\alpha_{j_t}\}$ of $\Lambda(x)$ are the code locators corresponding to the error positions. The vector of coefficients of $\Lambda(x)$, denoted by $\vLambda$, fulfill the following linear equations, (cf.~\cite{DecRS1960Peterson})

\begin{align}\label{eq:STmatrix}
  &\underbrace{\begin{pmatrix}
      S_{i,1} & S_{i,2} & \dots & S_{i,t}\\
      S_{i,2} & S_{i,3} & \dots & S_{i,t+1}\\
      \vdots & \vdots & & \vdots \\
      S_{i,d-1-t} & S_{i,d-1-t+1} & \dots &S_{i,d-2}
  \end{pmatrix}}_{\S^{(i)}(t)}%
  \begin{pmatrix}
    \Lambda_t\\
    \Lambda_{t-1}\\
    \vdots\\
    \Lambda_1
  \end{pmatrix} \nonumber \\
             &\hspace{4cm} =
                 \underbrace{\begin{pmatrix}
                   -S_{i,t+1}\\
                   -S_{i,t+2}\\
                   \vdots\\
                   -S_{i,d-1}
                 \end{pmatrix}}_{\supbrac{\T}{i}(t)}
\ ,\ \forall i\in[\ell]\ .
\end{align}
Thus, determining the error positions in $\widetilde{\E}$ is equivalent to solving the following linear system of equations $\mathfrak{S}(t)$ for $t$ unknowns, %
\begin{align}\label{eq:KeyEquationLSE}
  \underbrace{\begin{pmatrix}
    \supbrac{\S}{1}(t)\\
    \supbrac{\S}{2}(t)\\
    \vdots\\
    \supbrac{\S}{\ell}(t)
  \end{pmatrix}}_{\S(t)}%
\underbrace{  \begin{pmatrix}
    \Lambda_t\\
    \Lambda_{t-1}\\
    \vdots\\
    \Lambda_1
  \end{pmatrix}
}_{\vLambda}
  &=
                 \underbrace{\begin{pmatrix}
                   \supbrac{\T}{1}(t)\\
                   \supbrac{\T}{2}(t)\\
                   \vdots\\
                   \supbrac{\T}{\ell}(t)
                 \end{pmatrix}}_{\T(t)}\ .
\end{align}

After determining $\vLambda$ from~\cref{eq:KeyEquationLSE}, we may use a standard method for error evaluation such as \emph{Forney's algorithm}~\cite{Forney1965} (cf.~\cite[Section~6.6]{roth2006}) to calculate the error values $\hat{\E}$. %
Then, by subtracting the calculated error $\hat{\E}$ from $\R$, we obtain the estimated codeword $\hat{\C}=\R-\hat{\E}$.

\begin{algorithm}[htb!]
  \caption{Syndrome-based Collaborative Decoding Algorithm}\label{algo:SyndromeDecoder}
  \SetAlgoLined
  \DontPrintSemicolon
  \KwIn{received word $\R$}
  \KwOut{$\hat{\C}$ or \texttt{decoding failure}}
  Calculate the syndromes $\S_{i,:},\forall i\in[\ell]$ \tcp*[r]{See~\cref{eq:Syndromes}} 
  \lIf{$\S_{i,:} = \0$ for all $i$}{\Return{$\hat{\C}=\R$}\label{step:zeroSyndromes}}
    Find minimal $t^\star$ for which $\S(t^\star) \cdot \vLambda^\star = \T(t^\star)$ has a solution and find a solution $\vLambda^\star$ \label{step:solveLSE} \tcp*[r]{See~\cref{eq:KeyEquationLSE}}
    \textbf{if} the solution $\vLambda^\star$ is not unique \textbf{then} output \texttt{decoding failure} and \textbf{stop} \label{step:uniqueSolution}\;
    \uIf{$\Lambda^\star(x)$ \emph{has} $t^\star$ distinct \emph{roots in} $\Fqm$}{
      Evaluate the errors $\hat{\E}$ by Forney's algorithm~\cite{Forney1965}\cite[Section~6.6]{roth2006}\;
      Calculate $\hat{\C}=\R-\hat{\E}$ \label{step:codewordEstimate}}
    \Else{Output \texttt{decoding failure}\label{step:distinctRoots}}
  \SetAlgoLined
\end{algorithm}

For a channel adding errors with some distribution, the collaborative decoding algorithm given in \cref{algo:SyndromeDecoder} may yield three different results:
\begin{itemize}
\item The algorithm returns the correct result, i.e., $\hat{\C}=\C$, with \emph{success probability} $\Psuc$.
\item The algorithm returns an erroneous result, i.e., $\hat{\C}\neq \C$, with \emph{miscorrection probability} $\Pmisc$.
\item The algorithm returns a \texttt{decoding failure}, with \emph{failure probability}~$\Pfail$.
\end{itemize}

\begin{remark}[Practical Implementations]
  \cref{algo:SyndromeDecoder} is a naive approach.
  It is mainly meant for the proof of the successful probability, instead of for an efficient implementation.

  For practical implementations, one can use some fast algorithm for~\cref{step:solveLSE}, for instance, 1)~\cite[Algorithm 3]{SidSchmidtPPI} with the complexity of $O(\ell d^2)$ operations in $\Fqm$, 2) the currently fastest algorithm~\cite{RosenkildeStorjohann21} with complexity $O^{\sim}(\ell^{\omega-1} d)$ where $O^{\sim}$ omits the $\log$-factors in $d$ and $\omega$ is the matrix multiplication exponent, for which the best algorithm allow $\omega<2.38$~\cite{MatMul1990,FastMatMul2014}. %
\end{remark}

\cref{algo:SyndromeDecoder} yields a \emph{bounded distance} decoder which can decode beyond half of the minimum distance $\floor{\frac{d-1}{2}}$ with high probability.
Clearly, the solution $\vLambda^\star$ cannot be unique if the number of equations in~\cref{eq:KeyEquationLSE} is less than the number of unknowns. Thus, we derive the maximum decoding radius of~\cref{algo:SyndromeDecoder} in the following theorem.

\begin{theorem}[Maximum Decoding Radius~{\cite[Theorem~3]{schmidt2009collaborative}}]\label{thm:maxDecodingRadius}
  Let $\cI\code^{(\ell)}$ be an $\ell$-interleaved alternant code with $\code \in \mathbb{A}^d_{\va}$. For a received word $\R=\C+\widetilde{\E}$, where $\C\in\cI\code^{(\ell)}$ and the error $\widetilde{\E}$ has $t$ nonzero columns, \cref{algo:SyndromeDecoder} may only succeed, i.e., return $\hat \C = \C$, if $t$ satisfies
\begin{align}
  t\leq \tmax &\coloneqq \frac{\ell}{\ell+1}(d-1) \ . \label{eq:tmaxGRS}
\end{align}

\end{theorem}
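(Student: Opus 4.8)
The plan is to show that \cref{algo:SyndromeDecoder} can return the correct codeword only if the linear system $\S(t)\vLambda = \T(t)$ of \cref{eq:KeyEquationLSE} possesses a \emph{unique} solution at the true error weight $t$, and then to convert this uniqueness requirement into a simple comparison between the number of equations and the number of unknowns. Since the genuine error locator polynomial $\Lambda(x)$ of \cref{eq:ELP} has degree exactly $t$, its coefficient vector $\vLambda$ is by construction a solution of \cref{eq:KeyEquationLSE} at parameter $t$; hence the minimal solvable parameter $t^\star$ located in \cref{step:solveLSE} satisfies $t^\star \leq t$. To recover $\Lambda(x)$, and thereby the correct error positions and $\hat\C = \C$, the decoder must single out this degree-$t$ polynomial, which forces $t^\star = t$ \emph{together with} uniqueness of $\vLambda^\star$; otherwise \cref{step:uniqueSolution} reports a decoding failure. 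Note that this argument is insensitive to the subfield structure of the alternant code: the bound is inherited directly from the syndrome system, exactly as in the GRS case.

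First I would read off the dimensions of the coefficient matrix. From \cref{eq:STmatrix}, each block $\S^{(i)}(t)$ is a $(d-1-t)\times t$ matrix, so the stacked matrix $\S(t)$ of \cref{eq:KeyEquationLSE} has $\ell(d-1-t)$ rows and $t$ columns. A consistent linear system over a field has a unique solution only if its coefficient matrix has full column rank, which in turn requires at least as many rows as columns. Hence a necessary condition for the unique solution demanded above is
\begin{align*}
  \ell(d-1-t) \geq t .
\end{align*}

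Rearranging gives $\ell(d-1) \geq t(\ell+1)$, i.e.\ $t \leq \tfrac{\ell}{\ell+1}(d-1) = \tmax$, which is the claimed bound. The theorem then follows by contraposition: if $t > \tmax$, then $\S(t)$ has strictly fewer than $t$ rows and therefore cannot attain column rank $t$; since a solution at parameter $t$ always exists, the system is underdetermined and its solution is never unique, so \cref{algo:SyndromeDecoder} must report a failure rather than output $\hat\C = \C$.

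The step I expect to require the most care is the reduction from ``successful decoding'' to ``uniqueness of $\vLambda^\star$ at parameter exactly $t$''. One must rule out that a spurious solution at some smaller $t^\star < t$ could still yield the correct codeword: such a solution corresponds to a polynomial of degree $t^\star < t$, which cannot vanish at all $t$ distinct error locators $\alpha_{j_1},\dots,\alpha_{j_t}$ and hence cannot reproduce the true error, so it leads to a failed root test or a miscorrection rather than to $\hat\C = \C$. Once this reduction is secured, the remaining row-versus-column count is entirely routine.
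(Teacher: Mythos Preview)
Your proposal is correct and follows essentially the same approach as the paper: count the $\ell(d-1-t)$ equations against the $t$ unknowns in \cref{eq:KeyEquationLSE} and observe that uniqueness of the solution forces $t\leq \ell(d-1-t)$. In fact, your write-up is more careful than the paper's own proof, which simply states the row/column count and solves the inequality without explicitly justifying why successful decoding forces $t^\star = t$ and uniqueness at that parameter; you supply exactly that missing link (and your argument that a degree-$t^\star<t$ polynomial cannot locate all $t$ errors is the right one, and is essentially what the paper later formalizes in \cref{lem:FailRankCond}).
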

\begin{proof}
  There are $t$ unknowns and $\ell(d-1-t)$ equations in the linear system of equations~\cref{eq:KeyEquationLSE}, which cannot give a unique solution for $t$ unknowns $\Lambda_1,\dots,\Lambda_t$ if the number of unknowns is larger than the number of equations, i.e., we may only obtain a unique solution from~\cref{eq:KeyEquationLSE} if
\begin{align*}
  t& \leq \ell(d-1-t)\ .
\end{align*}
The statement is proved by solving the inequality for $t$.
\end{proof}

By the nature of a bounded distance decoder, where correction spheres inevitably overlap for some error patterns of weight $t$ larger than half minimum distance $\floor{\frac{d-1}{2}}$, \cref{algo:SyndromeDecoder} is unsuccessful with some probability when $t>\floor{\frac{d-1}{2}}$. The focus of this work is to bound this success probability, assuming an uniform distribution of errors of given weight $t$. The techniques we use are based on analyzing~\cref{eq:KeyEquationLSE} and overbounding the number of cases where $\rank(\S(t))<t$ when $t$ errors occurs. To bound the success probability of~\cref{algo:SyndromeDecoder} based on this analysis, we first show that $\rank(\S(t))<t$ is a \emph{necessary and sufficient} condition for \cref{algo:SyndromeDecoder} to be unsuccessful. In other words, as $\rank(\S(t))\leq t$ by design, the decoder succeeds exactly when $\S(t)$ is of full rank $t$. %
 The arguments are an extension of those in the proof of \cite[Lemma~2]{schmidt2009collaborative}.
 \begin{lemma}[Condition for unsuccessful decoding]\label{lem:FailRankCond}
   Let $\cI\code^{(\ell)}$ be an $\ell$-interleaved alternant code with $\code \in \mathbb{A}^d_{\va}$. For a received word $\R=\C+\widetilde{\E}$, where $\C\in\cI\code^{(\ell)}$ and the error $\widetilde{\E}$ has $t>0$ nonzero columns, \cref{algo:SyndromeDecoder} is not \emph{successful}, i.e., returns $\hat \C \ne \C$ or a {\normalfont \texttt{decoding failure}}, if and only if $\rank(\S(t))<t$.
\end{lemma}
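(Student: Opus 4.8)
The plan is to prove the equivalent statement that \cref{algo:SyndromeDecoder} \emph{succeeds} if and only if $\rank(\S(t)) = t$, exploiting that $\rank(\S(t)) \le t$ always holds since $\S(t)$ has $t$ columns. Throughout I use two facts: (i) by \cref{eq:STmatrix,eq:KeyEquationLSE} the coefficient vector $\vLambda$ of the \emph{true} error locator $\Lambda(x) = \prod_{i=1}^t(1 - \alpha_{j_i}^{-1}x)$ solves $\S(t)\vLambda = \T(t)$, so this system is consistent and the minimal consistent level found in \cref{step:solveLSE} satisfies $t^\star \le t$; and (ii) $\Lambda$ has exactly $t$ distinct roots $\alpha_{j_1},\dots,\alpha_{j_t} \in \Fqm$ because the code locators are distinct. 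We may assume $t \le \tmax$, since otherwise $\S(t)$ is trivially rank-deficient and the decoder fails by \cref{thm:maxDecodingRadius}, which is consistent with the claim.

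The technical core, extending \cite[Lemma~2]{schmidt2009collaborative} to the interleaved setting, is the claim that consistency of the reduced system at \emph{any} level $\tau < t$ forces $\rank(\S(t)) < t$. First I would rewrite each block equation $\S^{(i)}(\tau)\vLambda^\star = \T^{(i)}(\tau)$ as the recurrence $\sum_{p=0}^{\tau}\Lambda^\star_p S_{i,l-p} = 0$ (with $\Lambda^\star_0 = 1$) for $l = \tau+1,\dots,d-1$ and every $i \in [\ell]$, which is a direct re-indexing of \cref{eq:STmatrix}. Evaluating this recurrence at $l = a+t-1$ for $a = 1,\dots,d-1-t$ — legitimate since $l \ge t \ge \tau+1$ — expresses the last column of each Hankel block $\S^{(i)}(t)$ as the \emph{same} linear combination of its columns $t-\tau,\dots,t-1$, the coefficients $\Lambda^\star_p$ being independent of the row index $i$. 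Consequently the nonzero vector $\ve{c} = (0,\dots,0,\Lambda^\star_\tau,\dots,\Lambda^\star_1,1)^\top \in \Fqm^t$ lies in the right kernel of every block, hence of the stacked matrix $\S(t)$, proving $\rank(\S(t)) < t$. This shared-kernel step, where the collaborative structure is used crucially, is the main obstacle and the heart of the argument.

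With this claim in hand, both directions follow. If $\rank(\S(t)) = t$, then by the claim no level $\tau < t$ is consistent, so $t^\star = t$; full column rank makes the consistent system uniquely solvable, so \cref{step:solveLSE} returns the true $\vLambda$, which passes the uniqueness and distinct-roots checks, and Forney's algorithm then recovers $\hat{\E} = \widetilde{\E}$ exactly (the locations being correct and $t < d-1$ guaranteeing a uniquely solvable value system per row), giving $\hat{\C} = \C$. Conversely, if $\rank(\S(t)) < t$, then either $t^\star = t$, in which case the consistent system at level $t$ has a non-unique solution and \cref{step:uniqueSolution} outputs \texttt{decoding failure}; or $t^\star < t$, in which case any error estimate produced has at most $t^\star < t$ nonzero columns and therefore cannot equal $\widetilde{\E}$, whose support has size $t$, so $\hat{\C} \ne \C$ (or the algorithm fails outright at the root check). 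In every case the decoder is unsuccessful, which completes the equivalence.
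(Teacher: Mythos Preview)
Your proposal is correct and follows essentially the same route as the paper. Both arguments hinge on the fact that a solution to the key equation at any level $\tau<t$ produces a nontrivial right-kernel element of $\S(t)$; the paper obtains this by citing \cite[Lemma~2]{schmidt2009collaborative} (a solution at level $t^\star$ is also a solution at level $t$, hence two distinct solutions), whereas you derive it self-containedly from the Hankel recurrence and exhibit the kernel vector $(0,\dots,0,\Lambda^\star_\tau,\dots,\Lambda^\star_1,1)^\top$ explicitly. The organization differs slightly --- the paper walks case-by-case through the algorithm's exit points, you argue via the dichotomy $t^\star=t$ versus $t^\star<t$ --- but the logical content is the same.

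One small gap: in your backward direction, the case split on $t^\star$ does not cover the branch where all syndromes vanish and \cref{algo:SyndromeDecoder} returns $\hat\C=\R$ at \cref{step:zeroSyndromes}, before $t^\star$ is ever computed. There $\S(t)=\0$, so $\rank(\S(t))=0<t$, and $\hat\C=\R=\C+\widetilde{\E}\ne\C$ since $t>0$; the implication still holds, but you should add a sentence handling this exit explicitly.
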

\begin{proof}
  Denote by $\Lambda(x)$ the \emph{true} error locator polynomial corresponding to the $t$ error positions (indices of non-zero columns) in $\widetilde{\E}$. Then $\Lambda(x)$ has $t$ distinct roots in $\Fqm$ and $\vLambda$ is a solution of ${\frak S}(t)$ as in~\cref{eq:KeyEquationLSE}.

\emph{Necessary condition:} We show that unsuccessful decoding implies $\rank(\S(t))<t$.

The algorithm can fail only on \cref{step:uniqueSolution} or \ref{step:distinctRoots}. \cref{step:solveLSE} determines the \emph{minimal} $t^{\star}$ such that ${\frak S}(t^{\star})$ has at least one solution $\vLambda^{\star}$, hence $t^{\star} \leq t$. Note that $\vLambda^{\star}$ is also a solution to ${\frak S}(t)$ since $t\ge t^{\star}$ (cf.~\cite[Lemma~2]{schmidt2009collaborative}).
 If the algorithm fails on \cref{step:uniqueSolution}, the system ${\frak S}(t^\star)$ has many solutions, hence ${\frak S}(t)$ also has many solutions and $\rank(\S(t))<t$.
A failure on \cref{step:distinctRoots} occurs if $\Lambda^{\star}(x)$ does not have $t^{\star}$ different roots which implies $\Lambda^{\star}(x)\ne \Lambda(x)$. Again, the system ${\frak S}(t)$ has at least two solutions $\vLambda$ and $\vLambda^{\star}$ and $\rank(\S(t))<t$.

Only \cref{step:zeroSyndromes} and \ref{step:codewordEstimate} can result in a miscorrected codeword. If the decoder outputs $\hat \C$ on \cref{step:zeroSyndromes}, we have $\hat \C \ne \C$ as $t>0$. Further, in this case $\S(t) = \0$, so $\rank(\S(t))= 0 <t$. %
If the algorithm outputs a miscorrected codeword $\hat \C \ne \C$ on \cref{step:codewordEstimate}, the error positions in $\R - \hat \C$ correspond to a $\Lambda^{\star}(x)$ whose coefficients $\vLambda^{\star}$ are a solution to ${\frak S}(t^\star)$ and hence also to ${\frak S}(t)$. Thus ${\frak S}(t)$ has two different solutions $\vLambda^{\star}$ and $\vLambda$, which are different since $\hat \C \ne \C$, and it follows that $\rank(\S(t))<t$.

\emph{Sufficient condition:} We show that unsuccessful decoding follows from $\rank(\S(t))<t$.

Only \cref{step:zeroSyndromes} or \ref{step:codewordEstimate} can result in the output of a valid codeword. Let us assume that $\rank(\S(t))<t$ but the decoding was successful, i.e., $\hat \C = \C$. If $\C$ was found in \cref{step:zeroSyndromes} then $\R=\C$ and the number of errors is $t=0$, which contradicts the assumption $t>0$.
If the correct $\C=\hat \C$ was the result of \cref{step:codewordEstimate}, then the \emph{minimal} $t^\star$ is equal to the \emph{actual} number of errors $t$ and $\vLambda^\star=\vLambda$;
otherwise it is not possible for the polynomial $\Lambda^\star(x)$, which is of degree $t^{\star}$, to have $t$ distinct roots.
Since, by assumption, the algorithm did not fail, it follows from \cref{step:uniqueSolution} that in this case ${\frak S}(t)$ has a unique solution which contradicts our assumption that $\rank(\S(t))<t$.

\end{proof}

\begin{remark}[Application of \cref{lem:FailRankCond} to interleaved RS codes]\label{rem:applicationToRS}
In \cite[Lemma~2]{schmidt2009collaborative} it is proved that \cref{algo:SyndromeDecoder} returning a {\normalfont \texttt{decoding failure}} is a sufficient condition for the matrix $\S(t)$ to be rank deficient. Therefore, an upper bound on the probability of $\rank(\S(t))<t$ provides an upper bound on the probability of a decoding failure. In \cref{lem:FailRankCond} we extend this argument by showing that the decoder does not succeed \emph{if and only if} $\rank(\S(t))<t$. This implies that any bound on the probability of $\rank(\S(t))<t$ is not only a bound on the probability of a decoding failure, but an upper bound on sum of the probability of a decoding failure and the probability that the decoder returns a miscorrection. As this is a property of the decoder and therefore not specific to interleaved alternant codes, it follows that the upper bound on the probability of a decoding failure for interleaved RS codes of \cite[Theorem~7]{schmidt2009collaborative} is in fact an upper bound on the probability of the decoder being unsuccessful, i.e., a bound on $1-\Psuc$.
\end{remark}

With the help of~\cref{lem:FailRankCond}, we now present the crux in bounding the success probability, which is the basis of the bounds presented in \cref{sec:upperBounds}.

\begin{lemma}\label{lem:FailureCrux}
  Let $\mathcal{IC}^{(\ell)}$ be an $\ell$-interleaved alternant code with $\code \in \mathbb{A}^d_{\va}$ and $\cE=\{j_1,j_2,\dots,j_t\}\subset [n]$ be a set of $|\cE|=t$ error positions, where $n\coloneqq |\va|$. For a codeword $\C\in \mathcal{IC}^{(\ell)}$, an error matrix $\widetilde{\E} \in \Fq^{\ell \times n}$ with $\supp(\widetilde{\E}) \coloneqq \cE$ and $\E \coloneqq \widetilde{\E}|_{\cE} \in \EB{q}{\ell}{t}$, and a received word $\R \coloneqq \C + \widetilde{\E}$, \cref{algo:SyndromeDecoder} \emph{succeeds}, i.e., returns $\hat \C = \C$, if and only if
  \begin{equation}
    \label{eqn:failure_cond}
    \nexists \v \in \Fqm^t \setminus \{\0\} \text{ such that } \H \cdot \diag(\v) \cdot \E^\top = \0 \ , %
  \end{equation}
  where
  $\H\in \F_{q^m}^{d-t-1 \times t}$ is a parity-check matrix of the code $\GRS_{\va|_\cE,\1}^{d-t}$.
\end{lemma}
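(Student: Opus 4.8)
The plan is to lean on \cref{lem:FailRankCond}, which already establishes that \cref{algo:SyndromeDecoder} succeeds if and only if $\rank(\S(t)) = t$. Since the hypothesis $\E \in \EB{q}{\ell}{t}$ guarantees that $\widetilde{\E}$ has exactly $t$ nonzero columns, that lemma applies with this value of $t$, and it remains only to prove the rank reformulation
\[
\rank(\S(t)) < t \quad\Longleftrightarrow\quad \exists\, \v \in \Fqm^t\setminus\{\0\}\ \text{with}\ \H\cdot\diag(\v)\cdot\E^\top = \0 ,
\]
after which negating both sides gives \cref{eqn:failure_cond}. The entire argument rests on a clean factorization of the block-Hankel matrix $\S(t)$.

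First I would write out the syndromes restricted to the error support. With $\cE = \{j_1,\dots,j_t\}$ and $v_{j_1},\dots,v_{j_t}\in\Fqm^\star$ the (nonzero) column multipliers of the GRS code $\GRSp$ whose subfield subcode is $\code$, each syndrome entry of row $i$ is $S_{i,s} = \sum_{r=1}^t E_{i,r}\,v_{j_r}\,\alpha_{j_r}^{s-1}$, where $E_{i,r}$ denotes the entry of $\E$ in row $i$ and column $r$. Because the $(a,b)$ entry of the Hankel block $\S^{(i)}(t)$ equals $S_{i,a+b-1}$, this exposes a product structure $\S^{(i)}(t) = \ve{V}\,\ve{D}_i\,\ve{W}$, where $\ve{V}\in\Fqm^{(d-1-t)\times t}$ is the Vandermonde matrix with $(a,r)$-entry $\alpha_{j_r}^{a-1}$, $\ve{D}_i = \diag(E_{i,1}v_{j_1},\dots,E_{i,t}v_{j_t})$, and $\ve{W}\in\Fqm^{t\times t}$ is the square Vandermonde matrix with $(r,b)$-entry $\alpha_{j_r}^{b-1}$. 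As the locators $\alpha_{j_1},\dots,\alpha_{j_t}$ are distinct, $\ve{W}$ is invertible.

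Stacking the $\ell$ blocks gives $\S(t) = \ve{M}\,\ve{W}$, where $\ve{M}$ is the vertical concatenation of $\ve{V}\ve{D}_1,\dots,\ve{V}\ve{D}_\ell$. Since right-multiplication by the invertible $\ve{W}$ preserves rank, $\rank(\S(t))=\rank(\ve{M})$, and $\rank(\S(t))<t$ holds exactly when $\ve{M}$ has a nonzero right-kernel vector $\y$, i.e.\ $\ve{V}\ve{D}_i\y=\0$ for all $i\in[\ell]$. Setting $w_r := v_{j_r}\,y_r$ — a bijective change of variables because every $v_{j_r}\neq 0$, so $\y\neq\0 \iff \ve{w}\neq\0$ — rewrites these equations as $\sum_{r=1}^t \alpha_{j_r}^{a-1}\,E_{i,r}\,w_r = 0$ for all $a$ and $i$. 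Recognizing $\ve{V}$ as precisely the parity-check matrix $\H$ of $\GRS_{\va|_\cE,\1}^{d-t}$ (it has $d-t-1$ rows and uses the restricted locators with all-one column multipliers), this is exactly $\H\cdot\diag(\ve{w})\cdot\E^\top = \0$, completing the equivalence.

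I expect the main obstacle to be bookkeeping rather than conceptual: getting the Hankel-to-Vandermonde factorization and its index ranges exactly right (in particular matching the $d-1-t$ rows of $\S^{(i)}(t)$ with the $d-t-1$ rows of the parity-check of the designed-distance-$(d-t)$ GRS code), and verifying that the column multipliers $v_{j_r}$ of the underlying GRS supercode can be absorbed harmlessly into the free vector via the substitution $w_r = v_{j_r} y_r$. This absorption is exactly what lets the final condition be phrased in terms of the canonical all-ones code $\GRS_{\va|_\cE,\1}^{d-t}$, independently of which GRS code $\code$ descends from.
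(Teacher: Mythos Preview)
Your proposal is correct and follows essentially the same route as the paper's proof: reduce via \cref{lem:FailRankCond} to the rank condition on $\S(t)$, invoke the Vandermonde--diagonal--Vandermonde factorization of each Hankel block $\S^{(i)}(t)$ (the paper writes it as $\H\cdot\ve{F}^{(i)}\cdot\ve{D}\cdot\ve{V}$, splitting your $\ve{D}_i$ into separate error-value and column-multiplier diagonals), and then absorb the invertible square Vandermonde and the nonzero multipliers into the free vector by a bijective change of variables. The only cosmetic difference is the order in which you perform that absorption, and the paper additionally makes explicit the commutation $\H\cdot\diag(\E_{i,:})\cdot\v = \H\cdot\diag(\v)\cdot\E_{i,:}^\top$, which you use implicitly in your final line.
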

\begin{proof}
  We extend and adapt the proof for interleaved RS codes from~\cite{schmidt2009collaborative}.

  According to~\cref{lem:FailRankCond}, \cref{algo:SyndromeDecoder} may only yield a \texttt{decoding failure} or a miscorrection $\hat \C \neq \C$ if $\rank(\S(t))<t$, with $\S(t)$ as in \cref{eq:KeyEquationLSE}. In other words, the decoding may only be unsuccessful, if there exists a non-zero vector $\u\in \Fqm^t$ such that $\S(t) \cdot \u=\0$, i.e.,
  \begin{align}\label{eq:ustatement}
    \exists \u\in \Fqm^t\setminus\{\0\} \text{ such that }\supbrac{\S}{i}(t)\cdot \u=\0\ ,\ \forall i\in[\ell]\ .
  \end{align}
  It is known (cf.~\cite[Theorem 9.9]{PetersonWeldon1972}\cite{schmidt2009collaborative}) that a syndrome matrix $\supbrac{\S}{i}(t)$ can be decomposed into
  \begin{align*}
    \supbrac{\S}{i}(t)=\H \cdot \supbrac{\Fbold}{i}\cdot \D\cdot \V\ ,
  \end{align*}
  where the matrix $\H$ is defined as in the statement (see also \cref{def:GRScodes}),
  \begin{align*}
    &\V=
    \begin{pmatrix}
        1 & 1 & \dots & 1\\
      \alpha_{j_1} & \alpha_{j_2} & \dots & \alpha_{j_t}\\
      \alpha_{j_1}^2 & \alpha_{j_2}^2 & \dots & \alpha_{j_t}^2\\
      \vdots & \vdots &  & \vdots\\
      \alpha_{j_1}^{t-1} & \alpha_{j_2}^{t-1} & \dots & \alpha_{j_t}^{t-1}
    \end{pmatrix}^\top\in\Fqm^{t\times t} \ , \\
    &\supbrac{\Fbold}{i} =\diag(\E_{i,:}) \in \Fq^{t\times t}\ ,\\
    &\D = \diag(\v'|_{\cE})\in \Fqm^{t \times t}\ ,
  \end{align*}
 and $\v'$ is the column multiplier of the GRS code corresponding to the alternant code $\code$, i.e., $\GRS^d_{\va,\v'}\cap\Fq=\code$.

  We observe that the matrices $\D$ and $\V$ are both square and of full rank. Therefore, the product $\v=\D\cdot\V\cdot\u$ defines a one-to-one mapping $\u\to\v$, such that $\0\to\0$. Consequently, the statement~\cref{eq:ustatement} is equivalent to the statements
  \begin{align*}
    \exists \v \in \Fqm^t \setminus \{\0\} \text{ such that } \H&\cdot \diag(\E_{i,:})\cdot \v =\0\ ,\ \forall i\in[\ell]\\
    &\Updownarrow\\
    \exists \v \in \Fqm^t \setminus \{\0\} \text{ such that } \H&\cdot \diag(\v)\cdot \E_{i,:} =\0\ ,\ \forall i\in[\ell]\ ,
  \end{align*}
  and the lemma statement follows.
\end{proof}

Above we extended and adapted the first part of the proof of the upper bound on the failure probability for interleaved RS codes in~\cite{schmidt2009collaborative}, where the error matrix $\widetilde{\E}$ is assumed to be over $\Fqm$ (the field of RS codes). Simulation results indicate that this bound is quite tight. However, for interleaved alternant codes, $\widetilde{\E}$ is over $\Fq$ (the \emph{subfield} of RS codes) and the bound from~\cite{schmidt2009collaborative} is not valid in this case.

\cref{lem:FailureCrux} gives a necessary and sufficient condition for~\cref{algo:SyndromeDecoder} to succeed for an error $\widetilde{\E}$ with fixed $\cE=\supp(\widetilde{\E})$ and $\widetilde{\E}|_{\cE}\in \EB{q}{\ell}{t}$.
In~\cref{sec:upperBounds} and~\cref{sec:lowerBound} we bound the probability of successful decoding of~\cref{algo:SyndromeDecoder} for a random error matrix $\widetilde{\E}$ where $\widetilde{\E}|_{\cE}\sim \EB{q}{\ell}{t}$. %

\section{Technical Preliminary Results}\label{sec:TechnicalPreliminaries}

Before deriving the bounds on the success probability of decoding interleaved alternant codes in~\cref{thm:failureProbNewBoundGeneral}, we establish some technical preliminary results which are needed in the proof of the bounds.

\subsection{Maximization of Integer Distributions}

To begin, we derive a simple upper bound on the maximization of a sum of integer powers, under a restriction on the base of the power.

\begin{definition}[Majorization Relation]\label{def:majorization}
Let $\cM=\{\{M_1, M_2,\dots,M_c\}\}$ and $\cK=\{\{K_1, K_2,\dots,K_c\}\}$ be two (finite) multi-sets of real numbers with the same cardinality. We say that the set $\cM$ \emph{majorizes} the set $\cK$ and write
\begin{equation*}
\cM\succ\cK\quad\mathrm{or}\quad \cK\prec\cM
\end{equation*}
if, after a possible renumeration, $\cM$ and $\cK$
satisfy the following conditions:
\begin{enumerate}
    \item[(1)] $M_1\geq M_2 \geq \cdots \geq M_c$ and $K_1\geq K_2 \geq \cdots \geq K_c$;
    \item[(2)] $\sum_{i=1}^jM_i \geq \sum_{i=1}^j K_i, \ \forall \, 1\leq j \leq c$;
\end{enumerate}
\end{definition}

\begin{lemma}[Karamata's inequality~{\cite[Theorem~1]{kadelburg2005inequalities}}] \label{lem:karamata}
Let $\cM=\{\{M_1, M_2,\dots,M_c\}\}$ and $\cK=\{\{K_1, K_2,\dots,K_c\}\}$ be two multi-sets of real numbers from an interval $[a,b]$. If the set $\cM\succ\cK$, and if $f:[a,b]\to \mathbb{R}$ is a convex and non-decreasing function, then it holds that
\begin{align}
    \sum\limits_{i=1}^c f(M_i)\geq \sum\limits_{i=1}^c f(K_i) \ .
\end{align}
\end{lemma}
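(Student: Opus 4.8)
The plan is to prove this via the supporting-line (subgradient) characterization of convexity together with Abel summation, which is the classical route to Karamata's inequality; the extra hypothesis that $f$ is non-decreasing will be needed precisely because \cref{def:majorization} only asks for dominance of the partial sums (condition (2)) and not equality of the full sums, i.e.\ it is a \emph{weak} majorization. First I would reduce the goal to showing that $\sum_{i=1}^c\bigl(f(M_i)-f(K_i)\bigr)\geq 0$, fixing the ordering $M_1\geq\cdots\geq M_c$ and $K_1\geq\cdots\geq K_c$ guaranteed by condition (1).

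Next, for each index $i$ I would invoke convexity of $f$ to pick a subgradient $c_i$ of $f$ at the point $K_i$, so that the supporting-line inequality $f(x)\geq f(K_i)+c_i(x-K_i)$ holds for all $x\in[a,b]$; specializing to $x=M_i$ gives $f(M_i)-f(K_i)\geq c_i(M_i-K_i)$. Two monotonicity facts about these slopes drive the argument. Since $f$ is convex, its subgradients are non-decreasing in the base point, so because $K_1\geq\cdots\geq K_c$ the slopes can be chosen non-increasing, $c_1\geq c_2\geq\cdots\geq c_c$. Since $f$ is additionally non-decreasing, every subgradient is non-negative, so in particular $c_c\geq 0$.

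It then remains to show $\sum_{i=1}^c c_i(M_i-K_i)\geq 0$. Setting $D_j\coloneqq\sum_{i=1}^j(M_i-K_i)$ with $D_0\coloneqq 0$, condition (2) of \cref{def:majorization} is exactly $D_j\geq 0$ for all $j$. Abel summation rewrites the sum as
\begin{equation*}
\sum_{i=1}^c c_i(M_i-K_i)=c_c D_c+\sum_{i=1}^{c-1}(c_i-c_{i+1})D_i \ .
\end{equation*}
Here $c_i-c_{i+1}\geq 0$ and $D_i\geq 0$ make every term of the sum non-negative, and $c_c\geq 0$ together with $D_c\geq 0$ makes the boundary term non-negative as well, which closes the argument.

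The main subtlety, and the one point I would flag, is the boundary term $c_c D_c$. Under the usual (equality) notion of majorization one has $D_c=0$, the boundary term vanishes, and convexity alone suffices; here, because \cref{def:majorization} only imposes the one-sided partial-sum bounds, $D_c$ may be strictly positive, and the non-decreasing hypothesis on $f$ is exactly what guarantees $c_c\geq 0$ and hence keeps the boundary term non-negative. A secondary care point is that a convex $f$ need not be differentiable, so I would work throughout with subgradients (equivalently one-sided derivatives), which exist in the interior of $[a,b]$ and are monotone, rather than with $f'$.
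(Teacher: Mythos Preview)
Your argument is correct and is precisely the classical Abel-summation/subgradient proof of (weak) Karamata; you also correctly identify that the non-decreasing hypothesis is what handles the boundary term $c_cD_c$ under the weak-majorization Definition~\ref{def:majorization}. The paper itself does not give a proof of this lemma at all---it merely cites \cite[Theorem~1]{kadelburg2005inequalities}---so there is nothing to compare against beyond noting that your proof is the standard one found in that reference.
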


For convenience of notation, we define a fixed notation for the set over which we will maximize in the following.
\begin{definition} \label{def:multiset}
Denote by $\mathbb{M}_{c,B}^{(a,b)} = \{\cM,\ldots\}$ the set of all multi-sets $\cM = \{\{M_1,\ldots, M_c\}\}$ of cardinality $c$ with $b\geq M_1 \geq \ldots \geq M_c\geq a $ and $\sum_{M\in \cM} M = B$.
\end{definition}

With these definitions established, we are now ready to give an upper bound on the sum over the results of a convex non-decreasing function evaluated on the elements of any multi-set in $\mathbb{M}_{c,B}^{(a,b)}$.

\begin{lemma}\label{lem:maximization}
  Let $a,c\geq 1$, $b\geq a$, $ca\leq B \leq c b$, and $\mathbb{M}_{c,B}^{(a,b)}$ be as in \cref{def:multiset}. Then, for any function $f(x)$ that is convex and non-decreasing in the interval $a\leq x \leq b$, it holds that
  \begin{align*}
    \max\limits_{\cM\in\mathbb{M}_{c,B}^{(a,b)}} \sum_{M\in \cM} f(M) \!\leq\! \Big(\frac{B-ca}{b-a}+1\Big)( f(b) - f(a)) + cf(a)
  \end{align*}
\end{lemma}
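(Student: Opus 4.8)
The plan is to recognize this as a majorization problem and apply Karamata's inequality (\cref{lem:karamata}). Since $f$ is convex, a sum $\sum_{M\in\cM} f(M)$ over a fixed total $B$ with entries confined to $[a,b]$ is maximized by the \emph{most spread-out} configuration, so I would first exhibit the extremal multiset $\cM^\star \in \mathbb{M}_{c,B}^{(a,b)}$ that pushes mass to the endpoints. Set $s \coloneqq \floor{\frac{B-ca}{b-a}}$ and $\rho \coloneqq (B-ca) - s(b-a) \in [0,b-a)$, and let $\cM^\star$ consist of $s$ entries equal to $b$, one entry equal to $a+\rho$, and the remaining $c-s-1$ entries equal to $a$. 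Feasibility is immediate: the entries are sorted and lie in $[a,b]$ (since $a \le a+\rho < b$), and they sum to $sb + (a+\rho) + (c-s-1)a = s(b-a) + \rho + ca = B$, using $s(b-a)+\rho = B-ca$.

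The core step is to prove $\cM^\star \succ \cM$ for every $\cM = \{\{M_1 \ge \dots \ge M_c\}\} \in \mathbb{M}_{c,B}^{(a,b)}$ in the sense of \cref{def:majorization}. I would verify the partial-sum inequalities $\sum_{i=1}^j M_i^\star \ge \sum_{i=1}^j M_i$ by splitting on $j$: for $j \le s$ the left side is $jb$, which dominates any sum of $j$ entries each bounded by $b$; for $j \ge s+1$ I would instead bound the omitted tail from below, using $M_i \ge a$ to get $\sum_{i=1}^j M_i = B - \sum_{i>j} M_i \le B - (c-j)a$, and check that $\sum_{i=1}^j M_i^\star$ equals exactly $B-(c-j)a$. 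The boundary case $j=c$ gives the required equality $B=B$. This is the step I expect to be the main obstacle, mostly in making the index bookkeeping and the two regimes (above/below the fractional entry) line up cleanly, and in treating the degenerate cases $\rho = 0$, $s=c$, and $s+1 > c$, which collapse $\cM^\star$ to all-$b$ or all-$a$ configurations.

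With majorization established, Karamata's inequality (\cref{lem:karamata})—applicable since $f$ is convex and non-decreasing—yields $\sum_{M\in\cM} f(M) \le \sum_{M\in\cM^\star} f(M)$ for every feasible $\cM$, so the left-hand maximum is at most $s\,f(b) + f(a+\rho) + (c-s-1)f(a)$. To eliminate the awkward middle term I would bound $f(a+\rho)$ by the chord through $(a,f(a))$ and $(b,f(b))$: writing $a+\rho = \lambda a + (1-\lambda)b$ with $1-\lambda = \frac{\rho}{b-a}$, convexity gives $f(a+\rho) \le \frac{(b-a)-\rho}{b-a}f(a) + \frac{\rho}{b-a}f(b)$.

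Finally I would collect coefficients. Using $s(b-a)+\rho = B-ca$, the coefficient of $f(b)$ becomes $s + \frac{\rho}{b-a} = \frac{B-ca}{b-a}$ and that of $f(a)$ becomes $c - \frac{B-ca}{b-a}$, so the bound reads $\frac{B-ca}{b-a}f(b) + \big(c - \frac{B-ca}{b-a}\big)f(a)$. Comparing with the claimed right-hand side $\big(\frac{B-ca}{b-a}+1\big)(f(b)-f(a)) + cf(a)$, the difference is exactly $f(b)-f(a)$, which is non-negative because $f$ is non-decreasing; hence the stated inequality follows (with a little slack to spare). Thus convexity is used both for Karamata and for the chord bound, while monotonicity of $f$ is invoked precisely at this last comparison to close the gap.
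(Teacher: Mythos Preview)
Your proof is correct and follows the same overall strategy as the paper---construct an extremal multiset that majorizes every $\cM\in\mathbb{M}_{c,B}^{(a,b)}$ and apply Karamata---but the choice of extremal multiset differs. The paper takes $\cM_{\max}$ with $\lceil\tfrac{B-ca}{b-a}\rceil$ copies of $b$ and the rest equal to $a$; this multiset \emph{overshoots} the sum $B$, so the paper relies on the weak-majorization version of Karamata (hence the non-decreasing hypothesis), evaluates $\sum_{M\in\cM_{\max}}f(M)=\lceil\tfrac{B-ca}{b-a}\rceil(f(b)-f(a))+cf(a)$, and then relaxes the ceiling via $\lceil x\rceil\le x+1$. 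You instead take $\cM^\star$ with $s=\lfloor\tfrac{B-ca}{b-a}\rfloor$ copies of $b$, one fractional entry $a+\rho$, and the rest $a$, so that $\cM^\star$ lies in $\mathbb{M}_{c,B}^{(a,b)}$ exactly; you then remove the fractional entry by the chord bound. Your route actually yields the sharper inequality $\sum f(M)\le \tfrac{B-ca}{b-a}(f(b)-f(a))+cf(a)$ using convexity alone, and you correctly observe that monotonicity is only needed at the very end to absorb the extra $f(b)-f(a)$ present in the stated bound. Both arguments are short; yours makes the role of each hypothesis more transparent and shows the lemma's constant ``$+1$'' is slack.
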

\begin{proof}
  By definition
  \begin{align*}
\sum\limits_{M\in\cM}M=\sum\limits_{x \in \cM}\delta^x_{\cM}x=B, \ \forall \ \cM\in\mathbb{M}_{c,B}^{(a,b)}
  \end{align*}
  and it follows that for all $\cM\in\mathbb{M}_{c,B}^{(a,b)}$ we have
  \begin{align*}
    \delta^b_{\cM}  &= \frac{1}{b}\bigg({B - \sum\limits_{x \in \cM\setminus\{b\}}\delta^x_{\cM}x}\bigg) \leq \frac{B-(c-\delta^b_{\cM})a }{b}, \\
    \delta^b_{\cM} &\leq \frac{B-ca}{b-a}.
  \end{align*}
  Let $\cM_{\max} = \{b,\dots,b,a,\dots,a\}$ be a multiset with $\delta^b_{\cM_{\max}}=\ceil{\frac{B-ca}{b-a}}$ and $\delta^a_{\cM_{\max}} = c-\delta^b_{\cM_{\max}}$. It can readily be seen that $\cM_{\max}\succ\cM \ \forall \ \cM\in\mathbb{M}_{c,B}^{(a,b)}$ (note that $\cM_{\max}\in \mathbb{M}_{c,B}^{(a,b)}$ if $(b-a)|(B-ca)$).

Since $f(x)$ is a convex non-decreasing function for $a\leq x \leq b$, it follows from \cref{lem:karamata} that
\begin{equation}\label{eq:sum_inequality_karamata}
\sum\limits_{M\in \cM_{\max}} f(M) \geq \sum\limits_{M\in \cM} f(M) \ , \ \forall\ \cM \in \mathbb{M}_{c,B}^{(a,b)}\ .
\end{equation}

For $\cM_{\max}$ we have
\begin{align*}
  \max_{\cM \in \mathbb{M}_{c,B}^{(a,b)}} \sum_{M\in \cM} f(M) &\leq \sum\limits_{M\in \cM_{\max}} f(M) \\
  &= \delta^b_{\cM_{\max}} f(b) + (c-\delta^b_{\cM_{\max}})f(a) \\
  &=  \ceil{\frac{B-ca}{b-a}}( f(b) - f(a)) + cf(a)
\end{align*}
and the lemma statement follows.
\end{proof}

\subsection{Sum over the Cardinalities of Alternant Codes}

Specific subclasses of alternant codes, such as some BCH and Goppa codes, are known to have larger dimension \cite{macwilliams1977theory} than the lower bound given in \cref{lem:AlternantDimBounds}. However, in general it is a difficult and open problem to predict the dimension of an alternant code for given column multipliers $\v$. On the other hand, the sum over the cardinality of subfield subcodes for all combinations of non-zero column multipliers is easily determined, not only for alternant codes, but for any linear code with a known weight distribution.

For a linear $[n,k,d]_{q^m}$ code $\code$, define
\begin{align*}
  B_{n,d,w}(\code) &\coloneqq \sum_{\v \in (\Fqm^{\star})^n} \Big\lvert \{ \c \cdot \diag(\v) \ | \ \c \in \code, \wt(\c) = w\} \cap \Fq^n \Big\rvert \ .
\end{align*}
Since every linear code contains the all-zero codeword and no other codeword of weight $<d$, the sum over the cardinality of the subcodes for all combinations of non-zero column multipliers is given by
\begin{align*}
  B_{n,d}(\code) &\coloneqq  (q^m-1)^n + \sum_{w=d}^{n} B_{n,d,w}(\code) \\
  &= \sum_{\v \in (\Fqm^{\star})^n} \Big\lvert \{ \c \cdot \diag(\v) \ | \ \c \in \code\} \cap \Fq^n\Big\rvert \ .
\end{align*}
Observe that if $\code$ is a $\GRS_{\va,\v'}^d$ code for some $\v' \in (\Fqm^{\star})^n$, then $B_{n,d,w}$ is the sum over the number of codewords of weight $w$ in all alternant codes $\ALTallp$ and $B_{n,d}(\code)$ is the sum over their cardinalities. Interestingly, while the weight enumerators and cardinality of a specific subfield subcode depend on $\v$, the sum of these values over all $\v$ only depends on the weight enumerators of $\code$.

\begin{lemma}\label{lem:sumCardinalitiesAlternant}
Let $\code$ be an $[n,k,d]_{q^m}$ code and denote by $A_w^{\code}$ the $w$-th weight enumerator of $\code$. Then,
\begin{align*}
  B_{n,d,w}(\code) = A^{\code}_{w}\cdot (q^m-1)^{n-w}(q-1)^w \ .
\end{align*}
\end{lemma}
\begin{proof}
Let $\c$ be a codeword of $\code$. We have $\c \cdot \diag(\v) \in \Fq^n$ if and only if $c_iv_i \in \Fq$ for all $i \in [n]$. If $i \in \supp(\c)$, then there are exactly $q-1$ choices of $v_i$ for which $c_iv_i \in \Fq$. Else, any of the $q^m-1$ possible values of $v_i$ give $c_iv_i=0 \in \Fq$. Hence, we have
\begin{align*}
B_{n,d,w}(\code)\! &= \!\! \sum_{\v \in (\Fqm^{\star})^n}\! \Big\lvert \{ \c \cdot \diag(\v) \ | \ \c \in \code, \wt(\c) = w\} \cap \Fq^n \Big\rvert \\
 &= \sum_{\substack{\c \in \code \\ \wt(\c) = w}} \Big\lvert \{ \v \in (\Fqm^{\star})^n \ | \ c_iv_i \in \Fq \ \forall i \in [n] \}  \Big\rvert \\
  &= A^{\code}_{w}\cdot (q^m-1)^{n-w}(q-1)^w \ .
\end{align*}
\end{proof}

The weight distribution of an MDS code only depends on its parameters, not the code itself (see~\cref{thm:MDSWeightEnumerator}). Hence, for an MDS code $\code$ we can omit the dependence on $\code$ and write
\begin{equation}\label{eq:BforMDS}
  B^{\mathsf{MDS}}_{n,d,w} \coloneqq B_{n,d,w}(\code)  \ \text{and} \ B^{\mathsf{MDS}}_{n,d} \coloneqq B_{n,d}(\code)\ .
\end{equation}

\subsection{Probability of a Code Containing a Random Matrix}

We begin by proving a technical lemma that bounds the probability that all rows of a randomly chosen matrix with no all-zero columns are in a code of a certain dimension. This is a refined version of \cite[Lemma~3]{schmidt2009collaborative}. %
\begin{lemma}\label{lem:Pwk}
For some integers $\ell>0,n\geq k\geq 0$, let $\cA$ be an $[n,k]_q$ code and denote by $A_w^{\cA}$ its $w$-th weight enumerator.
Then, for $\E \sim \EB{q}{\ell}{n}$ we have
\begin{align*}
  \Pr_{\E}\{\E_{i,:}& \in\cA \ \forall i \in [\ell]\} \\
  &\leq \frac{q^{k\ell}(q-1)- (q^\ell-1)(q^k-1-A_n^{\cA}) -(q-1) }{(q-1)(q^\ell-1)^n} \ .
\end{align*}
\end{lemma}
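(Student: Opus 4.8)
The plan is to convert the probability into a pure counting problem over the code $\cA$ and then to prove the upper bound on $N$ by \emph{lower bounding} the number of matrices that violate the column condition. First I would record that $|\EB{q}{\ell}{n}| = (q^\ell-1)^n$, since each of the $n$ columns is an independently chosen nonzero vector of $\Fq^\ell$. The event $\{\E_{i,:}\in\cA\ \forall i\}$ inside $\EB{q}{\ell}{n}$ is exactly the set of matrices whose $\ell$ rows all lie in $\cA$ and which have no all-zero column, i.e. $\ell$-tuples $(\c^{(1)},\dots,\c^{(\ell)})\in\cA^\ell$ whose supports cover $[n]$. Writing $N$ for the number of such matrices, the sought probability equals $N/(q^\ell-1)^n$, so after clearing the factor $(q-1)$ the claim reduces to showing $N \le q^{k\ell} - 1 - \tfrac{q^\ell-1}{q-1}\bigl(q^k-1-A_n^{\cA}\bigr)$.

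Since $|\cA^\ell| = q^{k\ell}$, I would write $N = q^{k\ell} - \#\{\E\in\cA^\ell : \text{some column is } \0\}$ and lower bound the subtracted count. The all-zero matrix contributes $1$. To manufacture many more, I would use rank-one matrices: for a nonzero $\c\in\cA$ with $\wt(\c)<n$ (so $\c$ has a zero coordinate) and any $\u\in\Fq^\ell\setminus\{\0\}$, the outer product $\u^\top\c$ has every row $u_i\c\in\cA$ and a zero column at the vanishing coordinate of $\c$. The number of such \emph{deficient} codewords is $q^k-1-A_n^{\cA}$ (all nonzero codewords minus the weight-$n$ ones).

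The key step, and the place I expect the only real subtlety, is counting the \emph{distinct} matrices $\u^\top\c$ without over- or under-counting, because of the scalar ambiguity $\u^\top\c = (\lambda^{-1}\u)^\top(\lambda\c)$ for $\lambda\in\Fq^\star$. Here I would use that the set of deficient codewords is closed under multiplication by $\Fq^\star$ (weight is scale-invariant), so each rank-one matrix with deficient row span is represented by exactly $q-1$ admissible pairs $(\u,\c)$; equivalently, the map $(\u,\c)\mapsto\u^\top\c$ has fibers of size $q-1$. This gives $\tfrac{(q^\ell-1)(q^k-1-A_n^{\cA})}{q-1}$ distinct nonzero matrices, all with a zero column and all distinct from the zero matrix. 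Hence $\#\{\E\in\cA^\ell : \text{some zero column}\} \ge 1 + \tfrac{(q^\ell-1)(q^k-1-A_n^{\cA})}{q-1}$, and substituting into $N = q^{k\ell} - \#\{\cdots\}$, dividing by $(q^\ell-1)^n$, and multiplying numerator and denominator by $(q-1)$ reproduces the stated bound.

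As a sanity check I would verify $\ell=1$: there $\EB{q}{1}{n}$ is the set of full-support vectors, the bound collapses to $A_n^{\cA}/(q-1)^n$, and this holds with \emph{equality}, since the full-support vectors in $\cA$ are precisely its $A_n^{\cA}$ weight-$n$ codewords; this confirms the constants are correct. The main obstacle is thus entirely the exact fiber count of the outer-product map together with the closure-under-scaling argument that makes the $(q-1)$-fold identification tight; everything else is bookkeeping.
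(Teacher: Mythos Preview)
Your proposal is correct and follows essentially the same argument as the paper: both lower bound the number of row-codeword matrices with a zero column by the rank-one matrices $\u^\top\c$ for deficient $\c$, the paper phrasing this via representatives of $(\cA\setminus\{0\})/\Fq^\star$ while you count pairs and divide by the fiber size $q-1$. The $\ell=1$ sanity check is a nice addition not present in the paper.
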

\begin{proof}
  Let $\cL \subset \F_q^{\ell\times n}$ the set of matrices whose rows are codewords of $\cA$ and by $\cL_0\subset \cL$ the subset of all matrices in $\cL$ \emph{with} at least one all-zero column. Observe that
\begin{equation*}
  \left\{\E \ \left| \
  \begin{array}{l}
    \E_{1,:},\ldots,\E_{\ell,:} \ \text{are $\Fq$-scalar multiples of} \ \e ,\\
    \e\in\bar{\cA}\cup\{0\},\wt(\e)<n
  \end{array} \right.
\right\} \subseteq \cL_{0}\ ,
\end{equation*}
where $\bar{\cA}$ is a set of representatives\footnote{A common choice is the set of all non-zero codewords of $\cA$ whose first non-zero entry is $1$.} of $(\cA\setminus\{0\})/\Fq^{\star}$, which has cardinality $|\bar{\cA}|=\frac{q^k-1}{q-1}$.
If $\mathbf{e}=\mathbf{0}$ there is only one matrix, i.e., the all-zero matrix. For all other $\e$ with $\wt(\e)<n$ each row can be an $\Fq$-multiple of $\e$ and all these matrices are unique, if at least one row is not $\0$. The number of such choices is $q^\ell-1$, so
\begin{align*}
    |\cL_0|&\geq (q^\ell-1) (|\bar{\cA}|-\underbrace{|\{\c\in\bar{\cA} \ | \ \wt(\c)=n\}|}_{\eqqcolon\frac{A_n^{\cA}}{(q-1)}}) + 1 \\
    &= \frac{(q^\ell-1)}{(q-1)}(q^k-1-A_n^{\cA}) +1 \ .
\end{align*}
Recall that $\EB{q}{\ell}{n}$ does not contain any matrices with all-zero columns by definition, so $\cL_0\cap \EB{q}{\ell}{n}= \emptyset$. As $\cL_0 \subset \cL$, it follows that
\begin{align*}
  \Pr_{\E}\{\E_{i,:} \in\cA \ \forall i=[\ell]\}&=\frac{|\cL\cap\EB{q}{\ell}{n}|}{|\EB{q}{\ell}{n}|} \\
  &= \frac{|\cL\setminus\cL_{0}|}{|\EB{q}{\ell}{n}|} = \frac{|\cL|-|\cL_0|}{|\EB{q}{\ell}{n}|} \ .
\end{align*}
The lemma statement follows from the observation that $|\cL|=|\cA|^\ell=q^{k\ell}$ and $|\EB{q}{\ell}{n}|=(q^\ell-1)^n$.

\end{proof}

If $|\cL_0|$ is large, it is worthwhile to deduct it from $|\cL|$ as in \cref{lem:Pwk}. However, for other parameters, (our best lower bound on) $|\cL_0|$ becomes negligible compared to $|\cL|$. Therefore, we also define a simplified version of this upper bound, where we only exclude the zero matrix from $\cL$.

\begin{corollary}\label{cor:PwkL01}
For some integers $\ell>0,n\geq k\geq 0$, let $\cA$ be an $[n,k]_q$ code.
Then, for $\E \sim \EB{q}{\ell}{n}$ we have
\begin{equation*}
    \Pr_{\E}\{\E_{i,:} \in\cA \ \forall i\in[\ell]\} \leq  \frac{|\cL\setminus\{\0_{\ell\times n}\}|}{|\EB{q}{\ell}{n}|} = \frac{q^{k\ell} -1 }{(q^\ell-1)^n} \ .
\end{equation*}
\end{corollary}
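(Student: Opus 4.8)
The plan is to specialize \cref{lem:Pwk} by discarding the contribution from matrices with all-zero columns, keeping only the trivially-excluded all-zero matrix. Recall that in the proof of \cref{lem:Pwk} we established the decomposition
\begin{equation*}
  \Pr_{\E}\{\E_{i,:} \in\cA \ \forall i\in[\ell]\} = \frac{|\cL\setminus\cL_0|}{|\EB{q}{\ell}{n}|} = \frac{|\cL|-|\cL_0|}{|\EB{q}{\ell}{n}|}\ ,
\end{equation*}
where $\cL$ is the set of $\ell\times n$ matrices whose rows all lie in $\cA$, and $\cL_0\subseteq\cL$ is the subset of those having at least one all-zero column. The key observation is simply that the all-zero matrix $\0_{\ell\times n}$ belongs to $\cL_0$ (it has $n$ all-zero columns and its rows are the zero codeword), so $\{\0_{\ell\times n}\}\subseteq\cL_0$, whence $|\cL_0|\geq 1$ and $|\cL\setminus\cL_0|\leq|\cL\setminus\{\0_{\ell\times n}\}|=|\cL|-1$.

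First I would invoke the cardinalities already computed in \cref{lem:Pwk}, namely $|\cL|=|\cA|^\ell=q^{k\ell}$ since each of the $\ell$ rows ranges independently over the $q^k$ codewords of $\cA$, and $|\EB{q}{\ell}{n}|=(q^\ell-1)^n$ since each of the $n$ columns may be any of the $q^\ell-1$ nonzero vectors in $\Fq^\ell$. Combining these with the bound $|\cL_0|\geq 1$ gives
\begin{equation*}
  \Pr_{\E}\{\E_{i,:} \in\cA \ \forall i\in[\ell]\} = \frac{|\cL|-|\cL_0|}{|\EB{q}{\ell}{n}|} \leq \frac{q^{k\ell}-1}{(q^\ell-1)^n}\ ,
\end{equation*}
which is exactly the claimed inequality, and the equality with $|\cL\setminus\{\0_{\ell\times n}\}|/|\EB{q}{\ell}{n}|$ is immediate from $|\cL\setminus\{\0_{\ell\times n}\}|=q^{k\ell}-1$.

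I do not anticipate any real obstacle here, as the corollary is a direct weakening of \cref{lem:Pwk}: rather than subtracting the full lower bound on $|\cL_0|$, we subtract only the single all-zero matrix. The only point worth stating carefully is that $\0_{\ell\times n}\in\cL_0$ and hence is excluded from the numerator regardless of the refined estimate, which ensures the bound remains valid for all parameter ranges (even when the refined lower bound on $|\cL_0|$ happens to be negligible). Everything else is an arithmetic substitution of the two cardinalities already justified in the proof of the preceding lemma.
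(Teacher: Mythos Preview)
Your proposal is correct and follows exactly the approach the paper intends: the corollary is presented there as a direct weakening of \cref{lem:Pwk} obtained by excluding only the all-zero matrix from $\cL$ rather than the full set $\cL_0$, which is precisely what you do. No additional ideas are needed beyond the observation $\0_{\ell\times n}\in\cL_0$ and the cardinality computations already carried out in the lemma.
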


\section{The Success Probability of Decoding Interleaved Alternant Codes}\label{sec:upperBounds}

We now turn to the main topic of this work, namely providing bounds on the performance of the decoder of \cite{FengTzeng1991ColaDec,schmidt2009collaborative} (see~\cref{sec:decoding_details}) when applied to interleaved alternant codes. Recall that the success probability is given by
\begin{align*}
  \Psuc = 1- \Pfail - \Pmisc \ ,
\end{align*}
where $\Pfail$ and $\Pmisc$ are the probability of a decoding failure and a miscorrection, respectively. %

We begin by applying the technical results of \cref{sec:TechnicalPreliminaries} to obtain a lower bound on the success probability of decoding interleaved alternant codes that is valid for any interleaving order $\ell$. The applied principle is a generalization of the approach in \cite{schmidt2009collaborative}.

\subsection{A Lower Bound on the Success Probability for any Interleaving Order $\ell$}

To begin, we relate the problem of bounding the probability successful decoding to properties of the multisets $\ALTallp$ of alternant codes for different parameters.

\begin{theorem}\label{thm:failureProbNewBoundGeneral}
  Let $\mathcal{IC}^{(\ell)}$ be an $\ell$-interleaved alternant code with $\code \in \mathbb{A}^d_{\va}$ and $\cE=\{j_1,j_2,\dots,j_t\}\subset [n]$ be a set of $|\cE|=t$ error positions, where $n\coloneqq |\va|$. For a codeword $\C\in \mathcal{IC}^{(\ell)}$, an error matrix $\widetilde{\E} \in \Fq^{\ell \times n}$ with $\supp(\widetilde{\E}) \coloneqq \cE$ and $\E \coloneqq \widetilde{\E}|_{\cE} \sim \EB{q}{\ell}{t}$, and a received word $\R \coloneqq \C + \widetilde{\E}$, \cref{algo:SyndromeDecoder} \emph{succeeds}, i.e., returns $\hat \C = \C$, with probability
\begin{align*}
  &\Psuc(\mathcal{IC}^{(\ell)},\cE) \geq 1- \\
  &\ \ \sum_{w=d-t}^t \sum_{\substack{\cV \subseteq [\cE]\\ |\cV|=w}} \sum_{\cA \in \mathbb{A}_{\va|_{\cV}}^{d-t}} \!\! \Big(\delta_{\mathbb{A}_{\va|_{\cV}}^{d-t}}^{\cA}\Big)^{-1} \underset{\E}{\Pr}\{ (\E|_{\cV})_{i,:} \in \cA \ \forall \ i\in [\ell]  \}\ ,
\end{align*}
where $\delta_{\mathbb{A}_{\va|_{\cV}}^{d-t}}^{\cA}$ is the multiplicity of $\cA$ in $\mathbb{A}_{\va|_{\cV}}^{d-t}$.
\end{theorem}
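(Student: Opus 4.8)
The plan is to begin from the exact characterization of success provided by \cref{lem:FailureCrux} and to translate the failure condition into a statement about membership of the rows of $\E$ in alternant codes of \emph{varying} lengths, after which a single union bound yields the claimed inequality. By \cref{lem:FailureCrux}, the decoder is unsuccessful (outputs a failure or miscorrects) exactly when there exists $\v \in \Fqm^t \setminus \{\0\}$ with $\H \diag(\v) \E^\top = \0$, where $\H$ is a parity-check matrix of $\GRS_{\va|_\cE,\1}^{d-t}$. Reading this column by column of $\E^\top$, it is equivalent to $\diag(\v)(\E_{i,:})^\top \in \GRS_{\va|_\cE,\1}^{d-t}$ for all $i \in [\ell]$; writing $\c^{(i)} := \v \odot \E_{i,:}$ for the Hadamard product, this says each $\c^{(i)}$ is a codeword of $\GRS_{\va|_\cE,\1}^{d-t}$.

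First I would restrict attention to the support $\cV := \supp(\v) \subseteq \cE$ and set $w := |\cV|$. Since $v_j = 0$ for $j \notin \cV$, every $\c^{(i)}$ is supported on $\cV$, so its restriction $\c^{(i)}|_\cV$ still satisfies the Vandermonde parity-check equations of the shortened code $\GRS_{\va|_\cV,\1}^{d-t}$ (deleting the zero coordinates of $\c^{(i)}$ simply deletes the corresponding columns of $\H$). As $\c^{(i)}|_\cV = \diag(\v|_\cV)(\E|_\cV)_{i,:}^\top$ with $\v|_\cV \in (\Fqm^\star)^w$, the same GRS/alternant correspondence used in \cref{lem:FailureCrux} gives $(\E|_\cV)_{i,:} \in \GRS_{\va|_\cV,\v|_\cV}^{d-t} \cap \Fq^w =: \cA$, a member of the multiset $\mathbb{A}_{\va|_\cV}^{d-t}$, for every $i$. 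Hence every failure is witnessed by some pair $(\cV,\cA)$.

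Next I would pin down the summation range of $w$. Because $\E \in \EB{q}{\ell}{t}$ has no all-zero column, any $j \in \cV$ admits a row $i_0$ with $E_{i_0,j} \neq 0$, so $\c^{(i_0)}$ is a \emph{nonzero} codeword, supported on $\cV$, of a code of minimum distance $d-t$; thus $d-t \leq \wt(\c^{(i_0)}) \leq w$. Combined with $\cV \subseteq \cE$ this forces $d-t \leq w \leq t$, matching the outer summation. A union bound over all pairs $(\cV,\cA)$ with $\cV \subseteq \cE$, $|\cV| = w$, and $\cA$ ranging over the \emph{distinct} codes in $\mathbb{A}_{\va|_\cV}^{d-t}$ then bounds $\Pfail + \Pmisc$. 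Finally, rewriting the sum over distinct codes as a sum over the multiset introduces the factor $(\delta^{\cA}_{\mathbb{A}_{\va|_\cV}^{d-t}})^{-1}$, since $\sum_{\cA \in \mathbb{A}_{\va|_\cV}^{d-t}} (\delta^{\cA}_{\mathbb{A}_{\va|_\cV}^{d-t}})^{-1} f(\cA) = \sum_{\text{distinct } \cA} f(\cA)$, and $\Psuc = 1 - \Pfail - \Pmisc$ delivers the statement.

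The main obstacle is the bookkeeping in the reduction to the support $\cV$: one must verify carefully that restricting the GRS membership to $\cV$ is valid, namely that deleting the zero coordinates of a codeword leaves a codeword of the shortened GRS code with the correctly shortened locator vector $\va|_\cV$, and that the induced column multiplier $\v|_\cV$ remains in $(\Fqm^\star)^w$ so that the restricted code is genuinely a member of $\mathbb{A}_{\va|_\cV}^{d-t}$. The multiplicity correction is the other delicate point, as the union bound is over distinct codes while the statement is phrased over the multiset; keeping these two viewpoints consistent is where care is required.
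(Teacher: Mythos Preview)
Your proposal is correct and follows essentially the same route as the paper: start from \cref{lem:FailureCrux}, restrict to the support $\cV=\supp(\v)$, recognise the condition as membership of all rows $(\E|_\cV)_{i,:}$ in the alternant code $\GRS_{\va|_\cV,\v|_\cV}^{d-t}\cap\Fq^w\in\mathbb{A}_{\va|_\cV}^{d-t}$, and apply a union bound with the multiplicity correction. The only cosmetic difference is your justification for $w\geq d-t$ via the minimum distance of the GRS code, whereas the paper phrases it as ``any $d-t-1$ columns of $\H$ are linearly independent''; these are equivalent.
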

\begin{proof}
  Denote $\E \coloneqq \widetilde{\E}|_{\cE}$. By \cref{lem:FailureCrux} the decoding of $\widetilde{\E}$ succeeds if and only if%
  \begin{equation*}
    \nexists \v \in \Fqm^t \setminus \{\0\} \text{ such that } \H \cdot \diag(\v) \cdot \E^\top = \0 \ ,
  \end{equation*}
where $\H\in\Fqm^{(d-t-1)\times t}$ denotes the parity-check matrix of the code $\GRS_{\va|_\cE, \boldsymbol{1}}^{d-t}$, i.e., the RS codes of distance $d-t$ with locators corresponding to the error positions.

Therefore, the probability of unsuccessful decoding is upper bounded by
\begin{align}
  1&-\Psuc(\mathcal{IC}^{(\ell)},\mathcal{E}) \\
  &\leq \underset{\E}{\Pr}\{\exists \ \v \in \Fqm^{t} \setminus \{\0\} : \H \cdot \diag(\v) \cdot \E^\top = \0 \}\nonumber \\
                     &\leq \sum_{w=1}^{t} \underset{\E}{\Pr}\{\exists \ \v \in \Fqm^{t}, \wt(\v) = w :\H \cdot \diag(\v) \cdot \E^\top = \0 \}\label{eq:ieq1}\\
                     &\stackrel{\mathsf{(a)}}{=} \! \sum_{w=d-t}^t \! \underset{\E}{\Pr}\{ \exists \v \in \Fqm^{t}, \wt(\v) \!=\! w : \H \cdot \diag(\v) \cdot \E^\top \! = \0 \}\nonumber\\
                     &=  \sum_{w=d-t}^t \sum_{\substack{\cV \subseteq [\cE]\\ |\cV|=w}} \underset{\E}{\Pr}\{ \exists \ \cA \in \mathbb{A}_{\va|_\cV}^{d-t} : (\E|_{\cV})_{i,:} \in \cA \ \forall \ i\in [\ell]  \}\nonumber\\
                     &\leq  \sum_{w=d-t}^t \sum_{\substack{\cV \subseteq [\cE]\\ |\cV|=w}} \sum_{\cA \in \mathbb{A}_{\va|_{\cV}}^{d-t}} \! \! \Big(\delta_{\mathbb{A}_{\va|_{\cV}}^{d-t}}^{\cA}\Big)^{-1} \underset{\E}{\Pr}\{ (\E|_{\cV})_{i,:} \! \in\! \cA  \forall  i\in [\ell]  \} \ , \nonumber
\end{align}
where $\mathsf{(a)}$ holds because any $d-t-1$ columns of $\H$ are linearly independent.
\end{proof}

With this connection between the multisets $\mathbb{A}_{\va|_{\cV}}^{d-t}$ and the probability of successful decoding $\Psuc(\mathcal{IC}^{(\ell)},\cE)$ established, we now apply the technical results of \cref{sec:TechnicalPreliminaries} to obtain a lower bound.

\begin{theorem}\label{thm:failureProbNewBound}
The probability of successful decoding $\Psuc(\cI\code^{(\ell)},\cE)$ as in \cref{thm:failureProbNewBoundGeneral} is lower bounded by
\begin{align*}
  \Psuc(&\mathcal{IC}^{(\ell)},\cE) \geq 1-\sum_{w=d-t}^t \frac{\binom{t}{w}}{(q^m-1)(q^\ell-1)^w}\\
  & \quad \cdot \bigg( \frac{(q^\ell-1)}{(q-1)}\Big(c_w+B_{w,d-t,w}^{\mathsf{MDS}}-B_{w,d-t}^{\mathsf{MDS}}\Big) -c_w \\
  & \quad \quad \ \ + \Big(\frac{B_{w,d-t}^{\mathsf{MDS}}-c_w a_w}{b_w-a_w}+1\Big)( b_w^\ell - a_w^\ell) + c_w a_w^\ell \bigg) \ ,
\end{align*}
with
\begin{align*}
    a_w &= \max\{1,q^{w-(d-t-1)m}\}, \\
    b_w &= q^{k_q^{\mathsf{opt.}}(w,d-t)}, \\
    c_w &= (q^m-1)^{w},
\end{align*}
where $B_{w,d-t}^{\mathsf{MDS}}$ and $B_{w,d-t,w}^{\mathsf{MDS}}$ are given in \cref{eq:BforMDS}, $k_q^{\mathsf{opt.}}(w,d-t)$ is an upper bound on the dimension of a $q$-ary code of length $w$ and minimum distance $d-t$.%
\end{theorem}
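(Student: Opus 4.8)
The plan is to start from \cref{thm:failureProbNewBoundGeneral} and bound the inner triple sum term by term. Fix a subset $\cV \subseteq \cE$ with $|\cV| = w$. First I would eliminate the multiplicities: since every alternant code occurs in $\mathbb{A}_{\va|_{\cV}}^{d-t}$ with multiplicity at least $q^m-1$ by \cref{eq:multiplicityAlternant}, we have $(\delta_{\mathbb{A}_{\va|_{\cV}}^{d-t}}^{\cA})^{-1} \leq (q^m-1)^{-1}$, so (using $\Pr \geq 0$) the inner sum is at most $(q^m-1)^{-1}\sum_{\cA \in \mathbb{A}_{\va|_{\cV}}^{d-t}}\Pr_{\E}\{(\E|_{\cV})_{i,:}\in\cA\ \forall i\}$, where the remaining sum now runs over the \emph{full} multiset. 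This reduces the task to bounding a sum of row-containment probabilities over all alternant codes of length $w$ and designed distance $d-t$ with code locators $\va|_{\cV}$.

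Next I would insert \cref{lem:Pwk}, applied with block length $w$ and dimension $k_{\cA} = \dim_q(\cA)$, for each code $\cA$. Expanding its numerator, the summed bound becomes a fixed $\Fqm$-linear combination of four sums over the multiset $\mathbb{A}_{\va|_{\cV}}^{d-t}$: the number of codes $\sum_{\cA} 1 = c_w = (q^m-1)^w$ (by \cref{eq:cardinalityAall}); the total size $\sum_{\cA} q^{k_{\cA}} = \sum_{\cA}|\cA|$; the total number of full-weight codewords $\sum_{\cA} A_w^{\cA}$; and the convex term $\sum_{\cA} q^{k_{\cA}\ell} = \sum_{\cA}(q^{k_{\cA}})^{\ell}$.

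The two linear sums are evaluated in closed form via \cref{lem:sumCardinalitiesAlternant}. The key observation is the twist identity $\GRS_{\va|_{\cV},\v}^{d-t} = \{\c\cdot\diag(\v)^{-1} : \c \in \GRS_{\va|_{\cV},\1}^{d-t}\}$; reindexing the column multipliers by $\v \mapsto \v^{-1}$ (a bijection of $(\Fqm^{\star})^w$) identifies the multiset $\mathbb{A}_{\va|_{\cV}}^{d-t}$ with the family of twisted subfield subcodes counted by $B_{w,d-t}(\GRS_{\va|_{\cV},\1}^{d-t})$. Since the base RS code is MDS, its weight distribution depends only on the parameters (cf.~\cref{thm:MDSWeightEnumerator}), so $\sum_{\cA}|\cA| = B_{w,d-t}^{\mathsf{MDS}}$ and $\sum_{\cA} A_w^{\cA} = B_{w,d-t,w}^{\mathsf{MDS}}$. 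For the convex term I would invoke \cref{lem:maximization} with $f(x) = x^{\ell}$, which is convex and non-decreasing for $\ell \geq 1$: the $c_w$ values $q^{k_{\cA}}$ lie in $[a_w,b_w]$ because \cref{lem:AlternantDimBounds} (with $n\to w$, $d\to d-t$) bounds each dimension between $\max\{w-m(d-t-1),0\}$ and $k_q^{\mathsf{opt.}}(w,d-t)$, and they sum to the already-determined $B_{w,d-t}^{\mathsf{MDS}}$; this produces exactly the claimed bound on $\sum_{\cA} q^{k_{\cA}\ell}$.

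Finally I would reassemble the four contributions, cancel the common $(q-1)$ factors, and note that the resulting per-$\cV$ bound depends only on $w$ and the fixed parameters $q,m,\ell,d,t$, not on the particular subset $\cV$. Hence summing over the $\binom{t}{w}$ choices of $\cV$ of size $w$ produces the binomial factor, and summing over $w$ from $d-t$ to $t$ yields the statement. The hard part will be the identification step in \cref{lem:sumCardinalitiesAlternant}: one must argue carefully that the reindexing $\v\mapsto\v^{-1}$ turns the multiset sum over $\mathbb{A}_{\va|_{\cV}}^{d-t}$ into $B_{w,d-t}(\GRS_{\va|_{\cV},\1}^{d-t})$, and that passing to the MDS weight enumerator legitimately removes all dependence on the specific locators $\va|_{\cV}$. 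Verifying the hypotheses of \cref{lem:maximization}, namely $b_w \geq a_w$ and $c_w a_w \leq B_{w,d-t}^{\mathsf{MDS}} \leq c_w b_w$, is a secondary but necessary technicality.
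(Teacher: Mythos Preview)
Your proposal is correct and follows essentially the same route as the paper: start from \cref{thm:failureProbNewBoundGeneral}, replace the multiplicities via \cref{eq:multiplicityAlternant}, plug in \cref{lem:Pwk}, evaluate the linear sums $\sum_{\cA} 1$, $\sum_{\cA} q^{k_{\cA}}$, $\sum_{\cA} A_w^{\cA}$ using \cref{eq:cardinalityAall} and \cref{lem:sumCardinalitiesAlternant}, and bound the convex term $\sum_{\cA} q^{k_{\cA}\ell}$ by \cref{lem:maximization} with the dimension bounds of \cref{lem:AlternantDimBounds}. Your added remarks on the $\v\mapsto\v^{-1}$ reindexing and on checking the hypotheses of \cref{lem:maximization} make explicit two points the paper leaves implicit, but do not change the argument.
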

\begin{proof}
For a $q$-ary code $\cA$ denote $ k_{\cA} \coloneqq \dim_q(\cA)$.
\newcounter{storeeqcounter}
\newcounter{tempeqcounter}
Starting from \cref{thm:failureProbNewBoundGeneral}, we obtain the derivation given on the top of %
Page~\pageref{flteq:proofOfNewBound},
\begin{figure*}[!t]
\normalsize
  \begin{align*}
  1-\Psuc(\mathcal{IC}^{(\ell)}, \mathcal{E}) &\leq  \sum_{w=d-t}^t \sum_{\substack{\cV \subseteq [\cE]\\ |\cV|=w}} \sum_{\cA \in \mathbb{A}_{\va|_{\cV}}^{d-t}} (\delta_{\mathbb{A}_{\va|_{\cV}}^{d-t}}^{\cA})^{-1} \ \underset{\E}{\Pr}\{ (\E|_{\cV})_{i,:} \in \cA \ \forall \ i\in [\ell]  \}  \\
                     &\stackrel{\mathsf{(a)}}{\leq}  \sum_{w=d-t}^t \sum_{\substack{\cV \subseteq [t]\\ |\cV|=w}} \sum_{\cA \in \mathbb{A}_{\va|_{\cV}}^{d-t}} (q^m-1)^{-1} \frac{(q-1)q^{k_\cA \ell}-(q^\ell-1)(q^{k_\cA}-1-A^\cA_w)-(q-1) }{(q-1)(q^\ell-1)^w}\nonumber\\
                     &\stackrel{\mathsf{(b)}}{=} \sum_{w=d-t}^t \sum_{\substack{\cV \subseteq [t]\\ |\cV|=w}}\frac{1}{(q^m-1)(q^\ell-1)^w} \left(\frac{(q^\ell-1)}{(q-1)}(c_w + B_{w,d-t,w}^{\mathsf{MDS}})-c_w + \left( \sum_{\cA \in \mathbb{A}_{\va|_{\cV}}^{d-t}} q^{k_\cA\ell}-\frac{(q^\ell-1)}{(q-1)}q^{k_\cA} \right)\right)\nonumber\\
                     &\stackrel{\mathsf{(c)}}{\leq} \sum_{w=d-t}^t \frac{\binom{t}{w}}{(q^m-1)(q^\ell-1)^w} \left(\frac{(q^\ell-1)}{(q-1)}(c_w + B_{w,d-t,w}^{\mathsf{MDS}})-c_w + \!\! \max_{ \substack{\cM \in \mathbb{M}_{c_w}^{[a_w,b_w]} \\ \sum_{M \in \cM} M = B^{\mathsf{MDS}}_{w,d-t}}}\! \sum_{M \in \cM} M^\ell-\frac{(q^\ell-1)}{(q-1)}M\right)\nonumber\\
                     &= \sum_{w=d-t}^t \frac{\binom{t}{w}}{(q^m-1)(q^\ell-1)^w} \left(\frac{(q^\ell-1)}{(q-1)}(c_w + B_{w,d-t,w}^{\mathsf{MDS}}- B^{\mathsf{MDS}}_{w,d-t})-c_w +\max_{ \mathcal{M} \in \mathbb{M}_{c_w,B_{w,d-t}}^{[a_w,b_w]}} \sum_{M \in \cM} M^\ell\right) \nonumber
\end{align*}
\label{flteq:proofOfNewBound}
\hrulefill
\vspace*{4pt}
\end{figure*}
where $\mathsf{(a)}$ holds by \cref{eq:multiplicityAlternant} and \cref{lem:Pwk}, $(\mathsf{b})$ holds as $\sum_{\cA \in \mathbb{A}_{\va|_{\cV}}^{d-t}}A^\cA_w=B_{w,d-t,w}^{\mathsf{MDS}}$ (see~\cref{eq:BforMDS}) and $|\mathbb{A}_{\va|_{\cV}}^{d-t}|=c_w$ (see~\cref{eq:cardinalityAall}), and $\mathsf{(c)}$ holds as $a_w$ and $b_w$ are lower and upper bounds on the cardinality of all codes $\cA \in \mathbb{A}_{\va|_{\cV}}^{d-t}$ (see~\cref{lem:AlternantDimBounds}) and because $\sum_{\cA \in \mathbb{A}_{\va|_{\cV}}^{d-t}} q^{k_\cA} = B_{w,d-t}^{\mathsf{MDS}}$ by \cref{lem:sumCardinalitiesAlternant}. The theorem statement follows by \cref{lem:maximization}.
\end{proof}

By the use of \cref{cor:PwkL01} instead of \cref{lem:Pwk} in step $\mathsf{(a)}$ we get a slightly simplified (though worse) lower bound.
\begin{corollary}\label{cor:failureProbNewBoundL0}
The probability of successful decoding $\Psuc(\code,\cE)$ as in \cref{thm:failureProbNewBoundGeneral} is lower bounded by
\begin{align*}
  \Psuc(&\mathcal{IC}^{(\ell)},\cE) \geq 1-\sum_{w=d-t}^t \frac{\binom{t}{w}}{(q^m-1)(q^\ell-1)^w} \\
  &\ \ \cdot \Big(  -c_w+\Big(\frac{B_{w,d-t}^{\mathsf{MDS}}-c_w a_w}{b_w-a_w}+1\Big)( b_w^\ell - a_w^\ell) + c_w a_w^\ell \Big)
\end{align*}
with
\begin{align*}
    a_w &= \max\{1,q^{w-(r-t)m}\} \\
    b_w &= q^{k_q^{\mathsf{opt.}}(w,d-t)} \\
    c_w &= (q^m-1)^{w},
\end{align*}
where $B_{w,d-t}^{\mathsf{MDS}}$ is given in \cref{eq:BforMDS}, $\kopt(w,d-t)$ is an upper bound on the dimension of a $q$-ary code of length $w$ and minimum distance $d-t$.%

\end{corollary}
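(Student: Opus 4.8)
The plan is to mirror the proof of \cref{thm:failureProbNewBound} essentially line by line, changing only the single step where the probability $\Pr_\E\{(\E|_{\cV})_{i,:} \in \cA\ \forall\, i\}$ is bounded. Concretely, I would restart from the bound of \cref{thm:failureProbNewBoundGeneral}, use $\delta_{\mathbb{A}_{\va|_{\cV}}^{d-t}}^{\cA} \geq q^m-1$ from \cref{eq:multiplicityAlternant} to replace each multiplicity by the reciprocal of its lower bound, and then invoke \cref{cor:PwkL01} (with $n$ set to $w$) in place of \cref{lem:Pwk}. For each subset $\cV$ of size $w$ this yields
\begin{equation*}
  \sum_{\cA \in \mathbb{A}_{\va|_{\cV}}^{d-t}} (q^m-1)^{-1}\frac{q^{k_\cA \ell}-1}{(q^\ell-1)^w} \ .
\end{equation*}

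Next I would carry out the summation over $\cA$. Since $|\mathbb{A}_{\va|_{\cV}}^{d-t}| = c_w = (q^m-1)^w$ by \cref{eq:cardinalityAall}, the constant terms contribute $-c_w$ and the only non-constant term is $\sum_\cA q^{k_\cA \ell}$, so the per-$\cV$ contribution collapses to
\begin{equation*}
  \frac{1}{(q^m-1)(q^\ell-1)^w}\Big(\sum_{\cA \in \mathbb{A}_{\va|_{\cV}}^{d-t}} q^{k_\cA \ell} - c_w\Big)\ .
\end{equation*}
This is exactly the simplification promised in the statement: relative to \cref{thm:failureProbNewBound}, the terms involving $A_w^\cA$ and the factor $\tfrac{q^\ell-1}{q-1}$ vanish, because \cref{cor:PwkL01} discards the refined lower bound on $|\cL_0|$ and merely excludes the all-zero matrix.

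The remaining, and only substantive, step is to bound $\sum_\cA q^{k_\cA \ell} = \sum_\cA (q^{k_\cA})^\ell$. Writing $M = q^{k_\cA}$, \cref{lem:AlternantDimBounds} supplies the box constraints $a_w \leq M \leq b_w$ with $a_w = \max\{1,q^{w-(d-t-1)m}\}$ and $b_w = q^{\kopt(w,d-t)}$, while \cref{lem:sumCardinalitiesAlternant} fixes the sum constraint $\sum_\cA M = B_{w,d-t}^{\mathsf{MDS}}$. Hence the quantity is at most $\max_{\cM \in \mathbb{M}_{c_w,B_{w,d-t}^{\mathsf{MDS}}}^{[a_w,b_w]}} \sum_{M \in \cM} M^\ell$, and since $f(x)=x^\ell$ is convex and non-decreasing on $[a_w,b_w]$, \cref{lem:maximization} bounds this by $\big(\tfrac{B_{w,d-t}^{\mathsf{MDS}}-c_w a_w}{b_w-a_w}+1\big)(b_w^\ell-a_w^\ell)+c_w a_w^\ell$. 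Finally, because none of these quantities depend on the particular $\cV$ (only on $w$), the inner sum over the $\binom{t}{w}$ subsets of size $w$ simply contributes a factor $\binom{t}{w}$, yielding the claimed bound.

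I expect no genuine obstacle here, since the corollary is by design a weakened shortcut of \cref{thm:failureProbNewBound}; the only care required is bookkeeping --- namely, verifying that the feasibility hypothesis $c_w a_w \leq B_{w,d-t}^{\mathsf{MDS}} \leq c_w b_w$ of \cref{lem:maximization} holds (which follows from the very cardinality bounds defining $a_w,b_w$), and confirming that dropping the $A_w^\cA$-terms can only loosen the estimate.
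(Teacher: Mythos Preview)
Your proposal is correct and is exactly the approach the paper takes: the paper's entire proof is the one-line remark that the corollary follows from the proof of \cref{thm:failureProbNewBound} by substituting \cref{cor:PwkL01} for \cref{lem:Pwk} in step~$\sfa$, which is precisely what you spell out in detail.
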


\subsection{A Lower Bound on the Success Probability for any Interleaving Order  $\ell\geq t$} \label{sec:largeEll}

For large interleaving order $\ell \geq t$, the Metzner-Kapturowski generic decoder~\cite{metzner1990general} guarantees to decode any $1\leq t\leq d-2$ errors if $\rank(\E)=t$ in an $\ell$-interleaved code with \emph{any} $[n,k,d]_q$ constituent code. The decoder has been generalized in~\cite{haslach2000efficient} for the case of rank deficiency when $2t-d+2 \leq \rank(\E)< t$. However, if the structure of the constituent code is unknown, determining the error positions in a rank-deficient error matrix $\E$ where $\rank(\E)=\mu<t$ is equivalent to finding a subset $\cU$ of columns of a parity-check matrix $\H\in\Fqm^{(d-1-\mu)\times n}$ with $\rank(\H|_{\cU})=t-\mu$. This is known to be a hard problem and no polynomial-time algorithm is known if the rank deficiency $t-\mu$ becomes large~\cite{roth2014coding}. If the code structure is given, efficient syndrome-based algorithms are proposed in~\cite{roth2014coding} and~\cite{YuLoeliger16} to correct linearly dependent error patterns with $\rank(\E)\geq 2t-d+2$ by interleaved RS codes over $\Fqm$.
These decoders also apply to the class of alternant codes over $\Fq$.
Consider an $\ell$-interleaved alternant code $\cI\code^{(\ell)}$ where $\code \in \mathbb{A}_{\va}^d$ and any set $\cE\subset [n]$ of $|\cE|=t$ error positions, where $n\coloneqq |\va|$. A lower bound on the success probability is given in~\cite[Section II.C]{roth2014coding} as
\begin{equation}
  \begin{aligned}
    \Psuc(\cI&\code^{(\ell)}, \cE)\geq 1-\Pr\{\rank(\E)< 2t-d+2\}\\
    & = 1-q^{-(\ell+d-1-2t)(d-1-t)}(1+o(1))\\
    & = 1-q^{-2(t-\frac{3(d-1)+\ell}{4})^2+\frac{(d-1-\ell)^2}{8}}(1+o(1))\ ,
    \label{eq:Pf_large_ell_Roth}
  \end{aligned}
\end{equation}
where $o(1)$ is an expression that goes to $0$ as  $q\to \infty$.

Note that though the decoder in~\cite{roth2014coding} can be applied to interleaved alternant codes, the above lower bound is an asymptotic result. For some applications of alternant codes that we are interested in, e.g., Goppa codes in McEliece system, the field size $q$ is required to be finite or rather small. Therefore, in order to be self-contained and have a general expression on the failure probability, we prove in~\cref{lem:largeEllNoFailCond} that~\cref{algo:SyndromeDecoder} in~\cref{sec:decoding_details} will always succeed in decoding linearly dependent error patterns if $\rank(\E)\geq 2t-d+2$ and we then give a lower bound in~\cref{thm:LargeEll} on the success probability for $\ell\geq t$.

\begin{lemma}\label{lem:largeEllNoFailCond}
  Assume $\ell\geq t$. Let $\mathcal{IC}^{(\ell)}$ be an $\ell$-interleaved alternant code with $\code \in \mathbb{A}^d_{\va}$ and $\cE=\{j_1,j_2,\dots,j_t\}\subset [n]$ be a set of $|\cE|=t$ error positions, where $n\coloneqq |\va|$. For a codeword $\C\in \mathcal{IC}^{(\ell)}$, an error matrix $\widetilde{\E} \in \Fq^{\ell \times n}$ with $\supp(\widetilde{\E}) \coloneqq \cE$ and $\E \coloneqq \widetilde{\E}|_{\cE} \sim \EB{q}{\ell}{t}$, and a received word $\R \coloneqq \C + \widetilde{\E}$, \cref{algo:SyndromeDecoder} \emph{succeeds}, i.e., returns $\hat \C = \C$, if
  \begin{align*}
\rank(\E)\geq 2t-d+2\ .
  \end{align*}
\end{lemma}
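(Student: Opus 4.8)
The plan is to reduce everything to the rank criterion furnished by \cref{lem:FailureCrux}: the decoder succeeds precisely when there is no nonzero $\v \in \Fqm^t$ with $\H \cdot \diag(\v) \cdot \E^\top = \0$, where $\H$ is a parity-check matrix of the MDS code $\GRS_{\va|_\cE,\1}^{d-t}$ of length $t$ and dimension $2t-d+1$. It therefore suffices to prove the contrapositive: if such a nonzero $\v$ exists, then $\rank(\E) \leq 2t-d+1$. First I would reinterpret the condition columnwise. Writing $\mathcal{D} := \GRS_{\va|_\cE,\1}^{d-t}$ for the right kernel of $\H$, the equation $\H \cdot \diag(\v) \cdot \E^\top = \0$ says exactly that every column of $\diag(\v)\cdot\E^\top$ is a codeword of $\mathcal{D}$; equivalently, the scaled row $\diag(\v)\cdot \E_{b,:}^\top$ lies in $\mathcal{D}$ for each $b \in [\ell]$.

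The core of the argument handles the support of $\v$ carefully, since $\diag(\v)$ fails to be invertible once $\v$ has zero entries. Let $\mathcal{W} := \supp(\v)$ and $s := |\mathcal{W}| \geq 1$. All the vectors $\diag(\v)\cdot\E_{b,:}^\top$ are supported on $\mathcal{W}$ and lie in $\mathcal{D}$, hence lie in the code $\mathcal{D}'$ obtained by shortening $\mathcal{D}$ at the $t-s$ positions outside $\mathcal{W}$. Because $\mathcal{D}$ is MDS, shortening reduces the dimension by exactly the number of shortened coordinates, so $\dim_{q^m}\mathcal{D}' = \max\{0,\, 2t-d+1-(t-s)\} = \max\{0,\, t-d+1+s\}$. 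Since $\diag(\v|_{\mathcal{W}})$ is an invertible $s\times s$ matrix, multiplying it out does not change the dimension of the $\Fqm$-span of the restricted rows, so $\rank(\E|_{\mathcal{W}}) = \dim_{q^m}\langle \E_{b,:}|_{\mathcal{W}} : b\in[\ell]\rangle \leq \dim_{q^m}\mathcal{D}'$.

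From here the bound on $\rank(\E)$ follows in two steps, and the key point is that the dependence on $s$ cancels. If $t-d+1+s\leq 0$ then $\rank(\E|_{\mathcal{W}})=0$, i.e.\ $\E|_{\mathcal{W}}=\0$; but $\E \in \EB{q}{\ell}{t}$ has no all-zero column and $s\geq 1$, a contradiction, so in fact $s\geq d-t$ and $\dim_{q^m}\mathcal{D}' = t-d+1+s$. Then, since deleting the $t-s$ columns outside $\mathcal{W}$ lowers the rank by at most $t-s$, I obtain $\rank(\E)\leq \rank(\E|_{\mathcal{W}})+(t-s)\leq (t-d+1+s)+(t-s)=2t-d+1$, which no longer depends on $s$. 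This contradicts $\rank(\E)\geq 2t-d+2$, so no nonzero $\v$ exists and the decoder succeeds. I expect the main obstacle to be exactly this support bookkeeping: pinning down the shortened-code dimension and arguing that the column-deletion step restores precisely the $t-s$ dimensions removed by shortening, so that the two $s$-terms cancel. The hypothesis that $\E$ has no all-zero column is what rules out the degenerate small-support case and makes the whole argument go through.
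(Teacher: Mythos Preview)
Your proof is correct and follows essentially the same route as the paper: both invoke \cref{lem:FailureCrux}, restrict to the support of the vector $\v$, use the MDS property to bound the dimension of the code that the restricted rows of $\E$ must lie in, and combine this with the rank inequality $\rank(\E)\leq \rank(\E|_{\mathcal W})+(t-s)$ (the paper writes the equivalent $\rank(\bar\E)\geq \rank(\E)-(t-w)$ and argues by direct contradiction rather than contrapositive). Your treatment of the small-support case via the shortened code having dimension zero is a mild rephrasing of the paper's observation that any $d-t-1$ columns of $\H$ are independent, but the substance is identical.
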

\begin{proof}
  Recall from~\cref{lem:FailureCrux} that the decoding does not succeed if and only if
  \begin{equation*}
    \exists \v \in \Fqm^t\setminus\{\0\} \ \text{such that}\ \H \cdot \diag(\v) \cdot \E^\top = \0 \ ,
  \end{equation*}
  where $\H\cdot \diag(\v)$ is the parity check matrix of the code $\GRS_{\va|_{\cE},\v}^{d-t}$.

  We will show that this condition can not be fulfilled \emph{if} $\rank(\E)\geq 2t-d+2$.

  Assume $\rank{\E} \geq 2t-d+2$ and denote $\wt(\v)=w\leq t$. If $w<d-t$, it can readily be seen that no $\E\neq \0$ exists such that $\H \cdot \diag(\v) \cdot \E^\top = \0$ since any $d-t-1$ columns of $\H$ are linearly independent. Now we consider the case $w\geq d-t$. Denote $\bar{\H} = \H|_{\supp(\v)}$, $\bar{\v}=\v|_{\supp(\v)}$, and $\bar{\E}= \E|_{\supp(\v)}$ the respective restrictions to the support of $\v$. Observe the equivalence
\begin{equation}\label{eq:failConditionLargeEll}
  \H|_{\cE} \cdot \diag(\v) \cdot \E^\top = \0 \quad \Leftrightarrow \quad \bar{\H} \cdot \diag(\bar{\v}) \cdot \bar{\E}^\top = \0 \ .
\end{equation}
Note that
\begin{align*}
  \rank(\bar{\E}) &\geq \rank(\E) - (t-w) \\
  &\geq 2t-d+2-(t-w) = w-(d-t)+2 \ .
\end{align*}
and $\bar{\H} \cdot \diag(\bar{\v})$ is the parity check matrix of a $[w,w-(d-t)+1,d-t]_{q^m}$ GRS code.

By definition of the parity check matrix, all rows of $\bar{\E}$ have to be codewords of this GRS code for~\cref{eq:failConditionLargeEll} to be fulfilled. In other words, the code spanned by $\bar{\E}$ needs to be a subcode of this GRS code, i.e., $\myspan{\bar{\E}} \subseteq \myspan{\bar{\H} \cdot \diag(\bar{\v})}^\perp$. This is a contradiction, as by assumption $\dim(\myspan{\bar{\E}}) = \rank(\bar{\E}) \geq w-(d-t)+2$.

\end{proof}

\begin{theorem}\label{thm:LargeEll}
  Assume $\ell\geq t$. Let $\mathcal{IC}^{(\ell)}$ be an $\ell$-interleaved alternant code with $\code \in \mathbb{A}^d_{\va}$ and $\cE=\{j_1,j_2,\dots,j_t\}\subset [n]$ be a set of $|\cE|=t$ error positions, where $n\coloneqq |\va|$. For a codeword $\C\in \mathcal{IC}^{(\ell)}$, an error matrix $\widetilde{\E} \in \Fq^{\ell \times n}$ with $\supp(\widetilde{\E}) \coloneqq \cE$ and $\E \coloneqq \widetilde{\E}|_{\cE} \in \EB{q}{\ell}{t}$, and a received word $\R \coloneqq \C + \widetilde{\E}$, \cref{algo:SyndromeDecoder} \emph{succeeds}, i.e., returns $\hat \C = \C$, with probability
  \begin{align*}
    \Psuc(\cI\code^{(\ell)},\cE)&\geq %
                       \frac{\sum\limits_{s=2t-d+2}^{t}N(\ell,t,s)}{(q^\ell-1)^t}\ ,
  \end{align*}
  where
  \begin{align*}
    N(\ell,t,s)\coloneqq& \, |\{\E \in \EB{q}{\ell}{t} \ | \ \rank(\E)=s\}|\\
    =& \, \sum_{j=0}^{t-s}(-1)^j\binom{t}{j}\prod_{i=0}^{s-1}\frac{(q^\ell-q^i)(q^{t-j}-q^i)}{q^s-q^i}\ .
  \end{align*}
\end{theorem}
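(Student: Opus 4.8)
The plan is to combine the sufficient condition from \cref{lem:largeEllNoFailCond} with a direct enumeration of error matrices by rank. Since \cref{lem:largeEllNoFailCond} guarantees that \cref{algo:SyndromeDecoder} succeeds whenever $\rank(\E)\geq 2t-d+2$, the success probability is lower bounded by the probability that a uniformly random $\E\sim\EB{q}{\ell}{t}$ has rank at least $2t-d+2$. As $\E$ is drawn uniformly from $\EB{q}{\ell}{t}$, which has cardinality $(q^\ell-1)^t$ by the computation in \cref{lem:Pwk}, this probability is exactly
\begin{align*}
  \Pr_{\E}\{\rank(\E)\geq 2t-d+2\} = \frac{1}{(q^\ell-1)^t}\sum_{s=2t-d+2}^{t} N(\ell,t,s)\ ,
\end{align*}
where $N(\ell,t,s)$ counts the matrices in $\EB{q}{\ell}{t}$ of rank exactly $s$. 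This immediately gives the claimed bound, so the entire content of the proof reduces to establishing the closed-form expression for $N(\ell,t,s)$.

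To count $N(\ell,t,s)$, I would first recall the standard formula for the number of $\ell\times t$ matrices over $\Fq$ of rank exactly $s$, which is $\binom{t}{s}_q\prod_{i=0}^{s-1}(q^\ell-q^i)$ where $\binom{t}{s}_q$ is the Gaussian binomial coefficient; equivalently, writing the Gaussian binomial out, this equals $\prod_{i=0}^{s-1}\frac{(q^\ell-q^i)(q^t-q^i)}{q^s-q^i}$. The subtlety is that $\EB{q}{\ell}{t}$ excludes matrices having at least one all-zero column, so I cannot use this formula directly. The cleanest route is inclusion–exclusion over the set of columns forced to be zero: for a fixed subset of $j$ columns required to vanish, the remaining $t-j$ columns form an arbitrary $\ell\times(t-j)$ matrix of rank $s$, of which there are $\prod_{i=0}^{s-1}\frac{(q^\ell-q^i)(q^{t-j}-q^i)}{q^s-q^i}$ (the rank-$s$ count with $t$ replaced by $t-j$). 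Summing with signs $(-1)^j$ and multiplicity $\binom{t}{j}$ then yields precisely the stated formula.

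The main obstacle will be justifying the inclusion–exclusion cleanly. I would define, for a subset $\cZ\subseteq[t]$ of column indices, the set of rank-$s$ matrices in $\Fq^{\ell\times t}$ whose nonzero columns are supported away from some forbidden pattern, and argue that the matrices with \emph{no} all-zero column are obtained by subtracting off those with a prescribed zero column via the standard alternating sum. Concretely, letting $M(r,s)=\prod_{i=0}^{s-1}\frac{(q^\ell-q^i)(q^{r}-q^i)}{q^s-q^i}$ denote the number of rank-$s$ matrices in $\Fq^{\ell\times r}$, the count of such matrices with \emph{no} all-zero column equals $\sum_{j=0}^{t-s}(-1)^j\binom{t}{j}M(t-j,s)$, since fixing $j$ columns to zero and counting rank-$s$ matrices on the remaining $t-j$ columns gives $M(t-j,s)$, and a rank-$s$ matrix needs at least $s$ nonzero columns so the sum terminates at $j=t-s$. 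One should double-check the edge behaviour (for instance that $M(r,s)=0$ whenever $r<s$, which makes the upper summation limit consistent) and that the $s=0$ term does not interfere, but since the relevant range is $s\geq 2t-d+2\geq 1$ for the error weights under consideration this causes no difficulty. Matching this with the asserted expression for $N(\ell,t,s)$ completes the proof.
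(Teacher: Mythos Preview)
Your proposal is correct and follows essentially the same approach as the paper: invoke \cref{lem:largeEllNoFailCond} to reduce the bound to counting matrices in $\EB{q}{\ell}{t}$ of rank at least $2t-d+2$, then obtain $N(\ell,t,s)$ from the classical rank-count $M(\ell,t,s)$ via inclusion--exclusion over the set of forced zero columns. The paper's proof is exactly this, with the same termination of the sum at $j=t-s$.
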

\begin{proof}

  By~\cref{lem:largeEllNoFailCond}, it can be readily seen that the success probability is bounded from below by
   \begin{align*}
     \Psuc(\cI\code^{(\ell)},\cE)& \geq \frac{|\{\E\in\EB{q}{\ell}{t} \ | \ \rank(\E)\geq 2t-d+2 \}|}{|\EB{q}{\ell}{t}|}\\
     & = \frac{\sum\limits_{s=2t-d+2}^{t}|\{\E\in\EB{q}{\ell}{t} \ | \ \rank(\E)=s\}|}{(q^\ell-1)^t}\ .
   \end{align*}

   It remains to determine
   \begin{align*}
     N(\ell,t,s)=|\{\E\in\EB{q}{\ell}{t}:\rank(\E)=s\}| \ ,
   \end{align*}
   i.e., the number of matrices of $\F_q^{\ell \times t}$ without any all-zero columns and of a given rank. The number of matrices, including those \emph{with} all-zero columns, of certain rank is given by~\cite{Landsberg1893}\cite[Theorem~2]{FisherAlex1966} %
   \begin{align*}
     M(\ell,t,s) &\coloneqq |\{\E\in\Fq^{\ell\times t} \ | \ \rank(\E)=s\}|\\
                   &=\prod_{i=0}^{s-1}\frac{(q^\ell-q^i)(q^t-q^i)}{q^s-q^i}\ .
   \end{align*}
   To obtain $N(\ell, t,s)$, we need to exclude the matrices with all-zero columns from $M(\ell,t,s)$.
   By the inclusion-exclusion principle, we have
   \begin{align*}
     N(\ell,t,s) &= \sum_{j=0}^{t-s}(-1)^j\binom{t}{j}M(\ell,t-j,s)\ .
   \end{align*}

\end{proof}
\begin{remark}
  This bound is not tight, since even if $\rank(\E)< 2t-d+2$, it is still possible that not every row of $\E$ is in the alternant code $\GRS_{\va|_{\cE},\v'}^{d-t}\cap \Fq$ for any $\v'\in\Fqm^t\setminus\{\0\}$. In other words,~\cref{lem:largeEllNoFailCond} is only a sufficient condition for a successful decoding.
\end{remark}

\section{An Upper Bound on the Probability of Successful Decoding}\label{sec:lowerBound}

For interleaved GRS codes it is known \cite{schmidt2009collaborative} that the probability of a decoding failure, and by \cref{lem:FailRankCond} also the probability of unsuccessful decoding, decreases exponentially in the difference between the number of errors and the maximal decoding radius of \cref{eq:tmaxGRS}. While the numerical results show that this probability is larger for interleaved alternant codes, it nevertheless quickly drops to values out of range for simulation. To evaluate the performance of the lower bounds of \cref{sec:upperBounds}, we derive an upper bound on the probability of a decoding success, by showing that for a certain set of error matrices the decoder given in \cref{algo:SyndromeDecoder} is \emph{never} successful and then analyzing its cardinality.

We begin with a technical statement on the cardinality of the set of these ``bad'' matrices.

\begin{lemma}\label{lem:rankOneSubmatrices}
  Denote by $\Ebbad$ the set of matrices $\E \in \EB{q}{\ell}{t}$ for which there exists a subset of at least $w$ linearly dependent columns. Then
  \begin{align*}
    \max_{w\leq \xi \leq t} \{Z^\xi\} &\leq |\Ebbad| \leq (t-w+1) \max_{w\leq \xi\leq t} \{Z^\xi\} \\
    Z^\xi &\coloneqq \sum_{j=1}^{\floor{\frac{t}{\ell}}} (-1)^{j-1} \binom{\frac{q^\ell-1}{q-1}}{j} D^\xi_{j} \\
    D^\xi_{j} &\coloneqq \left( \prod_{z=0}^{j-1} \binom{t-z\xi}{\xi}\right) (q-1)^{j\xi} (q^\ell - q^{j})^{t-j \xi} \ .
  \end{align*}
\end{lemma}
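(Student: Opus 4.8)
The plan is to reduce both inequalities to a single counting identity, $Z^\xi = |\cS_\xi|$, for a suitable stratification $\{\cS_\xi\}$ of $\Ebbad$, and then to prove that identity by a sieve over the one-dimensional subspaces of $\Fq^\ell$. I read the ``$w$ linearly dependent columns'' as a rank-one set of columns, consistent with the lemma's title and with the shape of $D^\xi_j$: call two nonzero columns \emph{collinear} if they span the same $1$-dimensional $\Fq$-subspace of $\Fq^\ell$, and for $\E\in\EB{q}{\ell}{t}$ let $\mu(\E)$ be the maximal number of pairwise-collinear columns of $\E$. Since a set of $w\ge 2$ pairwise-collinear columns is linearly dependent, $\Ebbad$ consists exactly of the matrices with $\mu(\E)\ge w$. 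Setting $\cS_\xi := \{\E\in\EB{q}{\ell}{t} : \mu(\E)=\xi\}$, these strata are disjoint and $\Ebbad = \bigsqcup_{\xi=w}^{t}\cS_\xi$. Granting $Z^\xi=|\cS_\xi|$, both bounds are then immediate: the lower bound says one summand is at most the whole sum, $\max_{w\le\xi\le t} Z^\xi = \max_\xi|\cS_\xi|\le\sum_{\xi=w}^t|\cS_\xi| = |\Ebbad|$, while the upper bound says a sum of the $t-w+1$ nonnegative numbers $|\cS_\xi|$ is at most $t-w+1$ times the largest, so $|\Ebbad| \le (t-w+1)\max_\xi Z^\xi$.

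The core task is therefore to establish $Z^\xi = |\cS_\xi|$, which I would do by inclusion--exclusion over the $N:=\tfrac{q^\ell-1}{q-1}$ one-dimensional subspaces (``directions'') of $\Fq^\ell$; each nonzero column of $\E$ lies in a unique direction. First I would count the matrices of $\cS_\xi$ according to the number $j$ of \emph{saturated} directions, those carrying a full block of $\xi$ columns. Fixing $j$ directions, one selects the $j$ disjoint $\xi$-blocks of columns ($\prod_{z=0}^{j-1}\binom{t-z\xi}{\xi}$ ways), replaces each of the $j\xi$ block entries by a nonzero multiple of the relevant line generator ($(q-1)^{j\xi}$ ways), and places each of the remaining $t-j\xi$ columns off the joint span of the $j$ directions, of size $q^{j}$ for independent directions ($(q^\ell-q^{j})^{t-j\xi}$ ways); summed over the $\binom{N}{j}$ choices of directions this is $\binom{N}{j}D^\xi_j$. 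The alternating factor $(-1)^{j-1}$ is the inclusion--exclusion correction for matrices having several saturated directions, so that the signed sum is intended to return each matrix of $\cS_\xi$ exactly once.

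The hard part will be making this sieve exact rather than merely an estimate. The factor $(q^\ell-q^{j})^{t-j\xi}$ counts the off-span placements correctly only when the $j$ chosen directions are linearly independent, whereas $\binom{N}{j}$ ranges over all $j$-subsets of directions; the delicate step is to show that, within the summation range $1\le j\le\floor{\frac{t}{\ell}}$, the contributions of the linearly dependent $j$-subsets together with the alternating signs cancel, so that the sieve collapses exactly to $|\cS_\xi|$. In particular one must justify that this range captures every nonvanishing term (the disjoint blocks force $j\xi\le t$, while independence of directions forces $j\le\ell$) and that no spurious terms survive. Once this bookkeeping is settled the two displayed bounds of the first paragraph conclude the argument, the block counts, the scalar counts, and the final averaging step all being routine.
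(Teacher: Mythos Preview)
Your reading of ``$w$ linearly dependent columns'' as rank one (pairwise collinear) matches the paper, but the stratification you build on it is not the one that $Z^\xi$ encodes, and the core identity $Z^\xi=|\cS_\xi|$ fails. The paper does not partition $\Ebbad$ by the \emph{maximum} collinear-multiplicity $\mu(\E)$; it covers $\Ebbad$ by the overlapping sets
\[
\cZ^\xi \;=\; \bigl\{\E\in\EB{q}{\ell}{t}\;:\;\exists\,\e\in\Fq^\ell\setminus\{\0\}\text{ with }\delta^{\e}_{\E}=\xi\bigr\},
\]
where $\delta^{\e}_{\E}$ counts the columns of $\E$ collinear to $\e$. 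Then $\Ebbad=\bigcup_{\xi=w}^{t}\cZ^\xi$, and the two bounds drop out immediately: each $\cZ^\xi\subseteq\Ebbad$ gives the lower bound, while $|\Ebbad|\le\sum_{\xi=w}^{t}|\cZ^\xi|\le(t-w+1)\max_\xi|\cZ^\xi|$ gives the upper. The alternating sum defining $Z^\xi$ is precisely the inclusion--exclusion for $\bigl|\bigcup_{\e}A_{\e}\bigr|$ with $A_{\e}=\{\delta^{\e}_{\E}=\xi\}$, i.e.\ $Z^\xi=|\cZ^\xi|$. Your identity cannot hold in general: a matrix with exactly $\xi$ columns along one line and $\xi+1$ along another lies in $\cZ^\xi$ (and in $\cZ^{\xi+1}$) but sits in $\cS_{\xi+1}$, not in $\cS_\xi$, so $|\cS_\xi|\neq|\cZ^\xi|$.

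Switching to $\cZ^\xi$ also dissolves your ``hard part''. In the standard inclusion--exclusion, $D^\xi_j$ is meant to count the matrices for which each of $j$ specified directions carries exactly $\xi$ columns: choose the $j$ disjoint $\xi$-blocks, their nonzero scalars, and then the remaining $t-j\xi$ columns from the nonzero vectors outside the \emph{union} of the $j$ lines. That last count is $(q^\ell-1-j(q-1))^{t-j\xi}$ and does not depend on whether the $j$ directions are linearly independent, so there is nothing to cancel over dependent $j$-tuples. (The displayed factor $(q^\ell-q^j)^{t-j\xi}$ agrees with this only at $j=1$; the discrepancy for $j\ge2$ is a wrinkle in the stated formula rather than in the method, and does not touch the dominant term.)
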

\begin{proof}
  Consider the equivalence relation $\equiv_q$ on $\Fq^\ell \setminus \{\0\}$ defined by $\v \equiv_{q} \ve{u}$ if  there exists $\lambda \in \Fq^\star$ such that $\v = \lambda \ve{u}$.
  For a fixed vector $\e \in \Fq^\ell\setminus \{\0\}$ and a matrix $\E\in \EB{q}{\ell}{w}$ denote $\delta^\e_{\E} = |\{ i \ | \ E_{:,i} \equiv_q \e\}|$, i.e., the multiplicity of $\e$ among the multiset of columns of $\E$ under the given equivalence relation. For a set of representatives $\cS \subset \Fq^\ell\setminus \{\0\}$ under the given equivalence relation, we have
  \begin{align*}
    D^\xi_{|\cS|} &\coloneqq |\{ \E \in \EB{q}{\ell}{w} \ | \ \delta^{\e}_\E = \xi \  \forall \ \e \in \cS \}| \\
    &= \left( \prod_{z=0}^{|\cS|-1} \binom{t-z\xi}{\xi}\right) (q-1)^{|\cS|\xi} (q^\ell - q^{|\cS|})^{t-|\cS| \xi} \ ,
  \end{align*}
  where the first term accounts for the positions of the vectors of $\cS$ in $\E$, the second term is the number of choices for the scalar coefficients of these positions, and the third term is the number of choices for the remaining columns, namely any non-zero vector that is not equivalent to any element of $\cS$.
  By the principle of inclusion-exclusion we get
  \begin{align*}
    \cZ^\xi &\coloneqq \{ \E \in \EB{q}{\ell}{w} \ | \ \exists \e \in \Fq^\ell \setminus \{\0\} : \delta^{\e}_\E = \xi  \} \\
    Z^\xi &\coloneqq |\cZ^\xi| = \sum_{j=1}^{\floor{\frac{t}{\xi}}} (-1)^{j-1} \binom{\frac{q^{\ell}-1}{q-1}}{j} D^\xi_{j} \ .
  \end{align*}
  The lemma statement follows from the observation that
  \begin{align*}
     \Ebbad = \bigcup_{j=w}^{t} \cZ^{j} \ .
  \end{align*}
\end{proof}

Using the lower bound on the cardinality of $\Ebbad$, we now derive an upper bound on the probability of successful decoding, by showing that the decoder never succeeds if the error matrix is in this set.

\begin{theorem}[Upper Bound on $\Psuc$]\label{thm:lowerBound}
Let $\mathcal{IC}^{(\ell)}$ be an $\ell$-interleaved alternant code with $\code \in \mathbb{A}^d_{\va}$ and $\cE=\{j_1,j_2,\dots,j_t\}\subset [n]$ be a set of $|\cE|=t$ error positions, where $n\coloneqq |\va|$. For a codeword $\C\in \mathcal{IC}^{(\ell)}$, an error matrix $\widetilde{\E} \in \Fq^{\ell \times n}$ with $\supp(\widetilde{\E}) \coloneqq \cE$ and $\E \coloneqq \widetilde{\E}|_{\cE} \sim \EB{q}{\ell}{t}$, and a received word $\R \coloneqq \C + \widetilde{\E}$ \cref{algo:SyndromeDecoder} \emph{succeeds}, i.e., returns $\hat \C = \C$, with probability
\begin{align*}
  \Psuc(\mathcal{IC}^{(\ell)},\cE) &\leq  1-\frac{\max_{d-t \leq \xi \leq t} \{Z^\xi\} }{(q^\ell-1)^t} \ ,
  \end{align*}
  where $Z^\xi$ is given in \cref{lem:rankOneSubmatrices}.
\end{theorem}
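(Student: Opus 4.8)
The plan is to exhibit an explicit family of error matrices on which \cref{algo:SyndromeDecoder} can never succeed, and then to lower bound the size of that family. The natural candidate is the set $\Ebbad$ of \cref{lem:rankOneSubmatrices} taken with $w = d-t$: those $\E \in \EB{q}{\ell}{t}$ possessing a column-index set $\cU$ with $|\cU| = \xi \geq d-t$, all of whose columns are $\Fq$-scalar multiples of a single nonzero vector $\e \in \Fq^\ell$. Granting that the decoder fails on every such $\E$, \cref{lem:FailureCrux} turns the problem into counting: since $\Ebbad$ is contained in the set of error patterns leading to unsuccessful decoding, at most $(q^\ell-1)^t - |\Ebbad|$ of the $(q^\ell-1)^t$ equiprobable patterns can be decoded successfully, so $\Psuc(\mathcal{IC}^{(\ell)},\cE) \leq 1 - |\Ebbad|/(q^\ell-1)^t$. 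Feeding in the bound $|\Ebbad| \geq \max_{d-t \leq \xi \leq t}\{Z^\xi\}$ supplied by \cref{lem:rankOneSubmatrices} then yields the claimed inequality.

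The crux, and the step I expect to require the most care, is showing that every $\E \in \Ebbad$ triggers failure. By \cref{lem:FailureCrux} it suffices to produce a nonzero $\v \in \Fqm^t$ with $\H \cdot \diag(\v) \cdot \E^\top = \0$, where $\H$ is the $(d-t-1)\times t$ parity-check matrix of $\GRS_{\va|_\cE,\1}^{d-t}$. Writing the parallel columns as $\E_{:,j} = \lambda_j \e$ for $j \in \cU$ with $\lambda_j \in \Fq^\star$, I would search for $\v$ supported on $\cU$. For any such $\v$ a direct computation shows that $\diag(\v)\cdot\E^\top = \ve{c}\cdot\e^\top$ collapses to a rank-one outer product, where $\ve{c} \in \Fqm^t$ is supported on $\cU$ with $c_j = v_j\lambda_j$; hence $\H \cdot \diag(\v)\cdot \E^\top = (\H\ve{c})\e^\top$, which vanishes (as $\e \neq \0$) exactly when $\H\ve{c} = \0$.

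It then remains to find a nonzero $\ve{c}$ supported on $\cU$ with $\H\ve{c}=\0$, equivalently a nonzero kernel vector of $\H|_\cU$. Here the dimension bookkeeping closes the argument: $\H|_\cU$ is a $(d-t-1)\times\xi$ matrix with $\xi \geq d-t > d-t-1$, so its kernel has dimension at least $\xi - (d-t-1) \geq 1$ and is therefore nonzero. Recovering $\v$ by setting $v_j = c_j/\lambda_j$ for $j \in \cU$ (well-defined since $\lambda_j \neq 0$) and $v_j = 0$ otherwise produces the required nonzero $\v$, completing the reduction and certifying unsuccessful decoding for all of $\Ebbad$.

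I expect no further obstacles beyond the construction of $\v$; the counting is entirely delegated to \cref{lem:rankOneSubmatrices}. The one point worth verifying carefully is that the kernel vector $\ve{c}$, and hence $\v$, naturally lives over $\Fqm$ rather than merely $\Fq$, which is precisely the ambient field permitted by the failure condition of \cref{lem:FailureCrux}, so no field-compatibility issue arises. It is also worth noting that the argument genuinely exploits the \emph{rank-one} (common-direction) structure of the $\cU$-columns rather than mere linear dependence, since it is exactly this structure that forces $\diag(\v)\cdot\E^\top$ to factor as $\ve{c}\e^\top$; this matches the quantity $\Ebbad$ actually counted in \cref{lem:rankOneSubmatrices}.
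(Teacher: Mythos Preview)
Your proposal is correct and follows essentially the same approach as the paper. Both arguments identify the set $\Ebbad$ of matrices with $\geq d-t$ columns lying in a single $\Fq$-direction, invoke \cref{lem:FailureCrux} to show every such matrix causes non-success, and then apply the lower bound on $|\Ebbad|$ from \cref{lem:rankOneSubmatrices}; your construction of the witness $\v$ via the rank-one factorization $\diag(\v)\E^\top = \ve{c}\e^\top$ and a direct kernel-dimension count on $\H|_{\cU}$ is a slightly more streamlined variant of the paper's argument, which instead restricts to $|\cL|=d-t$ exactly and obtains $\v$ by varying the column multipliers of the $[d-t,1,d-t]$ GRS code so that its one-dimensional kernel aligns with the common direction $\e$.
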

\begin{proof}
First observe that each summand in \cref{eq:ieq1} gives a lower bound on the probability of unsuccessful decoding. Therefore, the fraction of matrices $\E \in \EB{q}{\ell}{w}$ that fulfills
\begin{equation}\label{eq:failConditionW}
  \exists \v \in \Fqm^t \ \text{with} \ \wt(\v) = d-t\ \text{such that}\ \H \cdot \diag(\v) \cdot \E^\top = \0 \ ,
\end{equation}
where $\H\in\Fqm^{(d-t-1)\times t}$ denotes the parity-check matrix of the code $\GRS_{\va|_\cE, \boldsymbol{1}}^{d-t}$, gives a lower bound on the probability of unsuccessful decoding $1-\Psuc(\mathcal{IC}^{(\ell)},\cE)$. We denote by $\Ebbad \subset \EB{q}{\ell}{t}$ the set of matrices $\E \in \EB{q}{\ell}{t}$ that fulfills \cref{eq:failConditionW} and show that any error matrix $\E \in \EB{q}{\ell}{t}$ for which there exists a subset $\cL \subset [t]$ of at least $d-t$ columns such that $\rank(\E|_{\cL}) = 1$ fulfills \cref{eq:failConditionW} and is therefore in $\Ebbad$.

Let $\cL \subset [t]$ be a set of size $|\cL| = d-t$ and $\v\in \Fqm^t$ be a vector with $\supp(\v) = \cL$. Denote by $\bar{\H} = \H|_{\cL}$, $\bar{\va} =(\va|_\cE)|_\cL$, $\bar{\v}=\v|_{\cL}$, and $\bar{\E}= \E|_{\cL}$ the respective restrictions to the support $\cL$ of $\v$. Observe the equivalence
\begin{equation}\label{eq:equivalenceSupp}
  \H \cdot \diag(\v) \cdot \E = \0 \quad \Leftrightarrow \quad \bar{\H} \cdot \diag(\bar{\v}) \cdot \bar{\E} = \0 \ .
\end{equation}
Recall that $\E$ has no all-zero columns by definition. As $\bar{\H}\cdot \diag(\bar{\v}) \in \Fqm^{(d-t-1) \times d-t}$ is the parity check matrix of a GRS code, it is of full-rank $d-t-1$ and the dimension of its right kernel is exactly $1$. We conclude that for any $\E \in \EB{q}{\ell}{t}$ that fulfills \cref{eq:equivalenceSupp} there \emph{necessarily} exists a subset $\cL$ of $d-t$ columns such that $\rank(\E|_{\cL}) = 1$.

To show that this is also sufficient, first note that all rows $\bar{\E}_{i,:}$ of this rank $1$ matrix are scalar multiples of some vector $\e\in (\Fq^{\star})^{d-t}$, where at least one scalar is non-zero (recall that $\E$ does not have any all-zero columns). For any fixed $\v \in \Fqm^t$ with $\supp(\v)=\cL$, the matrix $\bar{\H}$ is the parity-check matrix of a $[d-t,1,d-t]$ GRS code, and therefore the $\Fqm$-kernel of $\bar{\H}$ consists of the $\Fqm$-scalar multiples of one vector $\e' \in (\Fq^\star)^{d-t}$. Further, as $\v$ can be any vector of support $\supp(\v)=\cL$, there exists a $\v$ such that $\bar{\H} \cdot \e' = \0$ \emph{for any} $\e' \in (\Fqm^\star)^{d-t}$, and, in particular, for any $\e \in (\Fq^\star)^{d-t}$. It follows that there exists a $\v$ such that \cref{eq:equivalenceSupp} is fulfilled and we conclude that the condition is also \emph{sufficient}.

  A set $\cL \subset [t]$ with $|\cL|=d-t$ such that $\rank(\E|_{\cL})=1$ exists if and only if a subset of $d-t\leq \xi \leq t$ columns in $\E$ are equivalent. Thus, by~\cref{lem:rankOneSubmatrices}, the probability of successful decoding is bounded from above by
  \begin{align*}
    \Psuc(\mathcal{IC}^{(\ell)},\cE) \leq 1- \frac{|\Ebbad|}{|\EB{q}{\ell}{w}|} &\leq 1- \frac{\max_{d-t \leq \xi \leq t} \{Z^\xi\} }{(q^\ell-1)^t} \ .
  \end{align*}
\end{proof}

\section{Discussion and Numerical Results}\label{sec:numericalResults}

\begin{table}
  \centering
  \caption{Overview of the bounds shown in \cref{fig:plots1,fig:plots2}}
  \begin{tabularx}{\linewidth}{llX}%
    \textbf{Label} & \textbf{Defined in} & \textbf{Description} \\ \hline \\
    $\labelRS$ & \cref{thm:boundIRS} & Lower bound on the probability of successful decoding for interleaved RS codes\\[.5em]
    $\labelMain$ & \cref{thm:failureProbNewBound} & Lower bound on the probability of successful decoding for interleaved alternant codes where the minimum of the Singleton, Griesmer, Hamming, Plotkin, Elias, and Linear Programming bound is used for $\kopt$. \\
    $\labelSingleton$ & \cref{thm:failureProbNewBound} & Lower bound on the probability of successful decoding for interleaved alternant codes, where the Singleton bound is used for $\kopt$. \\
    $\labelLz$ & \cref{cor:failureProbNewBoundL0} & Simplified version of \cref{thm:failureProbNewBound}. The minimum of the Singleton, Griesmer, Hamming, Plotkin, Elias, and Linear Programming bound is used for $\kopt$. \\[.5em]
    $\labelLarge$ & \cref{thm:LargeEll} & Lower bound on the probability of successful decoding for interleaved alternant codes with $\ell \geq t$ \\[.5em]
    $\labelMisc$ & \cref{col:miscorrectionAlternant} & Upper bound on the probability of a miscorrection for interleaved alternant codes. We assume that the decoding radius of the interleaved decoder is $\floor{\frac{\ell}{\ell+1}(d-1)}$, i.e., the largest number of errors for which the \emph{RS interleaved decoder}, given in \cref{algo:SyndromeDecoder}, would succeed (see~\cref{rem:alternantMaxRadius}).\\[.5em]
    $\labelLower$ & \cref{thm:lowerBound} & Upper bound on the probability of successful decoding for interleaved alternant codes.\\[.5em]
    $\labelSim$ & \cref{rem:alternantMaxRadius} & Threshold number of errors such that for all numbers of errors left of the indicated line, the interleaved alternant decoder succeeds with a probability of $\Psuc > 0.9$ obtained by simulation with $100$ decoding iterations per parameter set.
  \end{tabularx}
  \label{tab:boundsInPlots}
\end{table}

In \cref{sec:upperBounds,sec:lowerBound} we have established lower and upper bounds on the probability of successful decoding
\begin{align*}
  \Psuc = 1-\Pfail-\Pmisc
\end{align*}
for the interleaved decoding algorithm of \cite{FengTzeng1991ColaDec,schmidt2009collaborative} when applied to interleaved alternant codes for uniformly distributed errors of a given weight. In the following we present and discuss some numerical results, where we compare these upper and lower bounds\footnote{For better presentation, we plot the respective bounds on the probability of \emph{unsuccessful} decoding $1-\Psuc$ instead of the bounds on $\Psuc$.}. In order to better emphasize the individual contributions of failures and miscorrections, we further include an upper bound on the probability of miscorrection $\Pmisc$, given in the Appendix, in the plots of \cref{fig:plots1,fig:plots2}. We label, summarize, and describe the different bounds and versions thereof in \cref{tab:boundsInPlots} and, for convenience and clarity, refer to them by their respective label for the remainder of this section. Further, we fix the code length to be $n = q^m-1$, i.e., given the base field size $q$ and extension degree $m$ we construct the longest possible RS/alternant codes, while excluding $\alpha_i = 0$ as a code locator (see~\cref{def:GRScodes}).

Aside from the comparison of the lower and upper bounds on the success probability, it is also interesting to see how the probability of successful decoding of an interleaved alternant codes compares to that of the corresponding interleaved GRS code over $\Fqm$. Such a bound was derived\footnote{The bound in \cite{schmidt2009collaborative} is presented as a bound on the probability of failure, but it is in fact a bound on the probability of unsuccessful decoding (see~\cref{rem:applicationToRS}).} and shown to be close to probability of successful decoding obtained from simulation in~\cite{schmidt2009collaborative}. For the reader's convenience we restate it in~\cref{thm:boundIRS} and assign it the label $\labelRS$. Note that the decoder employed in~\cite{schmidt2009collaborative} is equivalent to the decoder considered in this work (see~\cref{algo:SyndromeDecoder}), however the error matrix $\widetilde{\E}$ is assumed to be over $\Fqm$ (the field of the RS code) in~\cref{thm:boundIRS}.
\begin{theorem}[Probability of successful decoding for interleaved RS codes {\cite[Theorem~7]{schmidt2009collaborative}}]\label{thm:boundIRS}
  Let $\mathcal{IC}^{(\ell)}$ be an $\ell$-interleaved GRS code with $\GRSp \in \mathbb{G}^d_{\va}$ as in~\cref{def:GRScodes} and $\cE=\{j_1,j_2,\dots,j_t\}\subset [n]$ be a set of $|\cE|=t$ error positions, where $n\coloneqq |\va|$.
  For a codeword $\C\in \mathcal{IC}^{(\ell)}$, an error matrix $\widetilde{\E} \in \Fqm^{\ell \times n}$ with $\supp(\widetilde{\E}) \coloneqq \cE$ and $\E \coloneqq \widetilde{\E}|_{\cE} \sim \EB{q^m}{\ell}{t}$, and a received word $\R \coloneqq \C + \widetilde{\E}$, \cref{algo:SyndromeDecoder} \emph{succeeds}, i.e., returns $\hat \C = \C$, with probability
\begin{align}\label{eq:boundIRS}
  \Psuc(\mathcal{IC}^{(\ell)},\cE)\geq 1-\left(\frac{q^{m\ell}-\frac{1}{q^m}}{q^{m\ell}-1}  \right)^t\cdot \frac{q^{-m(\ell+1)(t_{\max,\mathsf{RS}}-t)}}{q^m-1}\ ,
\end{align}
where $t_{\max,\mathsf{RS}}=\frac{\ell}{\ell+1}(d-1)$. %
\end{theorem}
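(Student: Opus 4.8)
The plan is to specialize the framework already built for the alternant case to the situation where the error lives in the full field $\Fqm$ rather than in the subfield $\Fq$. First I would observe that \cref{lem:FailureCrux} and its proof never use that the entries of $\E$ lie in $\Fq$: the decomposition $\S^{(i)}(t)=\H\cdot\supbrac{\Fbold}{i}\cdot\D\cdot\V$ and the bijection $\u\mapsto\v=\D\V\u$ are valid over $\Fqm$, with $\supbrac{\Fbold}{i}=\diag(\E_{i,:})$ now taken over $\Fqm$. Hence the characterisation \cref{eqn:failure_cond} holds verbatim, and for $\E\sim\EB{q^m}{\ell}{t}$ one obtains the exact identity
\begin{equation*}
  1-\Psuc(\mathcal{IC}^{(\ell)},\cE)=\underset{\E}{\Pr}\{\exists\,\v\in\Fqm^t\setminus\{\0\}:\H\cdot\diag(\v)\cdot\E^\top=\0\}\ ,
\end{equation*}
where $\H$ is the parity-check matrix of $\GRS_{\va|_\cE,\1}^{d-t}$.

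Next I would reuse the union-bound and support-decomposition steps of \cref{thm:failureProbNewBoundGeneral}, grouping the vectors $\v$ by their weight $w=\wt(\v)$ and support $\cV\subseteq\cE$. As any $d-t-1$ columns of $\H$ are independent, only $w\geq d-t$ contributes, and for a fixed support $\cV$ the event becomes ``all $\ell$ rows of $\E|_\cV$ lie in a common code of $\mathbb{G}_{\va|_\cV}^{d-t}$''. The crucial simplification relative to the alternant case is that this is now the family of \emph{GRS} codes, not their subfield subcodes: every code in $\mathbb{G}_{\va|_\cV}^{d-t}$ is MDS of the \emph{same} dimension $k'\coloneqq w-(d-t-1)$ with the same weight enumerator (\cref{thm:MDSWeightEnumerator}), and each distinct code occurs with multiplicity exactly $q^m-1$, so there are $(q^m-1)^{w-1}$ distinct codes. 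Consequently no majorization argument (\cref{lem:maximization}) is needed; a union bound gives a per-support contribution of at most $(q^m-1)^{w-1}$ times a single probability $\Pr\{(\E|_\cV)_{i,:}\in\mycode{G}_0\ \forall\,i\}$ for an arbitrary fixed $[w,k',d-t]_{q^m}$ MDS code $\mycode{G}_0$, and by symmetry all $\binom{t}{w}$ supports yield the same value.

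I would then bound this single probability by the $\Fqm$-analogue of \cref{lem:Pwk} (viewing $\mycode{G}_0$ as a $q^m$-ary code and $\E|_\cV\sim\EB{q^m}{\ell}{w}$), whose refined $A_w^{\mycode{G}_0}$ term is what supplies the $q^{-m}$ correction visible in \cref{eq:boundIRS}, with \cref{cor:PwkL01} giving a cruder variant. Assembling these ingredients produces a finite sum over $d-t\leq w\leq t$ of terms of the shape $\binom{t}{w}(q^m-1)^{w-1}q^{mk'\ell}/(q^{m\ell}-1)^w$. The main obstacle is the last step: collapsing this sum into the compact product $\bigl(\tfrac{q^{m\ell}-q^{-m}}{q^{m\ell}-1}\bigr)^t\cdot q^{-m(\ell+1)(t_{\max,\mathsf{RS}}-t)}/(q^m-1)$. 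A quick exponent check shows that the summand grows in $w$ and is dominated by the term $w=t$ (where $k'=2t-d+1$), whose magnitude already matches the claimed closed form; the remaining terms must be reorganised into the geometric factor, after which the identity $(\ell+1)t_{\max,\mathsf{RS}}=\ell(d-1)$ yields the stated exponent. Since this is exactly the bound of \cite[Theorem~7]{schmidt2009collaborative} and, by \cref{rem:applicationToRS}, a bound on $1-\Psuc$ rather than merely on $\Pfail$, I would finally cross-check the algebra of this collapse against that reference.
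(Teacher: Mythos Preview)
The paper does not prove this theorem. It is explicitly restated from \cite[Theorem~7]{schmidt2009collaborative} ``for the reader's convenience'' in \cref{sec:numericalResults}, with no accompanying proof; the paper's only added content is the observation in \cref{rem:applicationToRS} that the cited bound applies to $1-\Psuc$ rather than merely to $\Pfail$. There is therefore no proof in the paper to compare your proposal against.

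That said, your sketch is a correct route to the bound and amounts to reverse-engineering the argument of \cite{schmidt2009collaborative} through this paper's more general framework: \cref{lem:FailureCrux} and the union-bound/support decomposition of \cref{thm:failureProbNewBoundGeneral} indeed carry over verbatim to $\Fqm$-valued errors, and the key simplification you identify --- that every code in $\mathbb{G}_{\va|_\cV}^{d-t}$ is MDS of the \emph{same} dimension $w-(d-t)+1$, so the majorization step via \cref{lem:maximization} becomes unnecessary --- is exactly right. The step you flag as the ``main obstacle'', collapsing the resulting sum over $w$ into the closed form \cref{eq:boundIRS}, is the only part requiring nontrivial algebra, and that computation lives in the cited reference rather than in this paper; your plan to cross-check against it is the appropriate resolution.
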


Before we discuss the numerical evaluations of the bounds, we make an important observation based on the \emph{simulation} results.
\begin{remark}\label{rem:alternantMaxRadius}
  For most parameters the provided lower bounds on the success probability of decoding interleaved alternant codes do not provide a non-trivial bound for the same decoding radius as the bounds for interleaved RS codes of \cite{schmidt2009collaborative}. To determine the real decoding threshold, i.e., the smallest number of errors for which the decoder succeeds with non-negligible probability\footnote{We arbitrarily choose this probability to be $\Psuc > 0.9$ and run $100$ decoding iterations for each parameter set to determine the decoding threshold.}, we rely on simulation results. This threshold is indicated in the plots and labeled $\labelSim$. Notably, for all tested parameters, \emph{the threshold for interleaved alternant codes is the same as for interleaved RS codes}, i.e., the simulation results imply that the collaboratively decoding errors in an interleaved alternant code succeeds w.h.p. for any number of errors $t$ with
  \begin{align*}
    t \leq  \frac{\ell}{\ell+1}(d-1) = t_{\max, \mathsf{RS}} \ .
  \end{align*}
\end{remark}

The numerical evaluations of the bounds are given in \cref{fig:plots1,fig:plots2} for different base field size $q$, extension degree $m$, and distance $d$, each for varying interleaving order $\ell$:
\begin{itemize}
\item $\mathbf{q=2, m=10, d=51}$: The rate of these codes\footnote{Recall that interleaving does not change the rate of the code.} is $R=\frac{k}{n} \approx 0.5$, assuming $k=n-(d-1)m$ (which tends to be true for most alternant codes). For wild Goppa and BCH codes the rate is $R_{\mathsf{Gop./BCH}} \approx 0.75$ (see~\cref{rem:BCHgoppaCodes}). Figs. \ref{subfig:q=2_m=10_r=50_l=50} and \ref{subfig:q=2_m=10_r=50_l=80} are included to show the comparison between $\labelMain$ and $\labelLarge$.
\item $\mathbf{q=2, m=11, d=101}$: For comparison to the parameters stated above, in Figs. \ref{subfig:q=2_m=11_r=100_l=2}, \ref{subfig:q=2_m=11_r=100_l=5}, \ref{subfig:q=2_m=11_r=100_l=10}, and \ref{subfig:q=2_m=11_r=100_l=25} we fix the rate $R=\frac{k}{n} \approx 0.5$ ($R_{\mathsf{Gop./BCH}} \approx 0.75$), increase $m$, and vary $d$ accordingly.
\item $\mathbf{q=32, m=2, d=51}$: To illustrate the influence of the base field size $q$, in Figs. \ref{subfig:q=32_m=2_r=50_l=2} and \ref{subfig:q=32_m=2_r=50_l=10} we show some evaluations for $q=32$.
\end{itemize}

We now briefly discuss the main observations taken from the numerical results. As $\labelSingleton$ and $\labelLz$ are simplifications of $\labelMain$ and therefore strictly worse, we leave their comparison to each other until later in the section, and begin by only comparing $\labelRS, \labelMain, \labelLarge, \labelMisc$, and $\labelLower$. All statements on the decoding failure, miscorrection, and success probability refer to the syndrome-based collaborative decoder of \cite{FengTzeng1991ColaDec,schmidt2009collaborative} given in \cref{algo:SyndromeDecoder}.

\begin{itemize}
  \item For fixed $q,m$, and $\ell$, the probability of a decoding success is significantly lower for interleaved ($q$-ary) alternant codes than for interleaved ($q^m$-ary) RS codes, as even the \emph{upper} bound $\labelLower$ on the success probability for interleaved alternant codes is in most cases smaller than the \emph{lower} bound $\labelRS$ on the success probability for interleaved RS codes.

  \item The probability of unsuccessful decoding interleaved alternant codes $1-\Psuc$ is dominated by the probability of failure $\Pfail$, as $\Pmisc \ll 1-\Psuc$, i.e., the bound on the probability of a miscorrection $\Pmisc$, labeled $\labelMisc$, is multiple orders of magnitude smaller than $1-\Psuc=\Pmisc+\Pfail$ for the best bound on $\Psuc$ among $\labelMain$ and $\labelLarge$. This is consistent with the numerical results from \cite{schmidt2009collaborative} for the case of decoding interleaved RS codes.

  \item For most parameters $\labelMain$ provides the best lower bound on the probability of success $\Psuc$. In particular, for higher interleaving order $\ell$ and relatively small number of errors $t$, it essentially matches the upper bound of \cref{thm:lowerBound} (see~Figs. \ref{subfig:q=2_m=10_r=50_l=10}, \ref{subfig:q=2_m=11_r=100_l=10}, \ref{subfig:q=2_m=10_r=50_l=25}, and \ref{subfig:q=2_m=11_r=100_l=25}).

  \item For fixed $q,m$, and $d$, the relative gap between the number of errors for which the lower bounds on the probability of decoding success become nontrivial, i.e., give $\Psuc >0$, and the simulated decoding threshold decreases for increasing interleaving order~$\ell$ (compare \cref{subfig:q=2_m=10_r=50_l=2}, \ref{subfig:q=2_m=10_r=50_l=5}, \ref{subfig:q=2_m=10_r=50_l=10}, and \ref{subfig:q=2_m=10_r=50_l=25} or \cref{subfig:q=2_m=11_r=100_l=2}, \ref{subfig:q=2_m=11_r=100_l=5}, \ref{subfig:q=2_m=11_r=100_l=10}, and \ref{subfig:q=2_m=11_r=100_l=25} ).

  \item The lower bound $\labelLarge$ on the probability of decoding success for $\ell > t$ improves upon the bound of $\labelMain$ for large interleaving order and number of errors close to the maximum decoding radius (see~\cref{rem:alternantMaxRadius}).
\end{itemize}

Now consider the different versions of the bound in \cref{thm:failureProbNewBound} labeled $\labelMain, \labelSingleton$, and $\labelLz$.
\begin{itemize}
  \item For small $q$, the performance of \cref{thm:failureProbNewBound} is significantly worse when using a field size independent bound for $\kopt$, as evident from comparing $\labelMain$ and $\labelSingleton$ in Figs.~\ref{subfig:q=2_m=10_r=50_l=2} to \ref{subfig:q=2_m=11_r=100_l=25}, \ref{subfig:q=2_m=10_r=50_l=50} and \ref{subfig:q=2_m=10_r=50_l=80}. This can be expected due to the increasing gap between $\kopt$ and the Singleton bound for decreasing $q$.%

  \item For larger interleaving order $\ell$, the simplified lower bound on the probability of successful decoding $\labelLz$ approaches the best version of the bound $\labelMain$ (see Figs.~\ref{subfig:q=2_m=10_r=50_l=10} to~\ref{subfig:q=2_m=11_r=100_l=25}, \ref{subfig:q=2_m=10_r=50_l=50} and \ref{subfig:q=2_m=10_r=50_l=80}).

\end{itemize}

\begin{figure*}
\begin{subfigure}{.5\textwidth}
\centering
\begin{tikzpicture}
\pgfplotsset{compat = 1.3}
\begin{axis}[
	legend style={nodes={scale=0.5, transform shape}},
	width = 0.9\columnwidth,
	height = 0.6\columnwidth,
	xlabel = {{Number of errors $t$}},
	ylabel = {{Probability $1-\Psuc$ or $\Pmisc$}},
	xmin = 25,
	xmax = 34,
	ymin = 1.271205e-09,
	ymax = 10,
	legend pos = south east,
	legend cell align=left,
	ymode=log,
	grid=both]

\addplot [dotted, color=violet, mark=*, mark size=1pt] table[x=t,y=RS] {figs/data/boundsData_q=2_m=10_r=50_l=2.dat};

\addlegendentry{{$1-\labelRS$}};

\addplot [solid, color=blue, mark=triangle, thick, mark size=1pt] table[x=t,y=Thm1] {figs/data/boundsData_q=2_m=10_r=50_l=2.dat};

\addlegendentry{{$1-\labelMain$}};

\addplot [solid, color=cyan, mark=triangle, thick, mark size=1pt] table[x=t,y=WoKopt] {figs/data/boundsData_q=2_m=10_r=50_l=2.dat};

\addlegendentry{{$1-\labelSingleton$}};

\addplot [solid, color=pink, mark=x, thick, mark size=1pt] table[x=t,y=L01] {figs/data/boundsData_q=2_m=10_r=50_l=2.dat};

\addlegendentry{{$1-\labelLz$}};

\addplot [solid, color=lightgray, mark=none, thick, mark size=1pt] table[x=t,y=LowerIE] {figs/data/boundsData_q=2_m=10_r=50_l=2.dat};

\addlegendentry{{$1-\labelLower$}};

\addplot [solid, color=green, mark=none, thick, mark size=1pt] table[x=t,y=Miscorrection] {figs/data/boundsData_q=2_m=10_r=50_l=2.dat};

\addlegendentry{{$\labelMisc$}};

\addplot [solid, color=red, dashed, mark=none, thick, mark size=1pt] table[x=t,y=Sim] {figs/data/boundsData_q=2_m=10_r=50_l=2.dat};

\addlegendentry{{$\labelSim$}};

\end{axis}
\end{tikzpicture}
\caption{$q=2, m=10, d=51, \ell=2$}
\label{subfig:q=2_m=10_r=50_l=2}
\end{subfigure}
\begin{subfigure}{.5\textwidth}
\centering
\begin{tikzpicture}
\pgfplotsset{compat = 1.3}
\begin{axis}[
	legend style={nodes={scale=0.5, transform shape}},
	width = 0.9\columnwidth,
	height = 0.6\columnwidth,
	xlabel = {{Number of errors $t$}},
	ylabel = {{Probability $1-\Psuc$ or $\Pmisc$}},
	xmin = 50,
	xmax = 67,
	ymin = 2.996352e-21,
	ymax = 10,
	legend pos = south east,
	legend cell align=left,
	ymode=log,
	grid=both]

\addplot [dotted, color=violet, mark=*, mark size=1pt] table[x=t,y=RS] {figs/data/boundsData_q=2_m=11_r=100_l=2.dat};

\addlegendentry{{$1-\labelRS$}};

\addplot [solid, color=blue, mark=triangle, thick, mark size=1pt] table[x=t,y=Thm1] {figs/data/boundsData_q=2_m=11_r=100_l=2.dat};

\addlegendentry{{$1-\labelMain$}};

\addplot [solid, color=cyan, mark=triangle, thick, mark size=1pt] table[x=t,y=WoKopt] {figs/data/boundsData_q=2_m=11_r=100_l=2.dat};

\addlegendentry{{$1-\labelSingleton$}};

\addplot [solid, color=pink, mark=x, thick, mark size=1pt] table[x=t,y=L01] {figs/data/boundsData_q=2_m=11_r=100_l=2.dat};

\addlegendentry{{$1-\labelLz$}};

\addplot [solid, color=lightgray, mark=none, thick, mark size=1pt] table[x=t,y=LowerIE] {figs/data/boundsData_q=2_m=11_r=100_l=2.dat};

\addlegendentry{{$1-\labelLower$}};

\addplot [solid, color=green, mark=none, thick, mark size=1pt] table[x=t,y=Miscorrection] {figs/data/boundsData_q=2_m=11_r=100_l=2.dat};

\addlegendentry{{$\labelMisc$}};

\addplot [solid, color=red, dashed, mark=none, thick, mark size=1pt] table[x=t,y=Sim] {figs/data/boundsData_q=2_m=11_r=100_l=2.dat};

\addlegendentry{{$\labelSim$}};

\end{axis}
\end{tikzpicture}
\caption{$q=2, m=11, d=101, \ell=2$}
\label{subfig:q=2_m=11_r=100_l=2}
\end{subfigure}

\begin{subfigure}{.5\textwidth}
\centering
\begin{tikzpicture}
\pgfplotsset{compat = 1.3}
\begin{axis}[
	legend style={nodes={scale=0.5, transform shape}},
	width = 0.9\columnwidth,
	height = 0.6\columnwidth,
	xlabel = {{Number of errors $t$}},
	ylabel = {{Probability $1-\Psuc$ or $\Pmisc$}},
	xmin = 25,
	xmax = 42,
	ymin = 9.389114e-35,
	ymax = 10,
	legend pos = south east,
	legend cell align=left,
	ymode=log,
	grid=both]

\addplot [dotted, color=violet, mark=*, mark size=1pt] table[x=t,y=RS] {figs/data/boundsData_q=2_m=10_r=50_l=5.dat};

\addlegendentry{{$1-\labelRS$}};

\addplot [solid, color=blue, mark=triangle, thick, mark size=1pt] table[x=t,y=Thm1] {figs/data/boundsData_q=2_m=10_r=50_l=5.dat};

\addlegendentry{{$1-\labelMain$}};

\addplot [solid, color=cyan, mark=triangle, thick, mark size=1pt] table[x=t,y=WoKopt] {figs/data/boundsData_q=2_m=10_r=50_l=5.dat};

\addlegendentry{{$1-\labelSingleton$}};

\addplot [solid, color=pink, mark=x, thick, mark size=1pt] table[x=t,y=L01] {figs/data/boundsData_q=2_m=10_r=50_l=5.dat};

\addlegendentry{{$1-\labelLz$}};

\addplot [solid, color=lightgray, mark=none, thick, mark size=1pt] table[x=t,y=LowerIE] {figs/data/boundsData_q=2_m=10_r=50_l=5.dat};

\addlegendentry{{$1-\labelLower$}};

\addplot [solid, color=green, mark=none, thick, mark size=1pt] table[x=t,y=Miscorrection] {figs/data/boundsData_q=2_m=10_r=50_l=5.dat};

\addlegendentry{{$\labelMisc$}};

\addplot [solid, color=red, dashed, mark=none, thick, mark size=1pt] table[x=t,y=Sim] {figs/data/boundsData_q=2_m=10_r=50_l=5.dat};

\addlegendentry{{$\labelSim$}};

\end{axis}
\end{tikzpicture}
\caption{$q=2, m=10, d=51, \ell=5$}
\label{subfig:q=2_m=10_r=50_l=5}
\end{subfigure}%
\begin{subfigure}{.5\textwidth}
\centering
\begin{tikzpicture}
\pgfplotsset{compat = 1.3}
\begin{axis}[
	legend style={nodes={scale=0.5, transform shape}},
	width = 0.9\columnwidth,
	height = 0.6\columnwidth,
	xlabel = {{Number of errors $t$}},
	ylabel = {{Probability $1-\Psuc$ or $\Pmisc$}},
	xmin = 50,
	xmax = 84,
	ymin = 9.677669e-72,
	ymax = 10,
	legend pos = south east,
	legend cell align=left,
	ymode=log,
	grid=both]

\addplot [dotted, color=violet, mark=*, mark size=1pt] table[x=t,y=RS] {figs/data/boundsData_q=2_m=11_r=100_l=5.dat};

\addlegendentry{{$1-\labelRS$}};

\addplot [solid, color=blue, mark=triangle, thick, mark size=1pt] table[x=t,y=Thm1] {figs/data/boundsData_q=2_m=11_r=100_l=5.dat};

\addlegendentry{{$1-\labelMain$}};

\addplot [solid, color=cyan, mark=triangle, thick, mark size=1pt] table[x=t,y=WoKopt] {figs/data/boundsData_q=2_m=11_r=100_l=5.dat};

\addlegendentry{{$1-\labelSingleton$}};

\addplot [solid, color=pink, mark=x, thick, mark size=1pt] table[x=t,y=L01] {figs/data/boundsData_q=2_m=11_r=100_l=5.dat};

\addlegendentry{{$1-\labelLz$}};

\addplot [solid, color=lightgray, mark=none, thick, mark size=1pt] table[x=t,y=LowerIE] {figs/data/boundsData_q=2_m=11_r=100_l=5.dat};

\addlegendentry{{$1-\labelLower$}};

\addplot [solid, color=green, mark=none, thick, mark size=1pt] table[x=t,y=Miscorrection] {figs/data/boundsData_q=2_m=11_r=100_l=5.dat};

\addlegendentry{{$\labelMisc$}};

\addplot [solid, color=red, dashed, mark=none, thick, mark size=1pt] table[x=t,y=Sim] {figs/data/boundsData_q=2_m=11_r=100_l=5.dat};

\addlegendentry{{$\labelSim$}};

\end{axis}
\end{tikzpicture}
\caption{$q=2, m=11, d=101, \ell=5$}
\label{subfig:q=2_m=11_r=100_l=5}
\end{subfigure}

\begin{subfigure}{.5\textwidth}
\centering
\begin{tikzpicture}
\pgfplotsset{compat = 1.3}
\begin{axis}[
	legend style={nodes={scale=0.5, transform shape}},
	width = 0.9\columnwidth,
	height = 0.6\columnwidth,
	xlabel = {{Number of errors $t$}},
	ylabel = {{Probability $1-\Psuc$ or $\Pmisc$}},
	xmin = 25,
	xmax = 46,
	ymin = 1.564516e-71,
	ymax = 10,
	legend pos = south east,
	legend cell align=left,
	ymode=log,
	grid=both]

\addplot [dotted, color=violet, mark=*, mark size=1pt] table[x=t,y=RS] {figs/data/boundsData_q=2_m=10_r=50_l=10.dat};

\addlegendentry{{$1-\labelRS$}};

\addplot [solid, color=blue, mark=triangle, thick, mark size=1pt] table[x=t,y=Thm1] {figs/data/boundsData_q=2_m=10_r=50_l=10.dat};

\addlegendentry{{$1-\labelMain$}};

\addplot [solid, color=cyan, mark=triangle, thick, mark size=1pt] table[x=t,y=WoKopt] {figs/data/boundsData_q=2_m=10_r=50_l=10.dat};

\addlegendentry{{$1-\labelSingleton$}};

\addplot [solid, color=pink, mark=x, thick, mark size=1pt] table[x=t,y=L01] {figs/data/boundsData_q=2_m=10_r=50_l=10.dat};

\addlegendentry{{$1-\labelLz$}};

\addplot [solid, color=lightgray, mark=none, thick, mark size=1pt] table[x=t,y=LowerIE] {figs/data/boundsData_q=2_m=10_r=50_l=10.dat};

\addlegendentry{{$1-\labelLower$}};

\addplot [solid, color=green, mark=none, thick, mark size=1pt] table[x=t,y=Miscorrection] {figs/data/boundsData_q=2_m=10_r=50_l=10.dat};

\addlegendentry{{$\labelMisc$}};

\addplot [solid, color=red, dashed, mark=none, thick, mark size=1pt] table[x=t,y=Sim] {figs/data/boundsData_q=2_m=10_r=50_l=10.dat};

\addlegendentry{{$\labelSim$}};

\end{axis}
\end{tikzpicture}
\caption{$q=2, m=10, d=51, \ell=10$}
\label{subfig:q=2_m=10_r=50_l=10}
\end{subfigure}%
\begin{subfigure}{.5\textwidth}
\centering
\begin{tikzpicture}
\pgfplotsset{compat = 1.3}
\begin{axis}[
	legend style={nodes={scale=0.5, transform shape}},
	width = 0.9\columnwidth,
	height = 0.6\columnwidth,
	xlabel = {{Number of errors $t$}},
	ylabel = {{Probability $1-\Psuc$ or $\Pmisc$}},
	xmin = 50,
	xmax = 91,
	ymin = 1.738559e-146,
	ymax = 10,
	legend pos = south east,
	legend cell align=left,
	ymode=log,
	grid=both]

\addplot [dotted, color=violet, mark=*, mark size=1pt] table[x=t,y=RS] {figs/data/boundsData_q=2_m=11_r=100_l=10.dat};

\addlegendentry{{$1-\labelRS$}};

\addplot [solid, color=blue, mark=triangle, thick, mark size=1pt] table[x=t,y=Thm1] {figs/data/boundsData_q=2_m=11_r=100_l=10.dat};

\addlegendentry{{$1-\labelMain$}};

\addplot [solid, color=cyan, mark=triangle, thick, mark size=1pt] table[x=t,y=WoKopt] {figs/data/boundsData_q=2_m=11_r=100_l=10.dat};

\addlegendentry{{$1-\labelSingleton$}};

\addplot [solid, color=pink, mark=x, thick, mark size=1pt] table[x=t,y=L01] {figs/data/boundsData_q=2_m=11_r=100_l=10.dat};

\addlegendentry{{$1-\labelLz$}};

\addplot [solid, color=lightgray, mark=none, thick, mark size=1pt] table[x=t,y=LowerIE] {figs/data/boundsData_q=2_m=11_r=100_l=10.dat};

\addlegendentry{{$1-\labelLower$}};

\addplot [solid, color=green, mark=none, thick, mark size=1pt] table[x=t,y=Miscorrection] {figs/data/boundsData_q=2_m=11_r=100_l=10.dat};

\addlegendentry{{$\labelMisc$}};

\addplot [solid, color=red, dashed, mark=none, thick, mark size=1pt] table[x=t,y=Sim] {figs/data/boundsData_q=2_m=11_r=100_l=10.dat};

\addlegendentry{{$\labelSim$}};

\end{axis}
\end{tikzpicture}
\caption{$q=2, m=11, d=101, \ell=10$}
\label{subfig:q=2_m=11_r=100_l=10}
\end{subfigure}

\begin{subfigure}{.5\textwidth}
\centering
\begin{tikzpicture}
\pgfplotsset{compat = 1.3}
\begin{axis}[
	legend style={nodes={scale=0.5, transform shape}},
	width = 0.9\columnwidth,
	height = 0.6\columnwidth,
	xlabel = {{Number of errors $t$}},
	ylabel = {{Probability $1-\Psuc$ or $\Pmisc$}},
	xmin = 25,
	xmax = 49,
	ymin = 6.271928e-180,
	ymax = 10,
	legend pos = south east,
	legend cell align=left,
	ymode=log,
	grid=both]

\addplot [dotted, color=violet, mark=*, mark size=1pt] table[x=t,y=RS] {figs/data/boundsData_q=2_m=10_r=50_l=25.dat};

\addlegendentry{{$1-\labelRS$}};

\addplot [solid, color=blue, mark=triangle, thick, mark size=1pt] table[x=t,y=Thm1] {figs/data/boundsData_q=2_m=10_r=50_l=25.dat};

\addlegendentry{{$1-\labelMain$}};

\addplot [solid, color=cyan, mark=triangle, thick, mark size=1pt] table[x=t,y=WoKopt] {figs/data/boundsData_q=2_m=10_r=50_l=25.dat};

\addlegendentry{{$1-\labelSingleton$}};

\addplot [solid, color=pink, mark=x, thick, mark size=1pt] table[x=t,y=L01] {figs/data/boundsData_q=2_m=10_r=50_l=25.dat};

\addlegendentry{{$1-\labelLz$}};

\addplot [solid, color=lightgray, mark=none, thick, mark size=1pt] table[x=t,y=LowerIE] {figs/data/boundsData_q=2_m=10_r=50_l=25.dat};

\addlegendentry{{$1-\labelLower$}};

\addplot [solid, color=green, mark=none, thick, mark size=1pt] table[x=t,y=Miscorrection] {figs/data/boundsData_q=2_m=10_r=50_l=25.dat};

\addlegendentry{{$\labelMisc$}};

\addplot [solid, color=red, dashed, mark=none, thick, mark size=1pt] table[x=t,y=Sim] {figs/data/boundsData_q=2_m=10_r=50_l=25.dat};

\addlegendentry{{$\labelSim$}};

\end{axis}
\end{tikzpicture}
\caption{$q=2, m=10, d=51, \ell=25$}
\label{subfig:q=2_m=10_r=50_l=25}
\end{subfigure}%
\begin{subfigure}{.5\textwidth}
\centering
\begin{tikzpicture}
\pgfplotsset{compat = 1.3}
\begin{axis}[
	legend style={nodes={scale=0.5, transform shape}},
	width = 0.9\columnwidth,
	height = 0.6\columnwidth,
	xlabel = {{Number of errors $t$}},
	ylabel = {{Probability $1-\Psuc$ or $\Pmisc$}},
	xmin = 50,
	xmax = 97,
	ymin = 1.096826e-320,
	ymax = 10,
	legend pos = south east,
	legend cell align=left,
	ymode=log,
	grid=both]

\addplot [dotted, color=violet, mark=*, mark size=1pt] table[x=t,y=RS] {figs/data/boundsData_q=2_m=11_r=100_l=25.dat};

\addlegendentry{{$1-\labelRS$}};

\addplot [solid, color=blue, mark=triangle, thick, mark size=1pt] table[x=t,y=Thm1] {figs/data/boundsData_q=2_m=11_r=100_l=25.dat};

\addlegendentry{{$1-\labelMain$}};

\addplot [solid, color=cyan, mark=triangle, thick, mark size=1pt] table[x=t,y=WoKopt] {figs/data/boundsData_q=2_m=11_r=100_l=25.dat};

\addlegendentry{{$1-\labelSingleton$}};

\addplot [solid, color=pink, mark=x, thick, mark size=1pt] table[x=t,y=L01] {figs/data/boundsData_q=2_m=11_r=100_l=25.dat};

\addlegendentry{{$1-\labelLz$}};

\addplot [solid, color=lightgray, mark=none, thick, mark size=1pt] table[x=t,y=LowerIE] {figs/data/boundsData_q=2_m=11_r=100_l=25.dat};

\addlegendentry{{$1-\labelLower$}};

\addplot [solid, color=green, mark=none, thick, mark size=1pt] table[x=t,y=Miscorrection] {figs/data/boundsData_q=2_m=11_r=100_l=25.dat};

\addlegendentry{{$\labelMisc$}};

\addplot [solid, color=red, dashed, mark=none, thick, mark size=1pt] table[x=t,y=Sim] {figs/data/boundsData_q=2_m=11_r=100_l=25.dat};

\addlegendentry{{$\labelSim$}};

\end{axis}
\end{tikzpicture}
\caption{$q=2, m=11, d=101, \ell=25$}
\label{subfig:q=2_m=11_r=100_l=25}
\end{subfigure}
\caption{Comparison of the bounds for different parameters. For the bounds $\labelRS, \labelMain, \labelSingleton, \labelLz, \labelLarge,$ and $\labelLower$ on the success probability we show the respective probabilities of unsuccessful decoding $1-\Psuc$.}
\label{fig:plots1}
\end{figure*}

\begin{figure*}
\begin{subfigure}{.5\textwidth}
\centering
\begin{tikzpicture}
\pgfplotsset{compat = 1.3}
\begin{axis}[
	legend style={nodes={scale=0.5, transform shape}},
	width = 0.9\columnwidth,
	height = 0.6\columnwidth,
	xlabel = {{Number of errors $t$}},
	ylabel = {{Probability $1-\Psuc$ or $\Pmisc$}},
	xmin = 25,
	xmax = 50,
	ymin = 1.295002e-310,
	ymax = 10,
	legend pos = south east,
	legend cell align=left,
	ymode=log,
	grid=both]

\addplot [dotted, color=violet, mark=*, mark size=1pt] table[x=t,y=RS] {figs/data/boundsData_q=2_m=10_r=50_l=50.dat};

\addlegendentry{{$1-\labelRS$}};

\addplot [solid, color=blue, mark=triangle, thick, mark size=1pt] table[x=t,y=Thm1] {figs/data/boundsData_q=2_m=10_r=50_l=50.dat};

\addlegendentry{{$1-\labelMain$}};

\addplot [solid, color=cyan, mark=triangle, thick, mark size=1pt] table[x=t,y=WoKopt] {figs/data/boundsData_q=2_m=10_r=50_l=50.dat};

\addlegendentry{{$1-\labelSingleton$}};

\addplot [solid, color=pink, mark=x, thick, mark size=1pt] table[x=t,y=L01] {figs/data/boundsData_q=2_m=10_r=50_l=50.dat};

\addlegendentry{{$1-\labelLz$}};

\addplot [solid, color=orange, mark=triangle, thick, mark size=1pt] table[x=t,y=LargeEll] {figs/data/boundsData_q=2_m=10_r=50_l=50.dat};

\addlegendentry{{$1-\labelLarge$}};

\addplot [solid, color=lightgray, mark=none, thick, mark size=1pt] table[x=t,y=LowerIE] {figs/data/boundsData_q=2_m=10_r=50_l=50.dat};

\addlegendentry{{$1-\labelLower$}};

\addplot [solid, color=green, mark=none, thick, mark size=1pt] table[x=t,y=Miscorrection] {figs/data/boundsData_q=2_m=10_r=50_l=50.dat};

\addlegendentry{{$\labelMisc$}};

\addplot [solid, color=red, dashed, mark=none, thick, mark size=1pt] table[x=t,y=Sim] {figs/data/boundsData_q=2_m=10_r=50_l=50.dat};

\addlegendentry{{$\labelSim$}};

\end{axis}
\end{tikzpicture}
\caption{$q=2, m=10, d=51, \ell=50$}
\label{subfig:q=2_m=10_r=50_l=50}
\end{subfigure}%
\begin{subfigure}{.5\textwidth}
\centering
\begin{tikzpicture}
\pgfplotsset{compat = 1.3}
\begin{axis}[
	legend style={nodes={scale=0.5, transform shape}},
	width = 0.9\columnwidth,
	height = 0.6\columnwidth,
	xlabel = {{Number of errors $t$}},
	ylabel = {{Probability $1-\Psuc$ or $\Pmisc$}},
	xmin = 25,
	xmax = 50,
	ymin = 5.188752e-304,
	ymax = 10,
	legend pos = south east,
	legend cell align=left,
	ymode=log,
	grid=both]

\addplot [dotted, color=violet, mark=*, mark size=1pt] table[x=t,y=RS] {figs/data/boundsData_q=2_m=10_r=50_l=80.dat};

\addlegendentry{{$1-\labelRS$}};

\addplot [solid, color=blue, mark=triangle, thick, mark size=1pt] table[x=t,y=Thm1] {figs/data/boundsData_q=2_m=10_r=50_l=80.dat};

\addlegendentry{{$1-\labelMain$}};

\addplot [solid, color=cyan, mark=triangle, thick, mark size=1pt] table[x=t,y=WoKopt] {figs/data/boundsData_q=2_m=10_r=50_l=80.dat};

\addlegendentry{{$1-\labelSingleton$}};

\addplot [solid, color=pink, mark=x, thick, mark size=1pt] table[x=t,y=L01] {figs/data/boundsData_q=2_m=10_r=50_l=80.dat};

\addlegendentry{{$1-\labelLz$}};

\addplot [solid, color=orange, mark=triangle, thick, mark size=1pt] table[x=t,y=LargeEll] {figs/data/boundsData_q=2_m=10_r=50_l=80.dat};

\addlegendentry{{$1-\labelLarge$}};

\addplot [solid, color=lightgray, mark=none, thick, mark size=1pt] table[x=t,y=LowerIE] {figs/data/boundsData_q=2_m=10_r=50_l=80.dat};

\addlegendentry{{$1-\labelLower$}};

\addplot [solid, color=green, mark=none, thick, mark size=1pt] table[x=t,y=Miscorrection] {figs/data/boundsData_q=2_m=10_r=50_l=80.dat};

\addlegendentry{{$\labelMisc$}};

\addplot [solid, color=red, dashed, mark=none, thick, mark size=1pt] table[x=t,y=Sim] {figs/data/boundsData_q=2_m=10_r=50_l=80.dat};

\addlegendentry{{$\labelSim$}};

\end{axis}
\end{tikzpicture}
\caption{$q=2, m=10, d=51, \ell=80$}
\label{subfig:q=2_m=10_r=50_l=80}
\end{subfigure}%

\begin{subfigure}{.5\textwidth}
\centering
\begin{tikzpicture}
\pgfplotsset{compat = 1.3}
\begin{axis}[
	legend style={nodes={scale=0.5, transform shape}},
	width = 0.9\columnwidth,
	height = 0.6\columnwidth,
	xlabel = {{Number of errors $t$}},
	ylabel = {{Probability $1-\Psuc$ or $\Pmisc$}},
	xmin = 25,
	xmax = 34,
	ymin = 6.283091e-34,
	ymax = 10,
	legend pos = south east,
	legend cell align=left,
	ymode=log,
	grid=both]

\addplot [dotted, color=violet, mark=*, mark size=1pt] table[x=t,y=RS] {figs/data/boundsData_q=32_m=2_r=50_l=2.dat};

\addlegendentry{{$1-\labelRS$}};

\addplot [solid, color=blue, mark=triangle, thick, mark size=1pt] table[x=t,y=Thm1] {figs/data/boundsData_q=32_m=2_r=50_l=2.dat};

\addlegendentry{{$1-\labelMain$}};

\addplot [solid, color=cyan, mark=triangle, thick, mark size=1pt] table[x=t,y=WoKopt] {figs/data/boundsData_q=32_m=2_r=50_l=2.dat};

\addlegendentry{{$1-\labelSingleton$}};

\addplot [solid, color=pink, mark=x, thick, mark size=1pt] table[x=t,y=L01] {figs/data/boundsData_q=32_m=2_r=50_l=2.dat};

\addlegendentry{{$1-\labelLz$}};

\addplot [solid, color=lightgray, mark=none, thick, mark size=1pt] table[x=t,y=LowerIE] {figs/data/boundsData_q=32_m=2_r=50_l=2.dat};

\addlegendentry{{$1-\labelLower$}};

\addplot [solid, color=green, mark=none, thick, mark size=1pt] table[x=t,y=Miscorrection] {figs/data/boundsData_q=32_m=2_r=50_l=2.dat};

\addlegendentry{{$\labelMisc$}};

\addplot [solid, color=red, dashed, mark=none, thick, mark size=1pt] table[x=t,y=Sim] {figs/data/boundsData_q=32_m=2_r=50_l=2.dat};

\addlegendentry{{$\labelSim$}};

\end{axis}
\end{tikzpicture}
\caption{$q=32, m=2, d=51, \ell=2$}
\label{subfig:q=32_m=2_r=50_l=2}
\end{subfigure}%
\begin{subfigure}{.5\textwidth}
\centering
\begin{tikzpicture}
\pgfplotsset{compat = 1.3}
\begin{axis}[
	legend style={nodes={scale=0.5, transform shape}},
	width = 0.9\columnwidth,
	height = 0.6\columnwidth,
	xlabel = {{Number of errors $t$}},
	ylabel = {{Probability $1-\Psuc$ or $\Pmisc$}},
	xmin = 25,
	xmax = 46,
	ymin = 6.422853e-323,
	ymax = 10,
	legend pos = south east,
	legend cell align=left,
	ymode=log,
	grid=both]

\addplot [dotted, color=violet, mark=*, mark size=1pt] table[x=t,y=RS] {figs/data/boundsData_q=32_m=2_r=50_l=10.dat};

\addlegendentry{{$1-\labelRS$}};

\addplot [solid, color=blue, mark=triangle, thick, mark size=1pt] table[x=t,y=Thm1] {figs/data/boundsData_q=32_m=2_r=50_l=10.dat};

\addlegendentry{{$1-\labelMain$}};

\addplot [solid, color=cyan, mark=triangle, thick, mark size=1pt] table[x=t,y=WoKopt] {figs/data/boundsData_q=32_m=2_r=50_l=10.dat};

\addlegendentry{{$1-\labelSingleton$}};

\addplot [solid, color=pink, mark=x, thick, mark size=1pt] table[x=t,y=L01] {figs/data/boundsData_q=32_m=2_r=50_l=10.dat};

\addlegendentry{{$1-\labelLz$}};

\addplot [solid, color=lightgray, mark=none, thick, mark size=1pt] table[x=t,y=LowerIE] {figs/data/boundsData_q=32_m=2_r=50_l=10.dat};

\addlegendentry{{$1-\labelLower$}};

\addplot [solid, color=green, mark=none, thick, mark size=1pt] table[x=t,y=Miscorrection] {figs/data/boundsData_q=32_m=2_r=50_l=10.dat};

\addlegendentry{{$\labelMisc$}};

\addplot [solid, color=red, dashed, mark=none, thick, mark size=1pt] table[x=t,y=Sim] {figs/data/boundsData_q=32_m=2_r=50_l=10.dat};

\addlegendentry{{$\labelSim$}};

\end{axis}
\end{tikzpicture}
\caption{$q=32, m=2, d=51, \ell=10$}
\label{subfig:q=32_m=2_r=50_l=10}
\end{subfigure}
\caption{Comparison of the bounds for different parameters. For the bounds $\labelRS, \labelMain, \labelSingleton, \labelLz, \labelLarge,$ and $\labelLower$ on the success probability we show the respective probabilities of unsuccessful decoding $1-\Psuc$.}
\label{fig:plots2}
\end{figure*}

\section{Conclusion and Future Work} \label{sec:conclusion}

In this work, we have presented the first known lower and upper bounds for general parameters on the probability of successfully decoding interleaved alternant codes with the algorithm of \cite{FengTzeng1991ColaDec,schmidt2009collaborative}. The event of a decoding failure was shown to be the main cause of unsuccessful decoding, i.e., miscorrections are negligible in this sense. Numerical evaluations show that one of the provided lower bounds on this probability of successful decoding is tight for some parameters, as it matches the corresponding newly derived upper bound.%

 The most apparent open problem, in particular for smaller interleaving order, is closing the gap between the number of errors for which the bounds provide a nontrivial success probability and the simulated threshold for which the decoder succeeds. A closely related question, which is also of purely theoretical interest, is determining the distribution of the dimensions of all alternant codes for a given set of RS code locators. For specific applications, such as code-based cryptography, improvements of the bounds for other error distributions, arising, e.g., from an additional restriction to full-rank errors, could be of practical relevance. Finally, the simplification of the presented bound on the probability of decoding success, such that an analytical derivation of the maximal number of errors that result in a nontrivial bound is possible, as in the case of interleaved RS codes, is an interesting question to consider.

\bibliographystyle{IEEEtran}
\bibliography{main}

\begin{thebibliography}{10}
\providecommand{\url}[1]{#1}
\csname url@samestyle\endcsname
\providecommand{\newblock}{\relax}
\providecommand{\bibinfo}[2]{#2}
\providecommand{\BIBentrySTDinterwordspacing}{\spaceskip=0pt\relax}
\providecommand{\BIBentryALTinterwordstretchfactor}{4}
\providecommand{\BIBentryALTinterwordspacing}{\spaceskip=\fontdimen2\font plus
\BIBentryALTinterwordstretchfactor\fontdimen3\font minus
  \fontdimen4\font\relax}
\providecommand{\BIBforeignlanguage}[2]{{%
\expandafter\ifx\csname l@#1\endcsname\relax
\typeout{** WARNING: IEEEtran.bst: No hyphenation pattern has been}%
\typeout{** loaded for the language `#1'. Using the pattern for}%
\typeout{** the default language instead.}%
\else
\language=\csname l@#1\endcsname
\fi
#2}}
\providecommand{\BIBdecl}{\relax}
\BIBdecl

\bibitem{holzbaur202sucessProbability}
L.~Holzbaur, H.~Liu, A.~Neri, S.~Puchinger, J.~Rosenkilde, V.~Sidorenko, and
  A.~Wachter-Zeh, ``Success probability of decoding interleaved alternant
  codes,'' in \emph{2020 IEEE Information Theory Workshop (ITW)}, 2021, pp.
  1--5.

\bibitem{metzner1990general}
J.~J. Metzner and E.~J. Kapturowski, ``A general decoding technique applicable
  to replicated file disagreement location and concatenated code decoding,''
  \emph{IEEE Transactions on Information Theory}, vol.~36, no.~4, pp. 911--917,
  1990.

\bibitem{krachkovsky1997decoding}
V.~Y. Krachkovsky and Y.~X. Lee, ``Decoding for iterative {Reed}--{Solomon}
  coding schemes,'' \emph{IEEE Transactions on Magnetics}, vol.~33, no.~5, pp.
  2740--2742, 1997.

\bibitem{holzbaur2019error}
L.~Holzbaur, S.~Puchinger, and A.~Wachter-Zeh, ``On error decoding of locally
  repairable and partial {MDS} codes,'' in \emph{2019 IEEE Information Theory
  Workshop (ITW)}.\hskip 1em plus 0.5em minus 0.4em\relax IEEE, 2019, pp. 1--5.

\bibitem{krachkovsky1998decoding}
V.~Y. Krachkovsky and Y.~X. Lee, ``Decoding of parallel {Reed}--{Solomon} codes
  with applications to product and concatenated codes,'' in \emph{IEEE
  International Symposium on Information Theory}, 1998, p.~55.

\bibitem{haslach1999decoding}
C.~Haslach and A.~H. Vinck, ``A decoding algorithm with restrictions for array
  codes,'' \emph{IEEE Transactions on Information Theory}, vol.~45, no.~7, pp.
  2339--2344, 1999.

\bibitem{justesen2004decoding}
J.~Justesen, C.~Thommesen, and T.~H{\o}holdt, ``Decoding of concatenated codes
  with interleaved outer codes,'' in \emph{IEEE International Symposium on
  Information Theory (ISIT)}, 2004, pp. 328--328.

\bibitem{schmidt2005interleaved}
G.~Schmidt, V.~R. Sidorenko, and M.~Bossert, ``Interleaved {Reed}--{Solomon}
  codes in concatenated code designs,'' in \emph{IEEE Information Theory
  Workshop}, 2005, pp. 5--pp.

\bibitem{schmidt2009collaborative}
------, ``Collaborative decoding of interleaved {Reed}-{Solomon} codes and
  concatenated code designs,'' \emph{IEEE Transactions on Information Theory},
  vol.~55, no.~7, pp. 2991--3012, 2009.

\bibitem{schmidt2010syndrome}
------, ``Syndrome decoding of {Reed}--{Solomon} codes beyond half the minimum
  distance based on shift-register synthesis,'' \emph{IEEE Transactions on
  Information Theory}, vol.~56, no.~10, pp. 5245--5252, 2010.

\bibitem{kampf2014bounds}
S.~Kampf, ``Bounds on collaborative decoding of interleaved {Hermitian} codes
  and virtual extension,'' \emph{Designs, Codes and Cryptography}, vol.~70, no.
  1-2, pp. 9--25, 2014.

\bibitem{rosenkilde2018power}
J.~Rosenkilde, ``Power decoding {Reed}--{Solomon} codes up to the {Johnson}
  radius,'' \emph{Advances in Mathematics of Communications}, vol.~12, no.~1,
  pp. 81--106, 2018.

\bibitem{puchinger2019improved}
S.~Puchinger, J.~Rosenkilde, and I.~Bouw, ``Improved power decoding of
  interleaved one-point {Hermitian} codes,'' \emph{Designs, Codes and
  Cryptography}, vol.~87, no. 2-3, pp. 589--607, 2019.

\bibitem{elleuch2018interleaved}
\BIBentryALTinterwordspacing
M.~Elleuch, A.~{Wachter-Zeh}, and A.~Zeh, ``A public-key cryptosystem from
  interleaved {Goppa} codes,'' 2018. [Online]. Available:
  \url{http://arxiv.org/abs/1809.03024}
\BIBentrySTDinterwordspacing

\bibitem{holzbaur2019decoding}
L.~{Holzbaur}, H.~{Liu}, S.~{Puchinger}, and A.~{Wachter-Zeh}, ``On decoding
  and applications of interleaved {Goppa} codes,'' in \emph{2019 IEEE
  International Symposium on Information Theory (ISIT)}, 2019, pp. 1887--1891.

\bibitem{bleichenbacher2003decoding}
D.~Bleichenbacher, A.~Kiayias, and M.~Yung, ``Decoding of interleaved {Reed}
  {Solomon} codes over noisy data,'' in \emph{International Colloquium on
  Automata, Languages, and Programming}.\hskip 1em plus 0.5em minus 0.4em\relax
  Springer, 2003, pp. 97--108.

\bibitem{brown2004probabilistic}
A.~Brown, L.~Minder, and A.~Shokrollahi, ``Probabilistic decoding of
  interleaved {RS}-codes on the q-ary symmetric channel,'' in \emph{IEEE
  International Symposium on Information Theory (ISIT)}, 2004, pp. 326--326.

\bibitem{nielsen2013generalised}
J.~S. Nielsen, ``Generalised multi-sequence shift-register synthesis using
  module minimisation,'' in \emph{IEEE International Symposium on Information
  Theory (ISIT)}, 2013, pp. 882--886.

\bibitem{yu2018simultaneous}
J.-H. Yu and H.-A. Loeliger, ``Simultaneous partial inverses and decoding
  interleaved {Reed}--{Solomon} codes,'' \emph{IEEE Transactions on Information
  Theory}, vol.~64, no.~12, pp. 7511--7528, 2018.

\bibitem{coppersmith2003reconstructing}
D.~Coppersmith and M.~Sudan, ``Reconstructing curves in three (and higher)
  dimensional space from noisy data,'' in \emph{ACM Symposium on the Theory of
  Computing}, 2003.

\bibitem{parvaresh2004multivariate}
F.~Parvaresh and A.~Vardy, ``Multivariate interpolation decoding beyond the
  {Guruswami}--{Sudan} radius,'' in \emph{Allerton Conference on Communication,
  Control and Computing}, 2004.

\bibitem{parvaresh2007algebraic}
F.~Parvaresh, ``Algebraic list-decoding of error-correcting codes,'' Ph.D.
  dissertation, University of California, San Diego, 2007.

\bibitem{schmidt2007enhancing}
G.~Schmidt, V.~Sidorenko, and M.~Bossert, ``Enhancing the correcting radius of
  interleaved {Reed}--{Solomon} decoding using syndrome extension techniques,''
  in \emph{IEEE International Symposium on Information Theory (ISIT)}, 2007,
  pp. 1341--1345.

\bibitem{cohn2013approximate}
H.~Cohn and N.~Heninger, ``Approximate common divisors via lattices,''
  \emph{The Open Book Series}, vol.~1, no.~1, pp. 271--293, 2013.

\bibitem{wachterzeh2014decoding}
A.~{Wachter-Zeh}, A.~Zeh, and M.~Bossert, ``Decoding interleaved
  {Reed}--{Solomon} codes beyond their joint error-correcting capability,''
  \emph{Designs, Codes and Cryptography}, vol.~71, no.~2, pp. 261--281, 2014.

\bibitem{puchinger2017irs}
S.~Puchinger and J.~{Rosenkilde n\'e Nielsen}, ``Decoding of interleaved
  {Reed}--{Solomon} codes using improved power decoding,'' in \emph{IEEE
  International Symposium on Information Theory (ISIT)}, 2017.

\bibitem{brown2005improved}
A.~Brown, L.~Minder, and A.~Shokrollahi, ``Improved decoding of interleaved
  {AG} codes,'' in \emph{IMA International Conference on Cryptography and
  Coding}.\hskip 1em plus 0.5em minus 0.4em\relax Springer, 2005, pp. 37--46.

\bibitem{haslach2000efficient}
C.~Haslach and A.~Vinck, ``Efficient decoding of interleaved linear block
  codes,'' in \emph{IEEE International Symposium on Information Theory
  (ISIT)}.\hskip 1em plus 0.5em minus 0.4em\relax IEEE, 2000, p. 149.

\bibitem{roth2014coding}
R.~M. Roth and P.~O. Vontobel, ``Coding for combined block--symbol error
  correction,'' \emph{IEEE Transactions on Information Theory}, vol.~60, no.~5,
  pp. 2697--2713, 2014.

\bibitem{ECCFlash_Book}
X.~Wang, G.~Dong, L.~Pan, and R.~Zhou, \emph{Error Correction Codes and Signal
  Processing in Flash Memory}.\hskip 1em plus 0.5em minus 0.4em\relax InTech,
  2011, pp. 57--82.

\bibitem{LowPowerBCHNAND}
W.~{Liu}, J.~{Rho}, and W.~{Sung}, ``Low-power high-throughput {BCH} error
  correction {VLSI} design for multi-level cell {NAND} flash memories,'' in
  \emph{2006 IEEE Workshop on Signal Processing Systems Design and
  Implementation}, 2006, pp. 303--308.

\bibitem{VLSI-BCH-NAND}
H.~{Choi}, W.~{Liu}, and W.~{Sung}, ``{VLSI} implementation of {BCH} error
  correction for multilevel cell {NAND} flash memory,'' \emph{IEEE Transactions
  on Very Large Scale Integration ({VLSI}) Systems}, vol.~18, no.~5, pp.
  843--847, 2010.

\bibitem{StrongBCH-NAND_SIPS2006}
K.~R. F.~Sun and T.~Zhang, ``On the use of strong {BCH} codes for improving
  multilevel {NAND} flash memory storage capacity,'' in \emph{IEEE Workshop on
  Signal Processing Systems (SiPS)}, 2006.

\bibitem{FlashProductBCH}
C.~{Yang}, Y.~{Emre}, and C.~{Chakrabarti}, ``Product code schemes for error
  correction in {MLC} {NAND} flash memories,'' \emph{IEEE Transactions on Very
  Large Scale Integration ({VLSI}) Systems}, vol.~20, no.~12, pp. 2302--2314,
  2012.

\bibitem{couvreur2017cryptanalysis}
A.~Couvreur, I.~M{\'a}rquez-Corbella, and R.~Pellikaan, ``Cryptanalysis of
  {McEliece} cryptosystem based on algebraic geometry codes and their
  subcodes,'' \emph{IEEE Transactions on Information Theory}, vol.~63, no.~8,
  pp. 5404--5418, 2017.

\bibitem{couvreur2020algebraic}
A.~Couvreur and H.~Randriambololona, ``Algebraic geometry codes and some
  applications,'' \emph{arXiv preprint arXiv:2009.01281}, 2020.

\bibitem{McE78}
R.~J. {McEliece}, ``A public-key cryptosystem based on algebraic coding
  theory,'' \emph{The Deep Space Network Progress Report}, vol.~44, pp.
  114--116, 1978.

\bibitem{FengTzeng1991ColaDec}
G.~. {Feng} and K.~K. {Tzeng}, ``A generalization of the {Berlekamp}-{Massey}
  algorithm for multisequence shift-register synthesis with applications to
  decoding cyclic codes,'' \emph{IEEE Transactions on Information Theory},
  vol.~37, no.~5, pp. 1274--1287, 1991.

\bibitem{macwilliams1977theory}
F.~J. MacWilliams and N.~J.~A. Sloane, \emph{The Theory of Error Correcting
  Codes}.\hskip 1em plus 0.5em minus 0.4em\relax Elsevier, 1977, vol.~16.

\bibitem{WildGC76}
Y.~Sugiyama, M.~Kasahara, S.~Hirasawa, and T.~Namekawa, ``Further results on
  {Goppa} codes and their applications to constructing efficient binary
  codes,'' \emph{IEEE Transactions on Information Theory}, vol.~22, no.~5, pp.
  518--526, September 1976.

\bibitem{wild1988Wirtz}
M.~Wirtz, ``On the parameters of {Goppa} codes,'' \emph{IEEE Transactions on
  Information Theory}, vol.~34, no.~5, pp. 1341--1343, Sept 1988.

\bibitem{GVD70}
V.~Goppa, ``A new class of linear error correcting codes,'' \emph{Problems of
  Information Transmission}, vol.~6, no.~3, pp. 207--212, 1970.

\bibitem{GVD71}
V.~D. Goppa, ``Rational representation of codes and {(L,g)}-codes,''
  \emph{Problems of Information Transmission}, vol.~7, no.~3, pp. 223--229,
  1971.

\bibitem{Ber73}
E.~Berlekamp, ``{Goppa} codes,'' \emph{IEEE Transactions on Information
  Theory}, vol.~19, no.~5, pp. 590--592, September 1973.

\bibitem{BCH1959H}
A.~Hocquenghem, ``Codes correcteurs d’erreurs,'' \emph{Chiffres}, vol.~2,
  no.~2, pp. 147--56, 1959.

\bibitem{BCH1960BC}
R.~Bose and D.~Ray-Chaudhuri, ``On a class of error correcting binary group
  codes,'' \emph{Information and Control}, vol.~3, no.~1, pp. 68 -- 79, 1960.

\bibitem{WildBern2011}
D.~J. Bernstein, T.~Lange, and C.~Peters, ``Wild {McEliece},'' in
  \emph{Selected Areas in Cryptography}.\hskip 1em plus 0.5em minus 0.4em\relax
  Springer Berlin Heidelberg, 2011, pp. 143--158.

\bibitem{Delsep1975}
P.~Delsarte, ``On subfield subcodes of modified {{Reed}}-{{Solomon}} codes
  ({Corresp}.),'' \emph{IEEE Transactions on Information Theory}, vol.~21,
  no.~5, pp. 575--576, Sep. 1975.

\bibitem{DecRS1960Peterson}
W.~Peterson, ``Encoding and error-correction procedures for the
  {Bose}-{Chaudhuri} codes,'' \emph{IRE Transactions on Information Theory},
  vol.~6, no.~4, pp. 459--470, September 1960.

\bibitem{Forney1965}
G.~{Forney}, ``On decoding {BCH} codes,'' \emph{IEEE Transactions on
  Information Theory}, vol.~11, no.~4, pp. 549--557, 1965.

\bibitem{roth2006}
R.~M. Roth, \emph{Introduction to Coding Theory}.\hskip 1em plus 0.5em minus
  0.4em\relax Cambridge University Press, 2006.

\bibitem{SidSchmidtPPI}
V.~Sidorenko and G.~Schmidt, ``A linear algebraic approach to multisequence
  shift-register synthesis,'' \emph{Problems of Information Transmission},
  vol.~47, pp. 149--165, 06 2011.

\bibitem{RosenkildeStorjohann21}
J.~Rosenkilde and A.~Storjohann, ``Algorithms for simultaneous
  {Hermite}–{Pad\'{e}} approximations,'' \emph{Journal of Symbolic
  Computation}, vol. 102, pp. 279 -- 303, 2021.

\bibitem{MatMul1990}
D.~Coppersmith and S.~Winograd, ``Matrix multiplication via arithmetic
  progressions,'' \emph{Journal of Symbolic Computation}, vol.~9, no.~3, p.
  251–280, Mar. 1990.

\bibitem{FastMatMul2014}
F.~Le~Gall, ``Powers of tensors and fast matrix multiplication,'' in
  \emph{Proceedings of the 39th International Symposium on Symbolic and
  Algebraic Computation}, ser. ISSAC '14.\hskip 1em plus 0.5em minus
  0.4em\relax New York, NY, USA: Association for Computing Machinery, 2014, p.
  296–303.

\bibitem{PetersonWeldon1972}
W.~W. Peterson and E.~J. Weldon, \emph{Error-correcting codes}, 2nd~ed.\hskip
  1em plus 0.5em minus 0.4em\relax The MIT Press, 1972.

\bibitem{kadelburg2005inequalities}
Z.~Kadelburg, D.~Dukic, M.~Lukic, and I.~Matic, ``Inequalities of {Karamata},
  {Schur} and {Muirhead}, and some applications,'' \emph{The Teaching of
  Mathematics}, vol.~8, no.~1, pp. 31--45, 2005.

\bibitem{YuLoeliger16}
J.-H. Yu and H.-A. Loeliger, ``Simultaneous partial inverses and decoding
  interleaved {Reed}--{Solomon} codes,'' \emph{IEEE Transactions on Information
  Theory}, vol.~64, no.~12, pp. 7511--7528, 2018.

\bibitem{Landsberg1893}
G.~Landsberg, ``{\"U}ber eine {Anzahlbestimmung} und eine damit
  zusammenh\"angende {Reihe}.'' \emph{Journal f\"ur die reine und angewandte
  Mathematik}, vol. 111, pp. 87--88, 1893.

\bibitem{FisherAlex1966}
S.~D. Fisher and M.~N. Alexander, ``Matrices over a finite field,'' \emph{The
  American Mathematical Monthly}, vol.~73, no.~6, pp. 639--641, 1966.

\bibitem{bassalygo1965new}
L.~A. Bassalygo, ``New upper bounds for error correcting codes,''
  \emph{Problemy Peredachi Informatsii}, vol.~1, no.~4, pp. 41--44, 1965.

\bibitem{johnson1962new}
S.~Johnson, ``A new upper bound for error-correcting codes,'' \emph{IRE
  Transactions on Information Theory}, vol.~8, no.~3, pp. 203--207, 1962.

\bibitem{List2011Augot}
D.~Augot, M.~Barbier, and A.~Couvreur, ``List-decoding of binary {Goppa} codes
  up to the binary {Johnson} bound,'' in \emph{2011 IEEE ITW}, Oct 2011, pp.
  229--233.

\bibitem{ListBiGC2013}
P.~Beelen, T.~Høholdt, J.~S.~R. Nielsen, and Y.~Wu, ``On rational
  interpolation-based list-decoding and list-decoding binary {Goppa} codes,''
  \emph{IEEE Transactions on Information Theory}, vol.~59, no.~6, pp.
  3269--3281, 2013.

\end{thebibliography}

\appendices

\section{Upper Bound on the Miscorrection Probability}

\label{app:miscBound}

We extend the upper bound on the miscorrection probability $\Pmisc$ for interleaved RS codes from~\cite{schmidt2009collaborative} to a bound for interleaved alternant codes. Their strategy applies for any decoder that possess the following property.
\begin{definition}[ML certificate property, {\cite[Definition~3]{schmidt2009collaborative}}]\label{def:MLcertificate}
   Consider a code $\code$ and a received word $\R \coloneqq \C + \widetilde{\E}$ with $\C \in \code$. A decoder of $\code$ is said to have the \emph{ML certificate property} if it always either returns $\hat \C = \arg\min_{\hat \C \in \code} \dcol(\hat \C, \R)$, where $\dcol$ denotes the number of non-zero columns in $\R-\C$, or declares a {\normalfont \texttt{decoding failure}}.
\end{definition}

It was shown in \cite[Theorem~5]{schmidt2009collaborative} that the decoder of \cite{schmidt2009collaborative} for interleaved RS codes has the ML certificate property. As this is a property of the decoder, it clearly also holds when the decoder is applied to any subcode of the interleaved RS codes, in particular for interleaved alternant codes.
However, the bound given in~\cite[Theorem~6]{schmidt2009collaborative} depends on the weight enumerators of the considered code, which are unknown for (interleaved) alternant codes. To circumvent this issue, we slightly generalize~\cite[Theorem~6]{schmidt2009collaborative} by employing general upper bounds on the weight enumerators, thereby making it independent of the specific linear code used.

\begin{theorem}[{\cite{bassalygo1965new,johnson1962new}} (cf.~{\cite[Ch. 17]{macwilliams1977theory}})]\label{thm:JohnsonBound}
  For any code $\code$ of length $n$ and distance $d$ over $\Fq$ it holds that
  \begin{align*}
    A^{\code}_w \leq \frac{\theta_q d n}{w^2-\theta_q n(2w-d)}
  \end{align*}
  with $\theta_q = 1-\frac{1}{q}$, provided the denominator is positive.
\end{theorem}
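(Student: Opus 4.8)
The plan is to establish this classical Johnson-type bound by a double-counting argument on the set of weight-$w$ codewords, combined with two applications of convexity. Write $N \coloneqq A_w^{\code}$ and let $\c_1, \dots, \c_N$ denote the codewords of $\code$ of weight exactly $w$. For each coordinate $k \in [n]$ and each field element $a \in \Fq$, I would introduce the count $n_{k,a} \coloneqq |\{i \in [N] : (\c_i)_k = a\}|$. Two linear constraints then hold: $\sum_{a \in \Fq} n_{k,a} = N$ for every $k$, and $\sum_{k=1}^n n_{k,0} = N(n-w)$, since each weight-$w$ codeword contributes exactly $n-w$ zeros.

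First I would rewrite the total pairwise distance by counting, for each coordinate, the pairs of codewords that \emph{disagree} there, which gives
\begin{equation*}
\sum_{1 \le i < j \le N} d(\c_i, \c_j) = n\binom{N}{2} - \sum_{k=1}^n \sum_{a \in \Fq} \binom{n_{k,a}}{2}.
\end{equation*}
Since any two distinct codewords are at distance at least $d$, the left-hand side is at least $d\binom{N}{2}$. The crux is therefore to \emph{lower} bound $\sum_{k,a}\binom{n_{k,a}}{2}$, as this upper bounds the right-hand side and hence forces $N$ to be small.

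The lower bound comes from two successive uses of Jensen's inequality applied to the convex function $x \mapsto \binom{x}{2} = \tfrac12 x(x-1)$, extended to real arguments. For a fixed coordinate $k$, I would first distribute the $N - n_{k,0}$ nonzero occurrences as evenly as possible over the $q-1$ nonzero symbols, yielding $\sum_{a \ne 0}\binom{n_{k,a}}{2} \ge (q-1)\binom{(N-n_{k,0})/(q-1)}{2}$; adding the $a=0$ term gives a lower bound depending only on $n_{k,0}$. The resulting per-coordinate expression is convex in $n_{k,0}$, so a second application of Jensen across the $n$ coordinates—using $\sum_k n_{k,0} = N(n-w)$, i.e.\ average zero-count $\bar n_0 = N(n-w)/n$ and average nonzero-count $Nw/n$—bounds the whole double sum below by $\tfrac12\big[\tfrac{N^2}{n}\big((n-w)^2 + \tfrac{w^2}{q-1}\big) - Nn\big]$.

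Substituting this into $d\binom{N}{2} \le n\binom{N}{2} - \sum_{k,a}\binom{n_{k,a}}{2}$ and simplifying (dividing through by $N$ and collecting terms) reduces the inequality to
\begin{equation*}
N\Big[nd - 2nw + \tfrac{q}{q-1}\,w^2\Big] \le dn,
\end{equation*}
and since the bracketed factor equals $\tfrac{1}{\theta_q}\big(w^2 - \theta_q n(2w-d)\big)$, solving for $N$ and clearing the denominator by $\theta_q$ gives exactly $A_w^{\code} \le \theta_q dn/(w^2 - \theta_q n(2w-d))$. The hypothesis that the denominator is positive is precisely what makes the bracketed factor positive, so that the final division preserves the inequality. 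The main obstacle I anticipate is bookkeeping in the convexity step: one must treat the zero symbol separately from the $q-1$ nonzero symbols, since only the zeros are constrained by the weight condition, and carefully check that both Jensen applications point in the direction that \emph{lowers} $\sum_{k,a}\binom{n_{k,a}}{2}$. The non-integrality of the equalizing values is harmless because $\binom{x}{2}$ is convex on all of $\mathbb{R}$.
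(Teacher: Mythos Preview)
Your proposal is a correct rendition of the classical Plotkin--Johnson double-counting argument; the two Jensen applications are set up in the right direction and the algebra checks out. Note, however, that the paper does not supply its own proof of this statement: \cref{thm:JohnsonBound} is simply quoted from \cite{bassalygo1965new,johnson1962new} and \cite[Ch.~17]{macwilliams1977theory}, so there is nothing to compare against beyond observing that your argument is the standard one found in those references.
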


This bound only applies if the denominator is positive, but we can also find a statement if this is not the case.

\begin{theorem}[{\cite[Theorem~4, Chapter~17]{macwilliams1977theory}}]\label{thm:weightRelation}
  For any code $\code$ of length $n$ and distance $d$ it holds that
  \begin{align*}
    A_w^{\code} \leq \frac{n}{w} \hat A^{[n-1,d]}_{w-1} \ ,
  \end{align*}
  where $\hat A^{[n-1,d]}_{w-1}$ is an upper bound on the $(w-1)$-th weight enumerator of an arbitrary code of length $n-1$ and distance $d$.
\end{theorem}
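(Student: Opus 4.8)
The plan is to derive the inequality from a double count of the weight-$w$ codewords against the $n$ coordinate positions, combined with a single puncturing step that trades one unit of length for one unit of weight. For each position $j \in [n]$ let $B_j$ denote the number of codewords $\c \in \code$ with $\wt(\c) = w$ and $c_j \neq 0$. Counting the incidence set $\{(\c, j) : \c \in \code,\ \wt(\c)=w,\ j \in \supp(\c)\}$ first by codewords and then by positions gives the identity
\[
  w \cdot A_w^{\code} = \sum_{j=1}^{n} B_j \ ,
\]
because each of the $A_w^{\code}$ codewords of weight $w$ has exactly $w$ nonzero coordinates and hence accounts for exactly $w$ incidences.

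The heart of the argument is the bound $B_j \leq \hat A^{[n-1,d]}_{w-1}$ for every $j$. Fixing $j$, I would delete coordinate $j$ from each weight-$w$ codeword that is nonzero there; every resulting word lives in $\Fq^{n-1}$ and has weight exactly $w-1$. If two such codewords take the \emph{same} value at position $j$, their difference vanishes in that coordinate, so removing it leaves the weight of the difference unchanged and therefore still at least $d$; in the binary setting of \cite{macwilliams1977theory} every nonzero entry equals $1$, so this holds for all pairs simultaneously. The punctured words are then pairwise distinct and constitute a constant-weight code of length $n-1$, weight $w-1$, and minimum distance at least $d$, so their number---being the count of weight-$(w-1)$ words in a length-$(n-1)$ distance-$d$ code---is at most $\hat A^{[n-1,d]}_{w-1}$.

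Substituting this into the counting identity yields $w A_w^{\code} \leq n \hat A^{[n-1,d]}_{w-1}$, and dividing by $w$ gives the statement. I expect the distance bookkeeping in the puncturing step to be the only genuinely delicate point: over a non-binary alphabet, two weight-$w$ codewords that are nonzero at $j$ but carry \emph{different} values there can see their mutual distance fall by one under deletion, a priori guaranteeing only distance $d-1$. The clean factor $\tfrac{n}{w}$ is thus exactly the binary phenomenon underlying the cited result; recovering it for $q > 2$ requires partitioning the codewords through each position by their common nonzero value before puncturing, which is the refinement I would track to keep the bound precise.
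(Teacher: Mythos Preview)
Your double-counting argument is exactly the approach the paper sketches: count incidences $(\c,j)$ with $c_j\neq 0$ to obtain $wA_w^{\code}=\sum_j B_j$, then bound each $B_j$ by puncturing at position $j$. For $q=2$ your proof is complete and coincides with the cited result.

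You are also right to isolate the puncturing step as the delicate point for $q>2$: deleting coordinate $j$ preserves pairwise distance only among codewords that share the \emph{same} nonzero symbol there. Your proposed remedy---partitioning by that symbol before puncturing---is the correct move, but be aware that it does not recover the bare factor $\tfrac{n}{w}$: one obtains $B_j=\sum_{\alpha\in\Fq^{\star}}B_{j,\alpha}\leq (q-1)\,\hat A^{[n-1,d]}_{w-1}$ and hence only $A_w^{\code}\leq\tfrac{n(q-1)}{w}\hat A^{[n-1,d]}_{w-1}$, which is the standard $q$-ary Johnson recursion. The inequality with $\tfrac{n}{w}$ alone in fact fails for $q>2$: for the single-parity-check code $\code=\{\c\in\F_3^3:c_1+c_2+c_3=0\}$ one has $n=3$, $d=2$, $A_2^{\code}=6$, whereas $\hat A^{[2,2]}_1=2$ and $\tfrac{3}{2}\cdot 2=3<6$. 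So your caution is well placed; the paper's one-line extension ``for any $q$'' glosses over precisely the obstruction you flag, and the extra factor $q-1$ is genuinely needed in the non-binary case.
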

\begin{proof}
  Note that \cite[Theorem~4, Chapter~17]{macwilliams1977theory} only considers binary codes. However, it is easy to see that it holds for any $q$ by applying the same double counting argument for the number of non-zero positions, instead of the number of ones.
\end{proof}

Finally, we replace the explicit dependence on the weight enumerators in \cite[Theorem~6]{schmidt2009collaborative} by the generic (code independent) bounds of \cref{thm:JohnsonBound,thm:weightRelation}, to obtain an upper bound on the probability of a miscorrection that is valid for any linear code and decoder that exhibits the ML certificate property.

\begin{theorem}[Miscorrection Probability, cf. {\cite[Theorem~6]{schmidt2009collaborative}}]\label{thm:miscorrectionBound}
  Let $\code$ be a linear code of length $n$ and minimum distance $d$ over $\F_{Q}$ decoded with a decoder that exhibits the \emph{ML certificate property} as in \cref{def:MLcertificate}. Assume that the decoding radius of this decoder is $t_{\max}$ and that it decodes a codeword that is corrupted by $t$ errors. Then, the probability of a miscorrection is
  \begin{align*}
    \Pmisc(\code,t) \leq \frac{\sum_{w=d}^{t+\tmax} \hat A_w^{[n,d]} \sum_{\rho=0}^{\min\{t,\tmax\}} U(Q,t,w,\rho)}{\binom{n}{t} (Q-1)^t} \ ,
  \end{align*}
  with
  \begin{align*}
    \hat A^{[n,d]}_w =
    \begin{cases}
      \floor{\frac{\theta_Q d n}{w^2-\theta_Q n(2w-d)}} , & \text{if } w^2>\theta_Q n(2w-d) \\
      \hat A^{[n-1,d]}_{w-1}, & \text{else}
    \end{cases}\ ,
  \end{align*}
  with $\theta_Q = 1-\frac{1}{Q}$, and
  \begin{align}
    U(Q,t,w,\rho)=&\!\sum\limits^{t+w-\rho}_{i=\ceil{\frac{t+w-\rho}{2}}}\!\binom{w}{i}\binom{i}{\rho-(t+w)+2i}\binom{n-w}{t-i} \nonumber\\
     &\qquad \  \cdot(Q-2)^{\rho-(t+w)+2i}(Q-1)^{t-i}\ .\label{eq:Udef}
  \end{align}
  Note that $0^0\coloneqq 1$ so that~\cref{eq:Udef} is valid for the binary case of $Q=2$.
\end{theorem}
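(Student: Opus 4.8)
The plan is to follow the structure of \cite[Theorem~6]{schmidt2009collaborative} but to defer the dependence on the actual weight enumerators $A_w^\code$ until the very end, where the code-independent bounds of \cref{thm:JohnsonBound,thm:weightRelation} are substituted. First I would use the \emph{ML certificate property} to translate the event of a miscorrection into a combinatorial condition on the error. Since the decoder outputs the column-distance-closest codeword or declares a failure, a miscorrection ($\hat\C \neq \C$) can occur only if there is a codeword $\hat\C = \C + \c$ with $\c \in \code \setminus \{\0\}$ such that $\dcol(\hat\C, \R) \leq \dcol(\C, \R) = t$ and, since the decoding radius is $\tmax$, also $\dcol(\hat\C, \R) \leq \tmax$. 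Writing $\rho \coloneqq \dcol(\R - \hat\C) = \dcol(\E - \c)$, this forces $\rho \leq \min\{t, \tmax\}$. The triangle inequality $\wt(\c) \leq \dcol(\E) + \dcol(\E - \c) = t + \rho \leq t + \tmax$ together with the minimum distance then confines the weight $w \coloneqq \wt(\c)$ of the difference codeword to the range $d \leq w \leq t + \tmax$, which is exactly the range of the outer sum.

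Next I would apply a union bound over all candidate difference codewords $\c$ and over all admissible values of $\rho \in [0,\min\{t,\tmax\}]$. For a fixed $\c$ of weight $w$ and fixed $\rho$, the core task is to count the weight-$t$ error vectors $\E \in \F_Q^n$ with $\dcol(\E - \c) = \rho$; this count equals $U(Q,t,w,\rho)$. I would derive it by partitioning the $n$ coordinates into the $w$ positions in $\supp(\c)$ and the $n-w$ positions outside. Letting $i$ be the number of nonzero entries of $\E$ inside $\supp(\c)$, the factor $\binom{w}{i}$ chooses these positions (the remaining $w-i$ positions of $\supp(\c)$, where $\E = \0$, each contribute a nonzero $-c_j$ to $\E - \c$), while $\binom{n-w}{t-i}$ chooses the $t-i$ nonzero entries of $\E$ outside $\supp(\c)$, each free among the $Q-1$ nonzero values.

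Among the $i$ nonzero entries inside $\supp(\c)$, those equal to the corresponding entry of $\c$ cancel in $\E - \c$, whereas the $e \coloneqq \rho-(t+w)+2i$ entries that are nonzero yet differ from $\c$ (each admitting $Q-2$ values) survive; this yields the factor $\binom{i}{\rho-(t+w)+2i}(Q-2)^{\rho-(t+w)+2i}$. Counting the nonzero coordinates of $\E - \c$ as $(t-i)+(w-i)+e = \rho$ recovers the formula for $e$, and the feasibility constraints $0 \leq e \leq i$ pin down the summation bounds $\lceil (t+w-\rho)/2 \rceil \leq i \leq t+w-\rho$. Summing $U(Q,t,w,\rho)$ over the $A_w^\code$ codewords of each weight $w$ and over $\rho$, and dividing by the total number $\binom{n}{t}(Q-1)^t$ of weight-$t$ errors, reproduces the bound of \cite[Theorem~6]{schmidt2009collaborative} with $A_w^\code$ in place of $\hat A_w^{[n,d]}$.

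The final and only genuinely new step is to replace $A_w^\code$ by the generic upper bound $\hat A_w^{[n,d]}$: because every summand $U(Q,t,w,\rho)$ is nonnegative, any upper bound on $A_w^\code$ preserves the inequality. I would invoke \cref{thm:JohnsonBound} in the regime where its denominator is positive, i.e.\ $w^2 > \theta_q n(2w-d)$, and fall back on the recursive relation of \cref{thm:weightRelation} otherwise; this is precisely the piecewise definition of $\hat A_w^{[n,d]}$, and it is what removes the dependence on the unknown weight distribution of interleaved alternant codes. I expect the main obstacle to be the careful bookkeeping in the derivation of $U(Q,t,w,\rho)$, in particular correctly separating the three coordinate types inside $\supp(\c)$ (agreement with $\c$, a zero entry of $\E$, and generic disagreement) and checking that the index bounds on $i$ exactly encode the feasibility constraints $0 \leq \rho-(t+w)+2i \leq i$. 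By contrast, the substitution of the code-independent weight-enumerator bounds is a routine monotonicity argument once $U$ is established.
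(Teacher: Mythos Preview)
Your proposal is correct and follows essentially the same approach as the paper, whose proof simply cites \cite[Theorem~6]{schmidt2009collaborative} and observes that, since the bound is monotone increasing in each $A_w^\code$, replacing the true weight enumerators by the generic upper bounds $\hat A_w^{[n,d]}$ from \cref{thm:JohnsonBound,thm:weightRelation} preserves the inequality. You additionally unpack the derivation of the Schmidt et al.\ bound itself (the union bound over difference codewords and the combinatorial count yielding $U(Q,t,w,\rho)$), which the paper treats as a black box, but the only new step---the monotonicity substitution---is identical in both.
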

\begin{proof}
 Trivially, the bound of \cite[Theorem~6]{schmidt2009collaborative} is increasing in the weight enumerator $A_w$, so replacing them with the upper bound $\hat A^{[n,d]}_w$ obtained from \cref{thm:JohnsonBound,thm:weightRelation} results in a valid upper bound on $\Pmisc$.
\end{proof}

The bound of \cref{thm:miscorrectionBound} is valid for any linear code. For completeness, we explicitly relate its parameters to those of interleaved alternant codes.

\begin{corollary}[Miscorrection Probability of Interleaved Alternant Codes]\label{col:miscorrectionAlternant}
  Let $\mathcal{IC}^{(\ell)}$ be an $\ell$-interleaved alternant code with $\code \in \mathbb{A}^d_{\va}$ and $\cE=\{j_1,j_2,\dots,j_t\}\subset [n]$ be a set of $|\cE|=t$ error positions, where $n\coloneqq |\va|$. For an error matrix $\widetilde{\E} \in \Fq^{\ell \times n}$ with $\supp(\widetilde{\E}) \coloneqq \cE$ and $\widetilde{\E}|_{\cE} \sim \EB{q}{\ell}{t}$ and any decoder with the ML certificate property, the probability of a \emph{miscorrection} for correcting $t\leq \tmax$ errors is upper bounded by \cref{thm:miscorrectionBound} with $Q=q^\ell$.
\end{corollary}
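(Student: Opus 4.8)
The plan is to reinterpret the $\ell$-interleaved code over $\Fq$ as an ordinary, non-interleaved linear code over the extension field $\Lfield\coloneqq\F_{q^\ell}$, and then to invoke \cref{thm:miscorrectionBound} with $Q=q^\ell$. Concretely, I would fix an $\Fq$-basis $\beta_1,\dots,\beta_\ell$ of $\Lfield$ and bundle the columns of an $\ell\times n$ matrix $\M\in\Fq^{\ell\times n}$ into symbols of $\Lfield$ through the map $\phi(\M)\in\Lfield^n$ with $\phi(\M)_j=\sum_{i=1}^{\ell}M_{i,j}\beta_i$. Since a column of $\M$ vanishes exactly when the corresponding symbol $\phi(\M)_j$ vanishes, the column metric $\dcol$ on $\Fq^{\ell\times n}$ is carried isometrically onto the Hamming metric on $\Lfield^n$, and $\EB{q}{\ell}{t}$ is carried onto the set of $\Lfield$-vectors of Hamming weight $t$ supported on $\cE$.

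Next I would check that $\phi(\cI\code^{(\ell)})$ meets the hypotheses of \cref{thm:miscorrectionBound} over $\F_Q$ with $Q=q^\ell$. For \emph{linearity}, note that multiplication by a scalar $\gamma\in\Lfield$ is an $\Fq$-linear endomorphism of $\Lfield\cong\Fq^\ell$, represented in the basis $\{\beta_i\}$ by a matrix $\M_\gamma\in\Fq^{\ell\times\ell}$; hence $\gamma\cdot\phi(\M)$ corresponds to $\M_\gamma\M$, whose rows are $\Fq$-linear combinations of the rows of $\M$ and therefore remain codewords of the $\Fq$-linear code $\code$, so that $\M_\gamma\M\in\cI\code^{(\ell)}$. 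Thus $\phi(\cI\code^{(\ell)})$ is $\Lfield$-linear of \emph{length} $n$. For the \emph{minimum distance}, any nonzero interleaved codeword has at least one nonzero row lying in $\code$, hence at least $d$ nonzero columns, so the Hamming distance of $\phi(\cI\code^{(\ell)})$ over $\Lfield$ is at least $d$ and the generic weight-enumerator bound $\hat A_w^{[n,d]}$ evaluated over $\F_Q$ applies. Finally, the considered decoder is assumed to possess the \emph{ML certificate property} of \cref{def:MLcertificate}, exactly as required.

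With these identifications in place, the error model matches directly: $\widetilde{\E}|_{\cE}\sim\EB{q}{\ell}{t}$ places in each of the $t$ positions of $\cE$ a uniformly random nonzero vector of $\Fq^\ell$, which under $\phi$ is a uniformly random nonzero symbol of $\Lfield^\star$, precisely the weight-$t$ error model underlying \cref{thm:miscorrectionBound} with $Q=q^\ell$. Since the bound of \cref{thm:miscorrectionBound} depends on the code only through the code-independent quantities $n$, $d$, $\tmax$, and $\hat A_w^{[n,d]}$, substituting $Q=q^\ell$ gives the stated inequality.

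The main obstacle, such as it is, lies in this last matching step rather than in any deep computation: one must verify that the restriction to a \emph{fixed} support $\cE$ is compatible with \cref{thm:miscorrectionBound}, whose denominator $\binom{n}{t}(Q-1)^t$ averages over all weight-$t$ supports. This is harmless precisely because the invoked bound uses only the generic enumerator bound $\hat A_w^{[n,d]}$ and is therefore independent of which positions are corrupted; consequently the per-support miscorrection probability obeys the same bound. The $\Lfield$-linearity and distance identities, although indispensable for the reinterpretation, are routine once the column-bundling isomorphism $\phi$ has been set up.
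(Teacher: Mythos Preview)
Your approach is exactly the paper's: fix an $\Fq$-basis of $\F_{q^\ell}$, regard $\cI\code^{(\ell)}$ as a scalar linear code over $\F_{q^\ell}$ of length $n$ and minimum distance $d$, and apply \cref{thm:miscorrectionBound} with $Q=q^\ell$. You are considerably more thorough---the paper's proof is two sentences and does not spell out the $\Lfield$-linearity check or the fixed-support subtlety you flag in your final paragraph.
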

\begin{proof}
  Fix a basis of $\F_{q^\ell}$ over $\F_q$ and regard the code $\mathcal{IC}^{(\ell)}$ as a scalar code over $\F_{q^\ell}$. Clearly, for the minimum distance $\hat d$ of the scalar code it holds that $\hat d = d$ and as~\cref{thm:miscorrectionBound} holds for any linear scalar code, the statement follows.
\end{proof}

We expect \cref{col:miscorrectionAlternant} to be a rather rough upper bound, as it is independent of both, the specific alternant code and its dimension. Nevertheless, it is sufficient for our purpose of showing that the probability of unsuccessful decoding of interleaved alternant codes is dominated by the failure probability, as evident from the numerical results in \cref{sec:numericalResults}.

\section{Interleaved Alternant Codes vs. Interleaved RS Codes}

It is a natural question how interleaved alternant codes compare to interleaved GRS codes of the same cardinality and overall field size. Unfortunately, for the interleaved decoding radius considered in this work, an improvement is only possible in one specific parameter setting, even when considering the improvements in distance or dimension provided by specific class of alternant codes, such as BCH and Goppa codes (see \cref{rem:BCHgoppaCodes}).
\begin{lemma}
Let $m\geq 2$ and consider a $q$-ary alternant code $\cA \in \ALTallp$ of length $n \coloneqq |\va|$, dimension $k_\cA \coloneqq n-(d-1)m$, and distance\footnote{Recall that $d$ is only the designed distance of the alternant code. Specific subclasses of alternant codes are known to have larger distance, see \cref{rem:BCHgoppaCodes}.} $d_\cA\coloneqq \frac{q}{q-1}(d-1)+1$. Let $\code$ be a $q^m$-ary GRS code\footnote{Note that this is \emph{not} the GRS code corresponding to the alternant code $\cA$.} of length $n$ and dimension $k'=k_\cA$. Then, for any $\ell\geq 1$ with $m\mid \ell$ and $\ell'\coloneqq \frac{\ell}{m}$, we have $\dim_{\Fq}(\cI\cA^{(\ell)}) = \dim_{\Fq}(\cI\code^{(\ell')})$ and the decoding radius $\tmax$ of $\cI\cA^{(\ell)}$ exceeds the radius $\tmax'$ of $\cI\code^{(\ell')}$ if and only if $q=m=2$.
\end{lemma}
\begin{IEEEproof}
  The dimensions of $\cI\cA^{(\ell)}$ and $\cI\code^{(\ell')}$ over $\F_q$ follow directly from the definition of $k'$ and $\ell'$, as
  \begin{align*}
\dim_{\Fq}(\cI\cA^{(\ell)}) = k_\cA \ell = k' m \ell' = \dim_{\Fq}(\cI\code^{(\ell')}) \ .
\end{align*}
  The distance of $\cI\code^{(\ell')}$ is $d' = n- k' +1 = (d-1)m+1$. By \cref{eq:tmaxGRS} we have $\tmax > \tmax'$ if and only if
  \begin{align*}
    \frac{\ell}{\ell+1} \Big( \frac{q}{q-1} (d-1) \Big) &> \frac{\ell'}{\ell'+1} \big( (d-1)m \big) \\
    \frac{\ell}{\ell+1} \Big( \frac{q}{q-1} (d-1) \Big) &> \frac{\frac{\ell}{m}}{\frac{\ell}{m}+1} \big( (d-1)m \big) \\
    \frac{1}{\ell+1}  \frac{q}{q-1}   &> \frac{1}{\frac{\ell}{m}+1}  \\
    (\ell+1)  \frac{q-1}{q}   &< \frac{\ell}{m}+1  \\
    \ell \left( \frac{q-1}{q}-\frac{1}{m}\right)   &< 1-\frac{q-1}{q}  \\
    \ell \left( \frac{(m-1)q-m}{qm}\right)   &< \frac{1}{q}  \ .
  \end{align*}
  Now, if $m=q=2$, the left hand side is $0$ and the inequality is fulfilled for any $\ell$. On the other hand, if $m>2$ or $q>2$ we have $(m-1)q-m > 0$ and therefore
  \begin{align*}
    \ell    &< \frac{m}{(m-1)q-m} \leq
              \begin{cases}
                 \frac{m}{m-2}, & \text{\rm if } q=2, \\
                 \frac{m}{2m-3}, & \text{\rm if } q\geq 3 .
              \end{cases}
  \end{align*}
  It is easy to check that for $q=2$ this inequality is fulfilled for $\ell \leq 2$ and $m=3$ or $\ell=1$ and any $m\geq 3$. For $q\geq 3$, the condition is only fulfilled for $\ell = 1$ and $m=2$. However, by definition we have $m \mid \ell$ and this contradiction concludes the proof.
\end{IEEEproof}

While this result shows that interleaved alternant codes generally do not have a larger error correction capability than interleaved RS codes, note that alternant codes have other inherent advantages, as discussed in \cref{sec:introduction}.

\section{Comparison to the $q$-ary Johnson Radius}

It has been shown that any $[n,k,d]_q$ code can be list-decoded up to the $q$-ary Johnson radius \cite{johnson1962new,bassalygo1965new}, i.e., any number of errors $t_\sfJ < \tau_\sfJ$ with
\begin{align*}
  \tau_\sfJ \coloneqq \theta_q n \left( 1-\sqrt{1-\frac{d}{n\theta_q}} \right) \ ,
\end{align*}
where $\theta_q = 1-\frac{1}{q}$, induces a maximal list size that grows polynomially in the code length. For some alternant codes there exist efficient algorithms~\cite{List2011Augot,ListBiGC2013} that allow for decoding up to the binary ($q=2$) Johnson radius. This motivates a comparison between the $q$-ary Johnson radius and the maximal interleaved decoding radius\footnote{Recall that while we are not able to give a theoretical guarantee that the interleaved decoding algorithm succeeds for $\tmax$ errors when applied to interleaved alternant codes, simulation results consistently imply this threshold.} $\tmax$, as given in \cref{eq:tmaxGRS}. First, observe that the radii are upper bounded by $t_\sfJ \leq d-1$ and $\tmax \leq d-2$, respectively. However, for the $q$-ary Johnson radius this maximal value is only achieved for codes where $d$ is close to $\theta_q n$, i.e., that are of very low rate. In contrast, for interleaved alternant codes the proximity of the decoding radius to this upper bound only depends on the interleaving order, which is therefore achievable for codes of any rate. In general, we have $\tmax > t_\sfJ$ for any interleaving order
\begin{align*}
  \ell > \frac{t_\sfJ}{d-1-t_\sfJ} \ ,
\end{align*}
which implies that for any $t_\sfJ < d-2$ there exists an $\ell<d$ such that the decoding radius of $\ell$-interleaved alternant codes is larger than the $q$-ary Johnson radius.
Note that while it is uncommon to compare codes of different overall field size, this comparison is suitable for applications such as code-based cryptography \cite{elleuch2018interleaved,holzbaur2019decoding}, where the limiting factor is the size of the generator matrix.

\end{document}